\definecolor{backgroundblue}{rgb}{0.6, 0.8, 1}
\definecolor{backgroundorange}{rgb}{1, 0.752, 0}
\let\proof\relax
\let\endproof\relax
\newtheorem{theorem}{Theorem}
\newtheorem{lemma}{Lemma}
\newtheorem{definition}{Definition}
\newtheorem{corollary}{Corollary}
\newtheorem{remark}{Remark}
  \newenvironment{proofThm}{%
  \proof}{\endproof}
\crefname{figure}{Fig.}{Figures} 
\crefname{line}{line}{lines} 
\crefname{claim}{Claim}{Claims} 
\crefname{equation}{}{} 
\crefname{problem}{Problem}{Problems}
\crefname{assumption}{Assumption}{Assumptions}
\colorlet{shadecolor}{yellow}
\renewcommand{\baselinestretch}{1}
\def\BibTeX{{\rm B\kern-.05em{\sc i\kern-.025em b}\kern-.08em
   T\kern-.1667em\lower.7ex\hbox{E}\kern-.125emX}}
\begin{document}
\title{Decentralized Parametric Stability Certificates for Grid-Forming Converter Control}
\author{Verena~Häberle,~\IEEEmembership{Graduate Student Member,~IEEE,}
        Xiuqiang~He,~\IEEEmembership{Member,~IEEE,}
        Linbin~Huang,~\IEEEmembership{Member,~IEEE,}
        Florian Dörfler,~\IEEEmembership{Senior Member,~IEEE,}
        and~Steven Low,~\IEEEmembership{Fellow,~IEEE}
\thanks{This work was supported by the EU Horizon 2023 research and innovation program (Grant Agreement Number 101096197). Verena Häberle and Florian Dörfler are with the Automatic Control Laboratory, ETH Zurich, 8092 Zurich, Switzerland. Email: \{verenhae,dorfler\}@ethz.ch. Xiuqiang He is with the Department of Automation, Tsinghua University, Beijing 100084, China. Email: hxq19@tsinghua.org.cn. Linbin Huang is with the College of Electrical Engineering, Zhejiang University, Hangzhou 310027, China. Email: hlinbin@zju.edu.cn. Steven Low is with the Department of Computing and Mathematical Sciences and the Department of Electrical Engineering, Caltech, Pasadena, CA 91125 USA. Email: slow@caltech.edu.}
}

\maketitle

\begin{abstract}
We propose a decentralized framework to analytically guarantee the small-signal stability of future power systems with grid-forming converters. Our approach leverages dynamic loop-shifting techniques to compensate for the lack of passivity in the network dynamics and establishes decentralized parametric stability certificates, depending on the local device-level controls and incorporating the effects of the network. By following practical tuning rules, we are able to ensure plug-and-play operation without centralized coordination. Unlike prior works, our approach accommodates coupled frequency and voltage dynamics, incorporates network dynamics, and does not rely on specific network configurations or operating points, offering a general and scalable solution for the integration of power-electronics-based devices into future power systems. We validate our theoretical stability results through numerical case studies in a high-fidelity simulation model.
\end{abstract}

\begin{IEEEkeywords}
power system stability, grid-forming converter, passivity, decentralized stability conditions
\end{IEEEkeywords}

\section{Introduction}
\label{sec:introduction}

\IEEEPARstart{T}{he} transition to future power systems is characterized by a substantial increase in the share of power electronics (PE)-based generation devices \cite{milano2018foundations}. This shift introduces significant changes in system dynamics, where the interactions between PE-based generation and the remainder of the power network are not fully understood yet \cite{wang2022identifying,NERC900,cheng2022real}. In particular, the faster dynamics of the control loops and filters in PE-based devices, compared to conventional generators, may induce unforeseen phenomena (e.g., overvoltages\cite{NERC900}, subsynchronous oscillations\cite{cheng2022real}), thereby posing significant challenges to system stability. Consequently, there is a pressing need for a deeper understanding of how PE-based devices interact with one another or with the grid and for the development of stability frameworks to ensure stable operation. 

One of the key aspects of future power systems involves the integration of PE-based devices with either grid-following or grid-forming control. Grid-forming (GFM) control is particularly promising as it enhances stability by establishing a well-defined ac voltage at the grid connection \cite{milano2018foundations,matevosyan2019grid}. However, it has been observed that network dynamics can significantly influence the stability of PE-dominated systems \cite{gross2019effect,markovic2021understanding}, unlike traditional synchronous generator-dominated grids, which exhibit relatively slow dynamics compared to the fast network dynamics. This creates a need for analysis methods that account for the interplay between network dynamics and the control characteristics of PE-based devices.  

In this paper, we address the destabilizing effects of network dynamics on the small-signal stability of heterogeneous interconnected GFM converters. Our approach seeks to compensate for these effects by proposing \textit{parametric} decentralized stability certificates to ensure system-wide stability. Specifically, we introduce a decentralized framework to analytically certify stability at the individual device level through \textit{local tuning rules} for each GFM controller. This enables plug-and-play operation without centralized coordination. Our approach leverages dynamic loop-shifting techniques to compensate for the lack of passivity in the network dynamics. By formulating device-level specifications that are sufficiently passive, we ensure overall system stability in a flexible and scalable way. 

Our contribution improves significantly over prior works on analytic stability certification of GFM converter systems in the small-signal regime. Unlike \cite{pates2019robust,gross2022compensating}, which are limited to single-input single-output (SISO) dynamics, our approach accommodates multiple-input multiple-output (MIMO) dynamics coupling frequency and voltage, thus providing a more comprehensive analysis of the system behavior. Additionally, we incorporate the network dynamics, extending beyond the quasi-stationary or zero-power flow approximations in \cite{pates2019robust,siahaan2024decentralized,watson2020control}. Our decentralized stability conditions enable the use of heterogeneous GFM controls, improving over the homogeneity assumptions in \cite{gross2022compensating,subotic2021lyapunov}. Moreover, unlike \cite{huang2024gain,decentralized2019vorobev}, our analytic approach does not require detailed knowledge of the network configuration. Collectively, these attributes represent a step toward small-signal stability certification methods that explicitly account for network dynamics while accommodating heterogeneous GFM controls and limited network information. While further theoretical developments remain necessary, the proposed conditions provide analytically tractable and explicitly parametric criteria that can support stability assessment and controller design, and thereby contribute to the reliable integration of power-electronics-based generation.

The paper is structured as follows. Section \ref{sec:interconnected_systems} introduces preliminary concepts of feedback stability and passivity for linear time-invariant (LTI) systems. In Section \ref{sec:power_system_model}, we describe the dynamic power system model utilized for stability analysis. Section \ref{sec:decentralized_stability_conditions} presents our main results: decentralized parametric stability certificates for GFM converters that account for network dynamics. The results are validated through numerical case studies in Section \ref{sec:numerical_case_studies}. Section \ref{sec:conclusion} concludes the paper.

\section{Foundations of Interconnected Systems}\label{sec:interconnected_systems}
\subsection{Preliminaries}\label{sec:preliminaries}
Let $\mathbb{R}$ denote the set of real numbers, $\mathbb{N}$ the set of positive integers, and $\mathbb{C}$ the set of complex numbers with imaginary unit $\mathrm{j}$. We use $I_n$ to denote the $n$-by-$n$ identity matrix (abbreviated as $I$ when the dimensions can be inferred from the context). We use $\otimes$ to denote the Kronecker product. We use $A=\mathrm{diag}(A_1,A_2,...,A_k),\, k\in\mathbb{Z}$ to denote the block-diagonal matrix with blocks $A_1,A_2,...,A_k$. The Euclidean norm of a matrix $A$ is defined as $||A||_F=\sqrt{\textstyle\sum_{i,j}|a_{ij}|^2}$, where $a_{ij}$ is the entry in the $i$th row and $j$th column of $A$. 

For a complex matrix $A\in\mathbb{C}^{n\times n}$, we use $A^\star$ to denote its conjugate transpose. A matrix $A\in\mathbb{C}^{n\times n}$ satisfying $A=A^\star$ is called \textit{Hermitian}. A Hermitian matrix $A\in\mathbb{C}^{n\times n}$ is said to be \textit{positive definite (semi-definite)}, denoted by $A \succ 0$ $(\succeq 0)$, if $x^\star Ax > 0$ $(\geq 0)$ for all $x\neq 0$. A matrix $A\in\mathbb{C}^{n\times n}$ is called \textit{diagonally dominant}, if $|a_{ii}|\geq \textstyle\sum_{j\neq i}|a_{ij}|,\,\forall i$.

To establish our main stability results in \cref{sec:decentralized_stability_conditions}, we require the following lemma which follows from the Gershgorin's Circle Theorem \cite[Thm. 6.1.10]{horn2012matrix}:
\begin{lemma}\label{lemma:gershgorin}
A Hermitian diagonally dominant matrix with real non-negative diagonal entries is positive semi-definite.
\end{lemma}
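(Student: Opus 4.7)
The plan is to deduce the result directly from Gershgorin's Circle Theorem, exploiting the fact that a Hermitian matrix has real eigenvalues, so its sign definiteness can be read off from the location of those eigenvalues on the real line.

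First, I would invoke Gershgorin's Circle Theorem: every eigenvalue $\lambda$ of $A\in\mathbb{C}^{n\times n}$ lies in at least one of the Gershgorin discs
\begin{equation*}
D_i = \{z\in\mathbb{C} : |z-a_{ii}|\leq R_i\},\qquad R_i := \sum_{j\neq i}|a_{ij}|.
\end{equation*}
Since $A$ is Hermitian, its spectrum is real, so $\lambda\in\mathbb{R}$ and the relevant part of each disc is the real interval $[a_{ii}-R_i,\,a_{ii}+R_i]$, which by hypothesis has real center $a_{ii}\geq 0$.

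Next, I would use the diagonal dominance assumption $|a_{ii}|\geq R_i$ together with $a_{ii}\geq 0$ to conclude $a_{ii}\geq R_i$, hence $a_{ii}-R_i\geq 0$. Combining this with the Gershgorin bound yields $\lambda\geq a_{ii}-R_i\geq 0$ for some $i$, so every eigenvalue of $A$ is non-negative. Since a Hermitian matrix with non-negative spectrum is positive semi-definite, the claim follows.

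There is essentially no obstacle here beyond carefully translating the absolute value in the diagonal-dominance condition and in the Gershgorin disc to a signed inequality on the real line; the non-negativity of the diagonal entries is precisely what makes this translation go through. The whole argument is a short direct application of the cited theorem.
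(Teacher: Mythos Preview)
Your argument is correct and is exactly the approach the paper indicates: the lemma is stated as a direct consequence of Gershgorin's Circle Theorem, and your proof spells out precisely that deduction. There is nothing to add.
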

 
Further, we review the recent concept of {phases of complex matrices} based on the matrix's numerical range \cite{chen2021phase}. Namely, the \textit{numerical range} of a complex matrix $A\in\mathbb{C}^{n\times n}$ is defined as $W(A)=\{x^\star A x: x\in\mathbb{C}^n, ||x|| = 1 \}$. If $0\notin W(A)$, then $A$ is said to be a \textit{sectorial matrix}. For a sectorial $A$, there exists a nonsingular matrix $T$ and a diagonal unitary matrix $D$ such that $A=T^\star DT$ \cite{chen2021phase}. The diagonal elements of $D$ are distributed in an arc on the unit circle with length smaller than $\pi$. Then, the \textit{phases} of $A$, denoted by 
\begin{align}
    \overline{\phi}(A) = \phi_1(A)\geq \dots \geq \phi_n(A) = \underline{\phi}(A),
\end{align}
are defined as the phases of the diagonal entries of $D$ so that $\overline{\phi}(A)-\underline{\phi}(A) < \pi$. The definition of phases can be generalized to so-called \textit{semi-sectorial} matrices whose numerical ranges contain the origin on their boundaries and $\overline{\phi}(A)-\underline{\phi}(A) \leq \pi$.

\subsection{Transfer Functions \& Stability}
Let $u(t)\in\mathbb{R}^n$ be the input and $y(t)\in\mathbb{R}^n$ the output signal of a MIMO LTI system. The $n\times n$ transfer matrix $H(s)$ describes the input-output system response in the frequency domain as $y(s)=H(s)u(s)$, where $y(s)$ and $u(s)$ are the Laplace transformations of the output and input, respectively.
 
We denote the space of $n$-by-$n$ real rational proper transfer matrices of stable LTI systems by $\mathcal{RH}_\infty^{n\times n}$. An LTI system is called \textit{stable} if all poles are in the open left half plane (LHP). It is called \textit{semi-stable} if it may have poles on the imaginary axis but no poles in the open right half plane (RHP).

\begin{definition}[Internal Feedback Stability \cite{skogestad2005multivariable}]
Let $H_1(s)$ and $H_2(s)$ be $n\times n$ real rational proper transfer matrices. The feedback system in \cref{fig:internal_feedback_stability} is \textit{internally feedback stable} if and only if the following four closed-loop transfer matrices 
    \begin{align}\label{eq:gangof4}
        \hspace{-2.5mm}\begin{bmatrix}
            y_1\\y_2
        \end{bmatrix} \hspace{-1.2mm}= \hspace{-1.2mm}\underset{\eqqcolon \,H_1\#H_2(s)}{\underbrace{\begin{bmatrix}
            (I\hspace{-0.6mm}+\hspace{-0.6mm}H_1H_2)^{-1}H_1&-(I\hspace{-0.6mm}+\hspace{-0.6mm}H_1H_2)^{-1}H_1H_2 \\ H_2(I\hspace{-0.6mm}+\hspace{-0.6mm}H_1H_2)^{-1}H_1& H_2(I\hspace{-0.6mm}+\hspace{-0.6mm}H_1H_2)^{-1} 
        \end{bmatrix}}}\hspace{-1.5mm}\begin{bmatrix}
            w_1\\ w_2
        \end{bmatrix}\hspace{-1.6mm}\\[-0.7cm]
        \nonumber
    \end{align}
are stable, compactly referred to as $H_1\#H_2 \in \mathcal{RH}_\infty^{2n\times 2n}$. 
\end{definition}
If there are no RHP pole-zero cancellations between $H_1(s)$ and $H_2(s)$, then stability of one closed-loop transfer matrix implies stability of the others \cite[Thm 4.5]{skogestad2005multivariable}:

\begin{lemma}\label{lemma:rhp_cancellations}
Assume there are no RHP pole-zero cancellations between $H_1(s)$ and $H_2(s)$, i.e., all RHP poles in $H_1(s)$ and $H_2(s)$ are contained in the minimal realization of $H_1(s)H_2(s)$ and $H_2(s)H_1(s)$. Then, the feedback system in \cref{fig:internal_feedback_stability} is internally feedback stable if and only if one of the four closed-loop transfer matrices in \cref{eq:gangof4} is stable.
\end{lemma}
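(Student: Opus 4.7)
The plan is to prove the equivalence by showing that, under the no-RHP-pole-zero-cancellation hypothesis, each of the four entries of $H_1\#H_2$ inherits the same unstable pole structure from the feedback loop, so stability of any one forces stability of all. The ``only if'' direction is immediate from the definition of internal feedback stability, so the substantive work lies in the converse.

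First, I would record the algebraic identities that tie the four closed-loop entries to the input/output sensitivities $S_1(s)=(I+H_1H_2)^{-1}$ and $S_2(s)=(I+H_2H_1)^{-1}$. Using the standard push-through identity $H_2(I+H_1H_2)^{-1}=(I+H_2H_1)^{-1}H_2$, a direct computation yields
\[
S_1=I+T_{12},\quad S_2=I-T_{21},\quad T_{11}=S_1H_1=H_1S_2,\quad T_{22}=S_2H_2=H_2S_1,
\]
where $T_{ij}$ denotes the $(i,j)$-block of $H_1\#H_2$ from \cref{eq:gangof4}. Hence if either off-diagonal block $T_{12}$ or $T_{21}$ is stable, then the corresponding sensitivity $S_1$ or $S_2$ is stable by subtraction; the remaining entries are then stable as products of stable terms with $H_1,H_2$, provided no unstable pole of $H_1$ or $H_2$ is introduced into the product. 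Here the no-RHP-pole-zero-cancellation hypothesis plays its first role: every RHP pole of $H_1$ or $H_2$ appears in the minimal realization of $H_1H_2$ and $H_2H_1$, and by the identities above such a pole would have to simultaneously be an RHP zero of the already stable sensitivity, which is impossible without the hypothesis being violated.

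For the case in which one of the diagonal blocks $T_{11}=S_1H_1$ or $T_{22}=S_2H_2$ is stable, I would argue by contradiction: if $S_1$ carried an RHP pole at $s_0$, then $s_0$ would be an RHP zero of $\det(I+H_1H_2)$, and keeping $T_{11}$ stable would require the pole to be canceled by an RHP zero of $H_1$. The hypothesis, formulated as all RHP poles of $H_1,H_2$ being preserved in the minimal realizations of $H_1H_2$ and $H_2H_1$, rules out precisely this type of hidden-mode cancellation inside $T_{11}$ or $T_{22}$. Hence any RHP pole of $S_1$ or $S_2$ must be visible in the stable $T_{ij}$, contradicting its stability, and so $S_1,S_2$ are stable. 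Combined with the identities above, all four entries of $H_1\#H_2$ are then stable.

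The main obstacle I anticipate is the careful pole-zero bookkeeping in the diagonal case, since it requires distinguishing RHP cancellations external to the feedback (between $H_1$ and $H_2$, which the hypothesis forbids) from cancellations that could occur internally in forming $T_{11}=S_1H_1$ or $T_{22}=S_2H_2$. To make this watertight I would work with normalized coprime factorizations of $H_1$ and $H_2$ so that the closed-loop characteristic polynomial can be read off unambiguously as $\det(I+H_1H_2)$ times the denominators of $H_1$ and $H_2$; the hypothesis then translates into the statement that this polynomial has no spurious factors shared with the numerators of any $T_{ij}$, which is exactly what is needed to transfer stability from one entry to all four.
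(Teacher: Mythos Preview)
The paper does not actually prove this lemma; it is stated with a direct citation to Theorem~4.5 of Skogestad and Postlethwaite and used as a black box thereafter. So there is no in-paper argument to compare against, and your proposal is already more than the paper provides.

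On the proposal itself: the identities $S_1=I+T_{12}$, $S_2=I-T_{21}$, $T_{11}=S_1H_1=H_1S_2$, $T_{22}=S_2H_2=H_2S_1$ are correct, and your overall strategy---reducing the question to a common closed-loop pole set that the no-RHP-cancellation hypothesis forces to be visible in every $T_{ij}$---is the standard one. The informal middle step, however, is not yet a proof. From $T_{12}$ stable you correctly get $S_1$ stable, but ``the remaining entries are then stable as products of stable terms with $H_1,H_2$'' does not follow as written: $T_{11}=S_1H_1$ can inherit an RHP pole of $H_1$, and in the MIMO case a transmission zero of $S_1$ at that frequency need not cancel it unless the input/output directions match. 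The hypothesis, phrased as preservation of RHP poles in the minimal realizations of $H_1H_2$ and $H_2H_1$, does not by itself supply that directional information, so the link you sketch (``such a pole would have to simultaneously be an RHP zero of the already stable sensitivity, which is impossible without the hypothesis being violated'') is circular at this level of detail. You flag exactly this obstacle and propose to close it via coprime factorizations so that the closed-loop characteristic polynomial can be read off unambiguously; that is the right fix and is precisely how the textbook proof the paper cites is structured. With that step carried out, your argument would be complete.
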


\begin{figure}[t!]
    \centering
\begin{subfigure}{0.21\textwidth}
    \centering
    \vspace{-1mm}
     \resizebox{1\textwidth}{!}{
\tikzstyle{roundnode}=[circle,draw=black!60,fill=black!5,scale=0.75]
\begin{tikzpicture}
\draw [rounded corners = 3,fill=backgroundblue!30] (-0.3,4.3) rectangle (1.3,3.5);
\node at (0.5,3.9) {$H_1(s)$};
\draw [rounded corners = 3,fill=backgroundblue!30] (-0.3,2.7) rectangle (1.3,1.9);
\node at (0.5,2.3) {$H_2(s)$};
\draw [-latex](1.3,3.9) -- (2.5,3.9);
\draw [-latex](-0.3,2.3) -- (-1.5,2.3);

\draw[-latex] (1.7,2.3) -- (1.3,2.3);
\draw [-latex](2.5,2.3) -- (1.9,2.3); 
\draw [-latex](1.8,3.9) -- (1.8,2.4);
\node [roundnode] at (1.8,2.3) {};

\draw [-latex](-0.7,3.9) -- (-0.3,3.9);
\draw [-latex](-1.5,3.9) -- (-0.9,3.9); 
\draw [-latex](-0.8,2.3) -- (-0.8,3.8);
\node [roundnode] at (-0.8,3.9) {};
\node at (-0.6,3.6) {$-$};
\node at (-1.3,4.1) {$w_1$};
\node at (2.3,4.1) {$y_1$};
\node at (-1.3,2.5) {$y_2$};
\node at (2.3,2.5) {$w_2$};
\end{tikzpicture}
}
        \vspace{-8.5mm}
    \caption{Original feedback system.}
    \vspace{-1mm}
    \label{fig:internal_feedback_stability}
\end{subfigure}
\hspace{0.05cm}
\begin{subfigure}{0.24\textwidth}
    \centering
     \vspace{-1mm}
   \resizebox{1\textwidth}{!}{
\tikzstyle{roundnode}=[circle,draw=black!60,fill=black!5,scale=0.75]
\begin{tikzpicture}
\draw [rounded corners = 3, dashed,fill=backgroundblue!10] (2,3.85) rectangle (-1.1,1.8);
\draw [rounded corners = 3, dashed, fill=backgroundblue!10] (2,4.1) rectangle (-1.1,6.15);
\draw [rounded corners = 3,fill=backgroundblue!30] (-0.3,5.9) rectangle (1.3,5.1);
\draw [rounded corners = 3,fill=backgroundblue!30] (-0.3,2.7) rectangle (1.3,1.9);
\node at (0.5,5.5) {$H_1(s)$};
\node [backgroundblue!200] at (1.65,5.85) {$H'_1$};
\node at (0.5,2.3) {$H_2(s)$};
\node [backgroundblue!200] at (1.65,3.55) {$H'_2$};
\draw [-latex](1.3,5.5) -- (3,5.5);
\draw (-0.9,2.3) -- (-1.4,2.3);

\draw[-latex] (2.2,2.3) -- (1.3,2.3);

\draw [-latex](2.3,5.5) -- (2.3,2.4);

\draw [-latex](-1.3,5.5) -- (-0.9,5.5);
\draw [-latex](-2.1,5.5) -- (-1.5,5.5); 
\draw [-latex](-1.4,2.3) node (v1) {} -- (-1.4,5.4);
\node [roundnode] at (-1.4,5.5) {};
\node at (-1.25,5.2) {-};
\node at (-1.9,5.7) {$w_1$};
\node at (2.8,5.7) {$y_1$};
\draw [rounded corners = 3, fill = black!20] (-0.3,3.6) rectangle (1.3,2.8);
\node at (0.5,3.2) {$\Gamma(s)$};
\draw [rounded corners = 3, fill = black!20] (-0.3,5) rectangle (1.3,4.2);
\node at (0.5,4.6) {$\Gamma(s)$};
\draw [-latex](1.8,2.3) -- (1.8,3.2) -- (1.3,3.2);
\draw [-latex](-0.3,2.3) -- (-0.7,2.3); 
\draw [-latex](-0.3,3.2) -- (-0.8,3.2) -- (-0.8,2.4);
\node [roundnode] at (-0.8,2.3) {}; 

\node[roundnode] at (-0.8,5.5) {}; 
\draw [-latex](-0.7,5.5) -- (-0.3,5.5);
\draw [-latex](-0.3,4.6) -- (-0.8,4.6) -- (-0.8,5.4);
\draw[-latex] (1.8,5.5) -- (1.8,4.6) -- (1.3,4.6);
\draw [-latex](-1.4,2.3) -- (-2.1,2.3);
\node at (-1.9,2.5) {$y_2$};
\node [roundnode]at (2.3,2.3) {};
\draw [-latex](3,2.3) -- (2.4,2.3);
\node at (2.8,2.5) {$w_2$};
\end{tikzpicture}
}
        \vspace{-8.5mm}
    \caption{Loop shifting with $\Gamma(s)$.}
        \vspace{-1mm}
    \label{fig:loop_shited_system}
\end{subfigure}
    \caption{\footnotesize Closed-loop feedback interconnection of two LTI systems.}
    \label{fig:loop_shifting}
    \vspace{-4mm}
\end{figure}

\subsection{Small-Phase \& Passivity Theory}\label{sec:small_phase_passivity}
For LTI systems, the property of passivity is equivalent to the property of positive realness \cite{willems1972dissipative,bao2007process,hassan2002nonlinear}.
\begin{definition}[Passivity]\label{def:passivity}
A $n\times n$ real rational proper transfer matrix $H(s)$ is said to be passive if 
\begin{enumerate}[label=(\roman*)]
    \item poles of all elements of $H(s)$ are in $\text{Re}(s) \leq 0$,
    \item $H(\mathrm{j}\omega) + H^\star(\mathrm{j}\omega)\succeq0$ for any $\omega$ for which $\mathrm{j}\omega$ is not a pole of any element of $H(s)$,
\item any purely imaginary pole $\mathrm{j}\omega$ of any element of $H(s)$ is a simple pole and $\text{lim}_{s\rightarrow \mathrm{j}\omega}(s\hspace{-0.7mm}-\hspace{-0.7mm}\mathrm{j}\omega)H(s)$ is positive semi-definite Hermitian.
\end{enumerate}
\end{definition}

\begin{definition}[Strict Passivity]\label{def:strict_passivity}
A $n\times n$ real rational proper transfer matrix $H(s)$ is said to be strictly passive if 
\begin{enumerate}[label=(\roman*)]
    \item poles of all elements of $H(s)$ are in $\text{Re}(s) < 0$,
    \item $H(\mathrm{j}\omega) + H^\star(\mathrm{j}\omega)\succ0$ for any $\omega\in(-\infty,\infty)$.
\end{enumerate}
\end{definition}

We now present a variant of the passivity theorem \cite{hassan2002nonlinear} for LTI systems as follows:
\begin{theorem}[Passivity Theorem]\label{thm:stabilty_passivity}
    Consider two LTI systems $H_1(s)$ and $H_2(s)$ in negative feedback configuration, as in \cref{fig:internal_feedback_stability}. The feedback system is internally feedback stable if 
    \begin{itemize}
        \item $H_1(s)$ is strictly passive,
        \item $H_2(s)$ is passive, and
        \item $\bar{\sigma}(H_1(\mathrm{j}\infty))\bar{\sigma}(H_2(\mathrm{j}\infty))<1$.
    \end{itemize}
\end{theorem}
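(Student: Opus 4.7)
My plan is to combine the equivalence of passivity with positive realness for LTI systems with a direct contradiction argument on the closed-loop characteristic determinant. By \cref{lemma:rhp_cancellations}, once I verify that there are no RHP pole-zero cancellations between $H_1(s)$ and $H_2(s)$, it suffices to establish stability of any single closed-loop map in \cref{eq:gangof4}, so my target is $(I+H_1H_2)^{-1}H_1 \in \mathcal{RH}_\infty^{n\times n}$.

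First I would secure well-posedness at infinity: the bound $\bar{\sigma}(H_1(\mathrm{j}\infty))\bar{\sigma}(H_2(\mathrm{j}\infty))<1$ implies that $I+H_1(\mathrm{j}\infty)H_2(\mathrm{j}\infty)$ is invertible via a Neumann series, so the closed-loop transfer matrix is proper. I would then extend the imaginary-axis positivity conditions into the open RHP through positive realness: $H_1(s)+H_1^\star(s)\succ 0$ for $\text{Re}(s)>0$ by strict passivity of $H_1$, and $H_2(s)+H_2^\star(s)\succeq 0$ for $\text{Re}(s)>0$ by passivity of $H_2$.

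The core argument rules out zeros of $\det(I+H_1(s)H_2(s))$ in the closed RHP away from any imaginary-axis poles of $H_2$. Suppose, for contradiction, that $\det(I+H_1(s_0)H_2(s_0))=0$ for some $s_0$ with $\text{Re}(s_0)\geq 0$ that is not a pole of either transfer matrix. Then there exists nonzero $x\in\mathbb{C}^n$ with $H_1(s_0)H_2(s_0)x=-x$; setting $y=H_2(s_0)x$, the case $y=0$ forces $x=0$, so $y\neq 0$ and $H_1(s_0)y=-x$. Taking inner products gives $y^\star H_1(s_0)y = -y^\star x = -x^\star H_2^\star(s_0)x$, whose real part is strictly positive on the left by strict passivity of $H_1$ and non-positive on the right by passivity of $H_2$, a contradiction.

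The main obstacle is the treatment of imaginary-axis poles $\mathrm{j}\omega_0$ of $H_2$ permitted by clause (iii) of \cref{def:passivity}. For such a simple pole, the residue $R=\lim_{s\to\mathrm{j}\omega_0}(s-\mathrm{j}\omega_0)H_2(s)$ is positive semi-definite Hermitian, and I would expand $H_2$ locally as $R/(s-\mathrm{j}\omega_0)+H_{2,0}(s)$ to show that strict positivity of $H_1(\mathrm{j}\omega_0)+H_1^\star(\mathrm{j}\omega_0)$ forces every direction in the range of $R$ to contribute a matching zero of $\det(I+H_1H_2)$, so $\mathrm{j}\omega_0$ becomes a removable singularity of $(I+H_1H_2)^{-1}H_1$ rather than a pole. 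This simultaneously rules out RHP pole-zero cancellations and thereby validates the application of \cref{lemma:rhp_cancellations}. I expect this final bookkeeping to be the technically heaviest step, since the residue $R$ may be rank-deficient in the MIMO setting, so the cancellation must be tracked direction by direction rather than through a scalar determinant argument.
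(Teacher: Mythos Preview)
Your argument is correct and takes a genuinely different route from the paper. The paper proves \cref{thm:stabilty_passivity} as a one-step corollary of \cref{lemma:mixed_gain_phase} with $\omega_{\mathrm{c}}=\infty$: strict passivity of $H_1$ is recast as the frequency-wise sectorial phase bound $(-\tfrac{\pi}{2},\tfrac{\pi}{2})$, passivity of $H_2$ as the semi-sectorial bound $[-\tfrac{\pi}{2},\tfrac{\pi}{2}]$, so condition \cref{eq:mixed_gain_phase_cond1} is automatic, and the $\bar{\sigma}$ bound at $\omega=\infty$ dispatches \cref{eq:mixed_gain_phase_cond2}. All the delicate bookkeeping about imaginary-axis poles of $H_2$ is absorbed into the cited proof of \cref{lemma:mixed_gain_phase}. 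Your route---a direct positive-realness contradiction on $\det(I+H_1H_2)$ followed by \cref{lemma:rhp_cancellations}---is more elementary and self-contained, but it obliges you to redo exactly the imaginary-pole analysis that the phase lemma packages. Your sketch there can indeed be completed: writing $(I+H_1H_2)^{-1}H_1=(H_1^{-1}+H_2)^{-1}$ locally near $\mathrm{j}\omega_0$ (valid since strict passivity makes $H_1(\mathrm{j}\omega_0)$ nonsingular), the restriction of $H_1^{-1}(\mathrm{j}\omega_0)+\tilde H_{2}(\mathrm{j}\omega_0)$ to $\ker R$ inherits a strictly positive Hermitian part, so the Schur-complement block is invertible and the singularity is removable even when $R$ is rank-deficient. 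The trade-off is clear: the paper's proof is shorter but relies on external machinery, whereas yours is independent of the phase theory at the cost of handling the residue case by hand.
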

\noindent Several variants of the above theorem exist, often for the SISO case, with subtle distinctions arising in the special case $\omega = \infty$. To avoid ambiguity, we provide a self-contained proof below, relying on the Mixed Gain–Phase Theorem \cite{zhao2022small}.

\begin{lemma}[Mixed Gain-Phase Theorem with Cutoff Frequency]\label{lemma:mixed_gain_phase}
    Let $\omega_\mathrm{c}\in(0,\infty]$, $H_1(s)\in\mathcal{RH}_\infty^{n\times n}$ be frequency-wise sectorial, and $H_2(s)$ be semi-stable frequency-wise semi-sectorial over $(-\omega_\mathrm{c},\omega_\mathrm{c})$ with $\mathrm{j}\Omega$ being the set of poles on the imaginary axis satisfying $\text{max}_{\omega\in\Omega}|\omega|< \omega_\mathrm{c}$. Then, $H_1\#H_2$ in \cref{fig:internal_feedback_stability} is internally feedback stable if
\begin{enumerate}[label=(\roman*)]
        \item for each $\omega\in[0,\omega_\mathrm{c})\backslash\Omega$, it holds
        \begin{align}\label{eq:mixed_gain_phase_cond1}
    \begin{split}
    &\overline{\phi}(H_1(\mathrm{j}\omega))+\overline{\phi}(H_2(\mathrm{j}\omega)) < \pi \quad \text{and}\\
 &\underline{\phi}(H_1(\mathrm{j}\omega))+\underline{\phi}(H_2(\mathrm{j}\omega)) > -\pi,
 \end{split}
    \end{align}
    \item and for each $\omega\in[\omega_\mathrm{c},\infty]$, it holds
    \begin{align}\label{eq:mixed_gain_phase_cond2}
        \bar{\sigma}(H_1(\mathrm{j\omega}))\bar{\sigma}(H_2(\mathrm{j\omega})) < 1.
    \end{align}
        \end{enumerate}
\end{lemma}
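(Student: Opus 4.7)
The plan is to apply a generalized Nyquist-style argument to the interconnection, splitting the frequency axis into a low-frequency window $[0,\omega_\mathrm{c})$ handled by the phase bound (i) and a high-frequency tail $[\omega_\mathrm{c},\infty]$ handled by the small-gain bound (ii). Since $H_1\in\mathcal{RH}_\infty^{n\times n}$ and $H_2$ is semi-stable, the generalized Nyquist criterion (applied with small semicircular indentations into the right half-plane around each imaginary-axis pole $\mathrm{j}\omega_0\in\mathrm{j}\Omega$ of $H_2$) reduces internal stability of $H_1\#H_2$ to showing that the characteristic loci of $L(s)\coloneqq H_1(s)H_2(s)$ along the indented imaginary axis neither pass through nor encircle the point $-1$.

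For the high-frequency tail $\omega\in[\omega_\mathrm{c},\infty]$, condition (ii) together with submultiplicativity yields
\[
\rho\bigl(L(\mathrm{j}\omega)\bigr)\;\leq\;\bar{\sigma}\bigl(L(\mathrm{j}\omega)\bigr)\;\leq\;\bar{\sigma}(H_1(\mathrm{j}\omega))\,\bar{\sigma}(H_2(\mathrm{j}\omega))\;<\;1,
\]
so every characteristic locus lies strictly inside the open unit disk and cannot reach $-1$. For the low-frequency window $\omega\in[0,\omega_\mathrm{c})\setminus\Omega$ I would invoke the phase-calculus fact from \cite{chen2021phase} that, for sectorial or semi-sectorial complex matrices $A$ and $B$, every nonzero eigenvalue of $AB$ has argument inside $[\underline{\phi}(A)+\underline{\phi}(B),\,\overline{\phi}(A)+\overline{\phi}(B)]$; combined with condition (i), this confines every eigenvalue of $L(\mathrm{j}\omega)$ to an open sector that excludes the negative real axis and hence the point $-1$. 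Around each $\mathrm{j}\omega_0\in\mathrm{j}\Omega$ I would let the indentation radius tend to zero and exploit semi-sectoriality on both sides of $\mathrm{j}\omega_0$ to conclude that the large arcs swept by the characteristic loci remain in a sector compatible with (i), contributing no additional winding about $-1$. Assembling the three parts, each characteristic locus avoids $-1$ with zero net winding number, so the generalized Nyquist theorem gives internal stability of $H_1\#H_2$.

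The main obstacle I anticipate is the indentation analysis at the imaginary-axis poles: one must verify that as the indentation radius shrinks, the arcs traced by the characteristic loci stay within the admissible phase sector, and that the semi-sectoriality hypothesis is preserved in the limit. A secondary delicate point is the extension of the phase-of-product bound of \cite{chen2021phase} from sectorial to semi-sectorial factors, which must be invoked rather than re-derived. Once these two ingredients are in place, the small-gain and small-phase manipulations on the two frequency regions are routine, and the assembly into a Nyquist encirclement count is standard.
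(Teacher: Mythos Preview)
The paper does not actually prove this lemma: it is stated as a known result taken from \cite{zhao2022small} and is invoked only as a tool in the proof of Theorem~\ref{thm:stabilty_passivity}. Your proposal therefore cannot be compared against a proof in the paper, but it does align with the argument structure in the cited reference: a generalized Nyquist count on the indented imaginary axis, with the low-frequency window $[0,\omega_\mathrm{c})$ controlled by the eigenvalue--phase majorization of \cite{chen2021phase} and the tail $[\omega_\mathrm{c},\infty]$ by the small-gain bound.

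Your sketch is correct at the level of strategy, and you have correctly flagged the two genuinely nontrivial steps. For the indentation analysis, the missing ingredient is that semi-sectoriality of $H_2$ over the whole interval $(-\omega_\mathrm{c},\omega_\mathrm{c})$, together with continuity of the phase functions, forces the large arcs swept by the characteristic loci near each $\mathrm{j}\omega_0\in\mathrm{j}\Omega$ to lie in a half-plane not containing $-1$; this is where the hypothesis $\max_{\omega\in\Omega}|\omega|<\omega_\mathrm{c}$ is used, so that the phase condition (i) holds on both sides of every pole. For the product-phase bound in the semi-sectorial case, the relevant statement is that if $A$ is sectorial and $B$ semi-sectorial, then every nonzero eigenvalue of $AB$ has argument in $[\underline{\phi}(A)+\underline{\phi}(B),\,\overline{\phi}(A)+\overline{\phi}(B)]$; this extension is available in the phase-theory literature and need not be re-derived. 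With these two pieces supplied, the remainder of your outline goes through as written.
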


\begin{proofThm}
We establish \cref{thm:stabilty_passivity} by invoking \cref{lemma:mixed_gain_phase} with $\omega_{\mathrm{c}} = \infty$. Since \cref{lemma:mixed_gain_phase} holds for arbitrarily large $\omega_{\mathrm{c}}$, continuity ensures its validity also at $\omega_{\mathrm{c}} = \infty$. Accordingly, we proceed as follows. Because $H_1(s)$ is strictly passive, it follows that $H_1(s)\in\mathcal{RH}_\infty^{n\times n}$ is frequency-wise sectorial with phases $\left[\underline{\phi}(H_1(\mathrm{j}\omega)), \overline{\phi}(H_1(\mathrm{j}\omega))\right]\subset(-\tfrac{\pi}{2}, \tfrac{\pi}{2}), \forall \omega\in[0,\infty)$. Likewise, since $H_2(s)$ is passive, it follows that $H_2(s)$ is semi-stable with poles at $\mathrm{j}\Omega$ on the imaginary axis and frequency-wise semi-sectorial with phases $\left[\underline{\phi}(H_2(\mathrm{j}\omega)), \overline{\phi}(H_2(\mathrm{j}\omega))\right]\subset[-\tfrac{\pi}{2}, \tfrac{\pi}{2}], \forall \omega\in[0,\infty]\backslash \Omega$. Hence, the conditions in \cref{eq:mixed_gain_phase_cond1} in \cref{lemma:mixed_gain_phase} are satisfied. For $\omega=\infty$, it becomes immediate how \cref{eq:mixed_gain_phase_cond2} in \cref{lemma:mixed_gain_phase} is satisfied. This implies $H_1\#H_2 \in \mathcal{RH}_\infty^{2n\times 2n}$.
\end{proofThm}

To extend the passivity-based stability conditions to more general feedback interconnections with both passive and nonpassive subsystems, we employ the concepts of excess and shortage of passivity \cite{bao2007process,hassan2002nonlinear}. The basic idea is that the excess of passivity of one subsystem can compensate for the passivity deficit in the other subsystem, such that their feedback interconnection remains stable. This can be achieved by performing loop-shifting techniques as shown in \cref{fig:loop_shifting}, where a transfer matrix $\Gamma(s)\in\mathbb{C}^{n\times n}$ is added as a positive feedforward to $H_2(s)$, and as a positive feedback to $H_1(s)$. Now the idea is that if the original subsystems $H_1(s)$ and $H_2(s)$ in \cref{fig:internal_feedback_stability} are not satisfying the passivity conditions in \cref{thm:stabilty_passivity}, the two subsystems ${H}'_2(s)$ and ${H}'_1(s)$ in \cref{fig:loop_shited_system} might do so for a suitable $\Gamma(s)$, thus resulting in a stable feedback interconnection ${H}'_1\#{H}'_2$. Notice that this dynamic loop-shifting approach based on the dynamic transfer function $\Gamma(s)$ effectively corresponds to the concept of frequency-dependent (dynamic) passivity indices\cite{bao2007process,hassan2002nonlinear}, which, are generally less conservative than static passivity indices.

Among the four closed-loop transfer matrices in \cref{eq:gangof4}, it is immediately evident that only the upper-left transfer matrix mapping from $w_1$ to $y_1$ remains equivalent between the original system in \cref{fig:internal_feedback_stability} and the loop-shifted system in \cref{fig:loop_shited_system}. Specifically, $(I + {H}'_1{H}'_2)^{-1}{H}'_1 = (I + H_1H_2)^{-1}H_1$, where $H'_1=(I-H_1\Gamma)^{-1}H_1$ and $H'_2= H_2+\Gamma$. Hence, stability of ${H}'_1\#{H}'_2$ directly implies stability of $(I + H_1H_2)^{-1}H_1$. If there are no RHP pole-zero cancellations between $H_1(s)$ and $H_2(s)$, by \cref{lemma:rhp_cancellations}, stability of the upper-left transfer matrix $(I + H_1H_2)^{-1}H_1$ also guarantees stability of $H_1\#H_2$. 

\begin{remark} By swapping the feedforward and feedback of $\Gamma(s)$ in \cref{fig:loop_shited_system} and comparing the closed-loop transfer function from $w_2$ to $y_2$, similar stability conclusions can be made.
\end{remark}

\begin{remark}
Previous dynamic loop-shifting techniques based on the transfer matrix $\Gamma(s)$ are purely mathematical tools for establishing passivity properties and analytically proving closed-loop stability. They do not involve any hardware changes or modifications to the physical control system.
\end{remark}
\begin{figure}[t!]
\vspace{-1mm}
    \centering
    \resizebox{0.42\textwidth}{!}{

\tikzstyle{roundnode}=[circle,draw=black!60, fill=black!5,scale=0.5]
\begin{tikzpicture}[scale=1,every node/.style={scale=0.65}]
\draw (-2.2,1.4) node (v4) {} -- (0.2,0.4) node (v1) {} -- (-1.7,2.2) node (v3) {} --  (1.6,2)  -- (0.2,0.4) node (v5) {} -- (1.6,1) node (v2) {} -- (1.6,2) node (v6) {} -- (1.6,1) -- (1.6,1) node (v7) {};

\node [scale=1.8] at (0,1.6) {$\ddots$};
\node at (-1.7,2.4) {$1$};
\draw (-1.7,2.2) -- (-2.7,2.2); 
\draw (-2.2,1.4) -- (-3,1.4); 
\draw (0.2,0.4) -- cycle;
\draw (0.2,0.4) -- (-1.7,0.4); 
\draw (1.6,2) -- (2.1,2); 
\draw (1.6,1) -- (2.5,1);
\draw  [rounded corners = 3](-3.4,2.5) rectangle (-2.7,1.9);
\draw  [rounded corners = 3](-3,1.7) rectangle (-3.7,1.1);
\draw  [rounded corners = 3](-1.7,0.7) rectangle (-2.4,0.1);
\node  [scale=1.8] at (-2.6,1.1) {$\ddots$};
\draw  [rounded corners = 3](2.1,2.3) rectangle (2.8,1.7);
\draw  [rounded corners = 3](2.5,1.3) rectangle (3.2,0.7);

\node at (2.85,0.55) {converter};
\node at (-3.35,0.96) {converter};
\node at (-2.3,1.6) {2};
\node at (0.3,0.2) {$i$};
\node at (1.7,0.8) {$n-1$};
\node at (1.7,2.2) {$n$};

\draw (-3.57,1.39) -- (-3.47,1.39); 
\draw (-3.47,1.54) -- (-3.47,1.24); 
\draw (-3.42,1.54) -- (-3.42,1.24); 
\draw (-3.42,1.44) -- (-3.32,1.49) -- (-3.32,1.64); 
\draw (-3.42,1.34) -- (-3.32,1.29); 
\draw (-3.32,1.29) node (v9) {} -- (-3.32,1.14); 
\draw (-3.32,1.19) -- (-3.22,1.19) --  (-3.22,1.34);
\draw  (-3.22,1.44) -- (-3.22,1.59) -- (-3.32,1.59);
\draw (-3.27,1.44) -- (-3.17,1.44); 
\draw (-3.22,1.44) node (v8) {} -- (-3.27,1.34) -- (-3.17,1.34) -- (-3.22,1.44);
\draw (-3.35,1.33) --  (-3.32,1.29) -- (-3.37,1.29);

\draw (-2.28,0.39) -- (-2.18,0.39); 
\draw (-2.18,0.54) -- (-2.18,0.24); 
\draw (-2.13,0.54) -- (-2.13,0.24); 
\draw (-2.13,0.44) -- (-2.03,0.49) -- (-2.03,0.64); 
\draw (-2.13,0.34) -- (-2.03,0.29); 
\draw (-2.03,0.29) node (v9) {} -- (-2.03,0.14); 
\draw (-2.03,0.19) -- (-1.93,0.19) --  (-1.93,0.34);
\draw  (-1.93,0.44) -- (-1.93,0.59) -- (-2.03,0.59);
\draw (-1.98,0.44) -- (-1.88,0.44); 
\draw (-1.93,0.44) node (v8) {} -- (-1.98,0.34) -- (-1.88,0.34) -- (-1.93,0.44);
\draw (-2.06,0.33) --  (-2.03,0.29) -- (-2.08,0.29);

\draw (2.22,2.01) -- (2.32,2.01); 
\draw (2.32,2.16) -- (2.32,1.86); 
\draw (2.37,2.16) -- (2.37,1.86); 
\draw (2.37,2.06) -- (2.47,2.11) -- (2.47,2.26); 
\draw (2.37,1.96) -- (2.47,1.91); 
\draw (2.47,1.91) node (v9) {} -- (2.47,1.76); 
\draw (2.47,1.81) -- (2.57,1.81) --  (2.57,1.96);
\draw  (2.57,2.06) -- (2.57,2.21) -- (2.47,2.21);
\draw (2.52,2.06) -- (2.62,2.06); 
\draw (2.57,2.06) node (v8) {} -- (2.52,1.96) -- (2.62,1.96) -- (2.57,2.06);
\draw (2.44,1.95) --  (2.47,1.91) -- (2.42,1.91);

\draw (2.62,1.01) -- (2.72,1.01); 
\draw (2.72,1.16) -- (2.72,0.86); 
\draw (2.77,1.16) -- (2.77,0.86); 
\draw (2.77,1.06) -- (2.87,1.11) -- (2.87,1.26); 
\draw (2.77,0.96) -- (2.87,0.91); 
\draw (2.87,0.91) node (v9) {} -- (2.87,0.76); 
\draw (2.87,0.81) -- (2.97,0.81) --  (2.97,0.96);
\draw  (2.97,1.06) -- (2.97,1.21) -- (2.87,1.21);
\draw (2.92,1.06) -- (3.02,1.06); 
\draw (2.97,1.06) node (v8) {} -- (2.92,0.96) -- (3.02,0.96) -- (2.97,1.06);
\draw (2.84,0.95) --  (2.87,0.91) -- (2.82,0.91);

\draw (-3.26,2.2) -- (-3.16,2.2); 
\draw (-3.16,2.35) -- (-3.16,2.05); 
\draw (-3.11,2.35) -- (-3.11,2.05); 
\draw (-3.11,2.25) -- (-3.01,2.3) -- (-3.01,2.45); 
\draw (-3.11,2.15) -- (-3.01,2.1); 
\draw (-3.01,2.1) node (v9) {} -- (-3.01,1.95); 
\draw (-3.01,2) -- (-2.91,2) --  (-2.91,2.15);
\draw  (-2.91,2.25) -- (-2.91,2.4) -- (-3.01,2.4);
\draw (-2.96,2.25) -- (-2.86,2.25); 
\draw (-2.91,2.25) node (v8) {} -- (-2.96,2.15) -- (-2.86,2.15) -- (-2.91,2.25);
\draw (-3.04,2.14) --  (-3.01,2.1) -- (-3.06,2.1);

\draw[fill=backgroundblue!10,color=backgroundblue!10,opacity=0.5]  (-0.2,1.5) ellipse (2 and 1.1);
\draw[-latex,backgroundblue!180] (-2.55,2.3) -- (-2.05,2.3);
\node at (-2.35,2.5) {$\Delta i_{\mathrm{dq},i}$};
\draw[dashed,backgroundblue!200]  (-0.2,1.5) ellipse (2 and 1.1);
\node [color=backgroundblue!200] at (-0.2,2.4) {power network};
\node [roundnode] at (-1.7,2.2) {};
\node [roundnode] at (-2.2,1.4) {};
\node [roundnode] at (0.2,0.4) {};
\node [roundnode] at (1.6,1) {};
\node [roundnode] at (1.6,2) {};
\draw[-latex,backgroundblue!180] (-1.9,2.2) -- (-1.9,2.6);
\node at (-1.9,2.7) {$\Delta v_{\mathrm{dq},i}$};
\fill[backgroundblue!180] (-1.9,2.2) circle(0.2mm);
\end{tikzpicture}

}
    \vspace{-9mm}
    \caption{Illustration of the multi-converter transmission system.}
    \label{fig:multi_conv_system}
    \vspace{-4mm}
\end{figure}

\section{Power System Model}\label{sec:power_system_model}
\subsection{Small-Signal Network Dynamics}\label{sec:network}
We study the stabilization of multiple three-phase GFM voltage source converters (VSCs) connected through a dynamic transmission network modeled by resistive-inductive lines (\cref{fig:multi_conv_system}). We consider a Kron-reduced \cite{dorfler2012kron}, balanced network with $n\in\mathbb{N}$ converter nodes, denoted by $\{1,...,n\}$, where the dynamic \textit{small-signal} model (in the global per unit system) of the line between node $i$ and node $j$ ($i,j\in\{1,...,n\}$) is given in the frequency domain as 
\begin{align}\label{eq:dynamic_small_signal_model_IV}
    \begin{bmatrix}
        \Delta i_{\mathrm{d},ij}(s)\\\Delta i_{\mathrm{q},ij}(s)
    \end{bmatrix}=y_{ij}(s)
    \left(
    \begin{bmatrix}
        \Delta v_{\mathrm{d},i}(s)\\ \Delta v_{\mathrm{q},i}(s)
    \end{bmatrix}-  \begin{bmatrix}
        \Delta v_{\mathrm{d},j}(s)\\ \Delta v_{\mathrm{q},j}(s)
    \end{bmatrix}
    \right),
\end{align}
where $\Delta i_{\mathrm{dq},ij}=\left[\Delta i_{\mathrm{d},ij}\,\,\Delta i_{\mathrm{q},ij} \right]^\top$ is the current vector from node $i$ to node $j$ in the global $\mathrm{dq}$ coordinates with a constant nominal rotating frequency $\omega_0$, $\Delta v_{\mathrm{dq},i}=\left[\Delta v_{\mathrm{d},i}\,\,\Delta v_{\mathrm{q},i} \right]^\top$ is the voltage vector of node $i$, and $y_{ij}(s)$ is a $2\times 2$ transfer matrix encoding the small-signal dynamics of the line, i.e.,
\begin{align}\label{eq:line_dynamics}
    y_{ij}(s) = b_{ij} \underset{\eqqcolon\, f_\rho(s)}{\underbrace{\begin{bmatrix}
        \rho+\tfrac{s}{\omega_0}&1\\-1&\rho+\tfrac{s}{\omega_0}
    \end{bmatrix}\tfrac{1}{1+\left( \rho+\tfrac{s}{\omega_0} \right)^2}}},
\end{align}
where $b_{ij}=\tfrac{1}{l_{ij}}$ is the line susceptance, and $\rho_{ij}=\tfrac{r_{ij}}{l_{ij}}$ is the resistance-inductance ratio of the line $ij$ which is assumed to be \emph{small} and \emph{uniform} (i.e., $\rho_{ij}=\rho\ll 1,\,\forall\,i,j\in\{1,...,n\}$) throughout the dominantly inductive transmission network. If there is no line, we consider $b_{ij}=0$ and $\rho_{ij} =0$. As is standard in small-signal analysis, the network model assumes fixed RL-line parameters and a constant $\omega_0$ in the applied $\mathrm{dq}$ transformation, and is therefore valid only in the small-signal regime around the nominal operating frequency.

The bus current injection of each node $i$ is defined as $\Delta i_{\mathrm{dq},i}(s) \coloneqq \textstyle\sum_{j\neq i}^n\Delta i_{\mathrm{dq},ij}(s)$, based on which we can construct the network dynamics for all nodes in the form of the $2n\times 2n$ Laplacian transfer matrix $Y(s)$, i.e., 
\begin{align}\label{eq:reduced_dynamics_IV}
   \begin{bmatrix}
        \Delta i_{\mathrm{d},1}(s)\\\Delta i_{\mathrm{q},1}(s)\\ \vdots \\ \Delta i_{\mathrm{d},n}(s)\\\Delta i_{\mathrm{q},n}(s)
    \end{bmatrix} = \underset{\eqqcolon\ Y(s)}{\underbrace{B\otimes f_\rho(s)}}
    \begin{bmatrix}
        \Delta v_{\mathrm{d},1}(s)\\\Delta v_{\mathrm{q},1}(s)\\ \vdots \\ \Delta v_{\mathrm{d},n}(s)\\\Delta v_{\mathrm{q},n}(s)
    \end{bmatrix},
\end{align}
where $Y_{ij}(s)=-y_{ij}(s)$ if $i\neq j$, $Y_{ii}(s) = \textstyle\sum_{j\neq i}^n y_{ij}(s)$, and $B\in\mathbb{R}^{n\times n}$ is a Laplacian matrix  
\begin{align}
    B = \begin{bmatrix}
        \textstyle\sum_{j\neq1}^{n}b_{1j} & \dots & -b_{1n} \\ \vdots & \ddots & \vdots \\
        -b_{n1} & \cdots &  \textstyle\sum_{j\neq n}^{n}b_{nj}
    \end{bmatrix},
\end{align}
that encodes the network topology and line susceptances, where $B_{ij} = -b_{ij}$ if $i\neq j$, and $B_{ii}=\textstyle\sum_{j\neq i}^n b_{ij}$. 
 
The time-domain bus voltages are modeled in polar coordinates as $v_{\mathrm{d},i}(t) \coloneqq |v|_i(t)\cos{\delta_i(t)}$ and $v_{\mathrm{q},i}(t) \coloneqq |v|_i(t)\sin{\delta_i(t)} $ with magnitude $|v|_i(t)$ and relative angle $\delta_i(t) = \theta_i(t) - \theta_0(t)$ in SI units, where $\theta_i(t)$ is the voltage angle at bus $i\in\{1,...,n \}$, and $\theta_0(t)$ the Park transformation angle of the global $\mathrm{dq}$ coordinates, which can be expressed as
\begin{align}\label{eq:bus_voltage}
    \begin{split}
        \hspace{-2mm}|v|_i(t) \hspace{-0.7mm}= \hspace{-0.7mm}\sqrt{\hspace{-0.7mm}v_{\mathrm{d},i}(t)^2\hspace{-0.7mm}+\hspace{-0.7mm}v_{\mathrm{q},i}(t)^2},\quad \delta_i(t) \hspace{-0.7mm}= \arctan\hspace{-0.7mm}\left(\tfrac{v_{\mathrm{q},i}(t)}{v_{\mathrm{d},i}(t)}\right)\hspace{-0.5mm}.\hspace{-2mm}
    \end{split}
\end{align}
We consider the active and reactive branch powers 
\begin{align}
\begin{split}
    p_{ij}(t) &= v_{\mathrm{d},i}(t)i_{\mathrm{d},ij}(t)+v_{\mathrm{q},i}(t)i_{\mathrm{q},ij}(t)\\
    q_{ij}(t) &= -v_{\mathrm{d},i}(t)i_{\mathrm{q},ij}(t)+v_{\mathrm{q},i}(t)i_{\mathrm{d},ij}(t),
    \end{split}
\end{align}
and define the associated bus power injections $p_{i}(t)\coloneqq\textstyle\sum_{j\neq i}^np_{ij}(t)$ and $q_{i}(t)\coloneqq\textstyle\sum_{j\neq i}^nq_{ij}(t)$, which are obtained as
\begin{align}\label{eq:branch_power_injections}
\begin{split}
    p_{i}(t) &= v_{\mathrm{d},i}(t)i_{\mathrm{d},i}(t)+v_{\mathrm{q},i}(t)i_{\mathrm{q},i}(t)\\
    q_{i}(t) &= -v_{\mathrm{d},i}(t)i_{\mathrm{q},i}(t)+v_{\mathrm{q},i}(t)i_{\mathrm{d},i}(t).
\end{split}
\end{align}
By linearizing \cref{eq:bus_voltage} and \cref{eq:branch_power_injections}, transforming them into the frequency domain, and performing some analytical computations (see Appendix \ref{appendix1} for details), we obtain
\begin{align}\label{eq:full_network_polar}
   \hspace{-1mm} \underset{\eqqcolon\,{\tiny\begin{bmatrix}
       \Delta p(s)\\\Delta q(s)
   \end{bmatrix}}}{\underbrace{\begin{bmatrix}
       \Delta p_1(s)\\ \Delta q_1(s)\\\vdots\\ \Delta p_n(s)\\ \Delta q_n(s)
   \end{bmatrix}}}
  \hspace{-0.5mm} = \hspace{-0.5mm}\underset{\eqqcolon \, {N}(s)}{\underbrace{
   \begin{bmatrix}
       N_{11}(s)&\cdots&N_{1n}(s)\\\vdots & \ddots& \vdots\\ N_{n1}(s)& \cdots & N_{nn}(s)
   \end{bmatrix}}}
   \underset{\eqqcolon\,{\tiny\begin{bmatrix}
       \Delta \delta(s)\\\Delta |v|_\mathrm{n}(s)
   \end{bmatrix}}}{\underbrace{
   \begin{bmatrix}
       \Delta \delta_1(s)\\\Delta {|v|}_{\mathrm{n},1} (s)\\ \vdots\\ \Delta \delta_n(s)\\\Delta {|v|}_{\mathrm{n},n} (s)
   \end{bmatrix}}}\hspace{-1mm} 
\end{align}
with the $2\times 2$ transfer matrix blocks $N_{ii}(s)$ and $N_{ij}(s)$
\begin{align}\label{eq:full_network_polar_matrix_blocks}
\begin{split}
    N_{ii}(s) &= \textstyle\sum_{j\neq i}^n b_{ij}\tfrac{|v|_{0,i}^2}{1+(\rho+\tfrac{s}{\omega_0})^2}\begin{bmatrix}
        1&\rho+\tfrac{s}{\omega_0}\\-(\rho+\tfrac{s}{\omega_0}) & 1
    \end{bmatrix}\\
    &\quad\quad\quad+\textstyle\sum_{j\neq i}^n b_{ij}\tfrac{|v|_{0,i}^2-|v|_{0,i}|v|_{0,j}}{1+\rho^2}\begin{bmatrix}
        -1& \rho \\ \rho &1
    \end{bmatrix}\\
    N_{ij}(s) &= b_{ij}\tfrac{|v|_{0,i}|v|_{0,j}}{1+(\rho+\tfrac{s}{\omega_0})^2}\begin{bmatrix}
        -1&-(\rho+\tfrac{s}{\omega_0})\\(\rho+\tfrac{s}{\omega_0})&-1
    \end{bmatrix},
\end{split}
\end{align}
where $\Delta {|v|}_{\mathrm{n},i} (s) \coloneqq \tfrac{\Delta |v|_i(s)}{|v|_{0,i}}$ is the voltage magnitude at bus $i$ normalized at the steady-state $|v|_{0,i}$. Moreover, to derive \cref{eq:full_network_polar_matrix_blocks}, as in any power system small-signal model, we have assumed a small steady-state angle difference $\delta_{0,i}\approx \delta_{0,j}$, which is standard and justified since thermal limitations for transmission lines preclude large angle differences \cite{kundur2007power}. 

For a dominantly inductive transmission network with $\rho \ll 1$, the standard approximation of the line dynamics matrix
\begin{align}
\begin{split}
    M(s)&=\tfrac{1}{1+(\rho+\tfrac{s}{\omega_0})^2}\begin{bmatrix}
        1&\rho+\tfrac{s}{\omega_0}\\-(\rho+\tfrac{s}{\omega_0}) & 1
    \end{bmatrix}
\end{split}
\end{align}
appearing in $N_{ii}(s)$ and $N_{ij}(s)$ in \cref{eq:full_network_polar_matrix_blocks} is 
\begin{align}\label{eq:standard_approx}
\begin{split}
    M_1(s)&=\tfrac{1}{1+\tfrac{s^2}{\omega_0^2}}\begin{bmatrix}
        1&\tfrac{s}{\omega_0}\\-\tfrac{s}{\omega_0} & 1
    \end{bmatrix},
\end{split}
\end{align}
i.e., losses are entirely neglected ($\rho =0$). Here we put forward a novel and more accurate approximation for $M(s)$, namely
\begin{align}\label{eq:novel_approx}
     M_2(s)&=\tfrac{1}{1+(\rho+\tfrac{s}{\omega_0})^2}\begin{bmatrix}
        1&\tfrac{s}{\omega_0}\\-\tfrac{s}{\omega_0} & 1
    \end{bmatrix}.
\end{align}
Indeed, a straightforward calculation comparing the Euclidean norms of the matrix distances for $s=\mathrm{j}\omega$ reveals that $\forall \omega \geq 0$
\begin{align}
    \tfrac{||M(\mathrm{j}\omega)-M_2(\mathrm{j}\omega)||_F}{||M(\mathrm{j}\omega)-M_1(\mathrm{j}\omega)||_F} = \sqrt{\tfrac{1+\tfrac{\omega^4}{\omega_0^4}-2\tfrac{\omega^2}{\omega_0^2}}{1+\tfrac{\omega^4}{\omega_0^4}+6\tfrac{\omega^2}{\omega_0^2}+\rho^2\left(1+\tfrac{\omega^2}{\omega_0^2}\right)}}< 1,
\end{align}
that is, $M_2$ is a strictly better approximation for $\rho >0$. For $\rho \rightarrow 0$, it can be shown that both approximations are consistent in terms of recovering $M$ asymptotically. Using the approximation \cref{eq:novel_approx}, we can eliminate the less dominant terms of the off-diagonal matrix elements in \cref{eq:full_network_polar_matrix_blocks} as stated below, which allows to exploit symmetry properties of the network dynamics in the stability proof in \cref{sec:proof_main_thm}. However, since \cref{eq:novel_approx} is valid only for $\rho \ll 1$, extending our results to more general R/X ratios remains a topic for future work.

\begin{definition}[Dynamic Small-Signal Network Model]
The transfer matrix blocks for the \emph{dynamic (i.e., $s\neq0$, $|v|_{0,i}\neq |v|_{0,j}$)} small-signal network model in \cref{eq:full_network_polar} used to derive the stability certificates in \cref{sec:decentralized_stability_conditions} are given by:
\begin{align}\label{eq:full_network_polar_matrix_blocks_level0}
    \begin{split}
    N_{ii}(s) &= \textstyle\sum_{j\neq i}^n b_{ij}\tfrac{|v|_{0,i}^2}{1+(\rho+\tfrac{s}{\omega_0})^2}\begin{bmatrix}
        1&\tfrac{s}{\omega_0}\\-\tfrac{s}{\omega_0} & 1
    \end{bmatrix}\\
    &\quad\quad\quad+\textstyle\sum_{j\neq i}^n b_{ij}\tfrac{|v|_{0,i}^2-|v|_{0,i}|v|_{0,j}}{1+\rho^2}\begin{bmatrix}
        -1& 0\\ 0 &1
    \end{bmatrix}\\
    N_{ij}(s) &= b_{ij}\tfrac{|v|_{0,i}|v|_{0,j}}{1+(\rho+\tfrac{s}{\omega_0})^2}\begin{bmatrix}
        -1&-\tfrac{s}{\omega_0}\\\tfrac{s}{\omega_0}&-1
    \end{bmatrix}.
\end{split}
\end{align}
\end{definition}

By setting $s = 0$ in \cref{eq:full_network_polar_matrix_blocks_level0}, we derive the quasi-stationary network model, which is employed in a similar form in \cite{siahaan2024decentralized}:
\begin{definition}[Network-Simplification Level 1]\label{def:level1}
The transfer matrix blocks for the \emph{quasi-stationary (i.e., $s=0$, $|v|_{0,i}\neq |v|_{0,j}$)} small-signal network model in \cref{eq:full_network_polar} are given by:
\begin{align}\label{eq:full_network_polar_matrix_blocks_level1}
    \begin{split}
    \hspace{-2mm}N_{ii}(s) \hspace{-0.5mm}&=\hspace{-0.5mm} \textstyle\sum_{j\neq i}^n\hspace{-0.5mm} b_{ij}\tfrac{1}{1+\rho^2}\hspace{-0.8mm}\begin{bmatrix}
        |v|_{0,i}|v|_{0,j}\hspace{-2mm}&0\\0 \hspace{-2mm}& 2|v|_{0,i}^2\hspace{-0.5mm}-\hspace{-0.5mm}|v|_{0,i}|v|_{0,j}
    \end{bmatrix}\hspace{-2mm}\\
    \hspace{-2mm}N_{ij}(s)\hspace{-0.5mm} &=\hspace{-0.5mm} b_{ij}\tfrac{|v|_{0,i}|v|_{0,j}}{1+\rho^2}\hspace{-0.5mm}\begin{bmatrix}
        -1&0\\0&-1
    \end{bmatrix}\hspace{-0.5mm}.\hspace{-2mm}
\end{split}
\end{align}
\end{definition}

To further simplify \cref{eq:full_network_polar_matrix_blocks_level1}, we set $|v|_{0,i} = |v|_{0,j} = |v|_0$, yielding the zero-power flow network model similar to \cite{watson2020control}:

\begin{definition}[Network-Simplification Level 2]\label{def:level2}
The transfer matrix blocks for the \emph{zero-power flow (i.e., $s=0$, $|v|_{0,i}\hspace{-0.5mm}=\hspace{-0.5mm}|v|_{0,j}\hspace{-0.5mm}=\hspace{-0.5mm}|v|_0$)} small-signal network model in \cref{eq:full_network_polar} are given by:
\begin{align}\label{eq:full_network_polar_matrix_blocks_level2}
    \begin{split}
    N_{ii}(s)&=\textstyle\sum_{j\neq i}^n b_{ij}\tfrac{|v|_0^2}{1+\rho^2}\begin{bmatrix}
        1&0\\0 & 1
    \end{bmatrix}\\
    N_{ij}(s) &= b_{ij}\tfrac{|v|_{0}^2}{1+\rho^2}\hspace{-0.5mm}\begin{bmatrix}
        -1&0\\0&-1
    \end{bmatrix}\hspace{-0.5mm}.\hspace{-2mm}
\end{split}
\end{align}
\end{definition}

\begin{remark}
For $\rho = 0$, the network model in \cite{siahaan2024decentralized} corresponds to \cref{eq:full_network_polar_matrix_blocks_level1} in \cref{def:level1}, while the model in \cite{watson2020control} aligns with \cref{eq:full_network_polar_matrix_blocks_level2} in \cref{def:level2}. Moreover, other works \cite{gross2022compensating,pates2019robust} focus solely on SISO frequency dynamics. Hence, our dynamic network model in \cref{eq:full_network_polar_matrix_blocks_level0} offers the most detailed representation in literature which is compatible with theoretical stability certificates.
\end{remark}
\begin{figure}[t!]
    \centering
    \vspace{-1mm}
    \resizebox{0.39\textwidth}{!}{

\usetikzlibrary{circuits.ee.IEC}
\tikzstyle{roundnode}=[circle,draw=black!60, fill=black!5,scale=0.5]
\begin{tikzpicture}[circuit ee IEC, scale=1,every node/.style={scale=0.6}]
\draw (-2.76,2.7) -- (-2.66,2.7); 
\draw (-2.66,2.85) -- (-2.66,2.55); 
\draw (-2.61,2.85) -- (-2.61,2.55); 
\draw (-2.61,2.75) -- (-2.51,2.8) -- (-2.51,2.95); 
\draw (-2.61,2.65) -- (-2.51,2.6); 
\draw (-2.51,2.6) node (v9) {} -- (-2.51,2.45); 
\draw (-2.51,2.5) -- (-2.41,2.5) --  (-2.41,2.65);
\draw  (-2.41,2.75) -- (-2.41,2.9) -- (-2.51,2.9);
\draw (-2.46,2.75) -- (-2.36,2.75); 
\draw (-2.41,2.75) node (v8) {} -- (-2.46,2.65) -- (-2.36,2.65) -- (-2.41,2.75);
\draw (-2.54,2.64) --  (-2.51,2.6) -- (-2.56,2.6);

\draw  [rounded corners = 3](-2.9,3) rectangle (-2.2,2.4);
\draw [-latex](-2.55,1.7) -- (-2.55,2.4);
\draw  (-2.9,1.7) rectangle (-2.2,1.3);
\node at (-2.55,1.5) {PWM};

\draw  (-2.9,1) rectangle (-2.2,0.5);
\draw (-2.9,0.5) -- (-2.2,1);
\node at (-2.7,0.85) {abc};
\node at (-2.4,0.65) {dq};
\draw  (-1.2,1) rectangle (0.1,0.5);
\draw[-latex] (-2.55,1) -- (-2.55,1.3);
\draw (-1.2,0.5) -- (0.1,1); 
\node at (-0.9,0.85) {abc};
\node at (-0.2,0.65) {dq};

\draw[-latex] (-0.95,0.5) -- (-0.95,0.2); 
\draw [-latex](-0.55,0.5) -- (-0.55,0.2); 
\draw [-latex](-0.15,0.5) -- (-0.15,0.2);

\draw[-latex] (-1.05,1.3) -- (-1.05,1); 
\draw [-latex](-0.55,1.3) -- (-0.55,1); 
\draw [-latex](-0.05,1.3) -- (-0.05,1);

\draw [-latex](-1.7,0.75) node (v1) {} -- (-2.2,0.75); 

\draw [-latex](-2.3,-0.7) -- (-2.55,-0.7) -- (-2.55,0.5);

\node at (-0.8,1.3) {$i_{\mathrm{abc},i}$};
\node at (-0.3,1.3) {$v_{\mathrm{abc},i}$};
\node at (-1.35,1.3) {$i_{\mathrm{cabc},i}$};

\node at (-1.05,0.12) {$i_{\mathrm{cdq},i}$};
\node at (-0.55,0.12) {$i_{\mathrm{dq},i}$};
\node at (-0.05,0.12) {$v_{\mathrm{dq},i}$};

\draw [thick](-2.3,-0.5) -- (-2.3,-0.9); 

\draw [-latex](-1.8,-0.6) -- (-2.3,-0.6); 
\draw[-latex] (-1.8,-0.8) -- (-2.3,-0.8);
\draw [rounded corners = 3] (-1.8,-0.4) rectangle (-0.8,-1);
\node at (-1.3,-0.6) {current};
\node at (-1.3,-0.8) {control};
\node at (-2.05,-0.45) {$v_{\mathrm{cd},i}^\star$};
\node at (-2.05,-0.95) {$v_{\mathrm{cq},i}^\star$};
\draw [-latex](-1.5,-0.1) -- (-1.5,-0.4); 
\draw [-latex] (-1.1,-0.1) -- (-1.1,-0.4);
\node at (-1.75,-0.2) {$i_{\mathrm{cd},i}$};
\node at (-0.85,-0.2) {$i_{\mathrm{cq},i}$};
\draw [-latex](-1.5,-1.3) -- (-1.5,-1); 
\draw [-latex](-1.1,-1.3) -- (-1.1,-1); 
\node at (-1.7,-1.2) {$v_{\mathrm{d},i}$};
\node at (-0.9,-1.2) {$v_{\mathrm{q},i}$};
\draw [-latex](-0.3,-0.6) -- (-0.8,-0.6); 
\draw [-latex](-0.3,-0.8) -- (-0.8,-0.8);

\draw [-latex](0,-1.3) -- (0,-1); 
\draw [-latex](0.4,-1.3) -- (0.4,-1); 

\node at (-0.2,-1.2) {$i_{\mathrm{d},i}$};
\node at (0.6,-1.2) {$i_{\mathrm{q},i}$};

\draw [rounded corners = 3] (-0.3,-0.4) rectangle (0.7,-1);
\node at (0.2,-0.6) {voltage};
\node at (0.2,-0.8) {control}; 
\draw  [rounded corners=3,fill=backgroundblue!30](1.2,-0.4) rectangle (2.3,-1);
\node at (1.75,-0.6) {$\mathrm{q}$-$\mathrm{v}$ droop};
\node at (1.75,-0.8) {control};
\draw[-latex] (1.2,-0.6) -- (0.7,-0.6);
\draw [-latex](1.2,-0.8) -- (0.7,-0.8);
\node at (-0.5,-0.45) {$i_{\mathrm{cd},i}^\star$};
\node at (-0.5,-0.95) {$i_{\mathrm{cq},i}^\star$};
\node at (0.95,-0.45) {$v_{\mathrm{d},i}^\star$};
\node at (0.95,-0.95) {$v_{\mathrm{q},i}^\star$};
\draw [-latex](0,-0.1) -- (0,-0.4); 
\draw [-latex](0.4,-0.1) -- (0.4,-0.4);
\node at (-0.2,-0.2) {$v_{\mathrm{d},i}$};
\node at (0.6,-0.2) {$v_{\mathrm{q},i}$};
\draw [-latex](2.6,-0.7) -- (2.3,-0.7);

\node at (2.5,-0.55) {$q_i$};
\draw  (-1.2,1.9) rectangle (-0.8,1.5);
\node at (-1,1.7) {$\tfrac{1}{s}$};
\draw (-1.2,1.7) -- (-1.7,1.7) -- (-1.7,0.75) node (v2) {};
\draw [-latex] (-1.7,0.75) -- (-1.2,0.75);
\draw  [rounded corners = 3,fill=backgroundblue!30](-0.4,2) rectangle (0.7,1.4);
\draw[-latex] (-0.4,1.7) -- (-0.8,1.7);
\fill (-1.7,0.75) circle(0.2mm);
\node at (0.15,1.8) {$\mathrm{p}$-$\mathrm{f}$ droop};
\node at (0.15,1.6) {control};
\draw[-latex] (1,1.7) -- (0.7,1.7);
\node at (-1.4,1.85) {$\theta_i$};
\node at (-0.6,1.85) {$\omega_i$};
\node at (0.9,1.85) {$p_i$};
\draw [backgroundblue!200,rounded corners = 3,dashed] (-3.1,2.1) rectangle (2.8,-1.4);
\node [backgroundblue!200] at (2.1,1.9) {GFM control};
\draw  [rounded corners = 3,dashed](-3.1,2.2) rectangle (2.8,3.1);
\node at (2.2,2.9) {power part};
\draw (-2.2,2.7) to [inductor={yscale=1.2,xscale=0.8}](-0.4,2.7) node (v3) {};
\node at (-1.85,2.85) {$v_{\mathrm{cabc},i}$};
\fill (-1.85,2.7) circle(0.2mm);
\node at (-1.25,2.925) {$l_{\mathrm{f},i}$};
\node at (-0.55,2.85) {$v_{\mathrm{abc},i}$};

\draw [-latex,backgroundblue!200](-1.55,2.6) -- (-1,2.6);
\node at (-1.3,2.4) {$i_{\mathrm{cabc},i}$};
\draw[dashed]  (0.5,2.7) -- (1.65,2.7);
\node at (0.5,2.5) {node $i$}; 

\draw [-latex,backgroundblue!200](-0.15,2.8) -- (0.4,2.8);
\node at (0.05,2.93) {$i_{\mathrm{abc},i}$};

\draw (-0.45,2.7) -- (0.55,2.7);
\fill (-0.55,2.7) circle(0.2mm);
\draw (-0.55,2.7) -- (-0.55,2.52); 
\draw (-0.55,2.45) -- (-0.55,2.3); 
\draw (-0.6,2.3) -- (-0.5,2.3); 
\draw (-0.65,2.52) -- (-0.45,2.52); 
\draw (-0.65,2.45) -- (-0.45,2.45); 
\node at (-0.25,2.5) {$c_{\mathrm{f},i}$};
\node [roundnode] at (0.5,2.7) {};
\end{tikzpicture}

}
    \vspace{-9mm}
    \caption{Basic implementation of a three-phase GFM VSC.}
    \label{fig:converter_model}
    \vspace{-4mm}
\end{figure}

\subsection{Grid-Forming Converter Dynamics}
\cref{fig:converter_model} shows the implementation of a GFM three-phase VSC\footnote{A simple VSC model is used for illustration; the results extend to other architectures under the time-scale separation in \cref{rem:timescale}.} connected to the power network in \cref{fig:multi_conv_system}. The linearized small-signal dynamics (in the global per unit system) of the $i$-th VSC ($i\in \{1,...,n\}$) can be represented by the $2\times 2$ transfer matrix $D_i(s)$, mapping from active and reactive power injections $\Delta p_i(s)$ and $\Delta q_i(s)$ to the imposed frequency and voltage magnitude $\Delta \omega_i(s)=\Delta \delta_i(s)s$ and $\Delta |v|_i(s)$, i.e., 
\begin{align}\label{eq:local_conv_model}
    -\begin{bmatrix}
        \Delta \omega_i(s)\\ \Delta |v|_i(s)
    \end{bmatrix}= D_i(s)\begin{bmatrix}
        \Delta p_i(s)\\ \Delta q_i(s)
    \end{bmatrix}.
\end{align}

The most prevalent GFM control strategy is filtered droop control \cite{rocabert2012control}. Assuming time-scale separation as elaborated in Remark \ref{rem:timescale}, we can neglect the internal dynamics of the VSC (see Appendix \ref{appendix2} for a derivation) and consider $D_i(s)$ as in the following definition. Of course, all our developments will be evaluated on the full VSC model in \cref{sec:numerical_case_studies}.

\begin{definition}[Dynamic Small-Signal Converter Model]
The small-signal converter dynamics \cref{eq:local_conv_model} used to derive the stability certificates in \cref{sec:decentralized_stability_conditions} are given by the droop controller
    \begin{align}\label{eq:converter_model_final}
        D_i(s) = \begin{bmatrix}
            \tfrac{d_{\mathrm{p},i}}{\tau_{\mathrm{p},i}s+1}&0\\0&\tfrac{d_{\mathrm{q},i}}{\tau_{\mathrm{q},i}s+1}
        \end{bmatrix},
    \end{align}
    where $d_{\mathrm{p},i}\in\mathbb{R}_{>0}$ and $d_{\mathrm{q},i}\in\mathbb{R}_{>0}$ are the active and reactive power droop gains, and $\tau_{\mathrm{p},i}\in\mathbb{R}_{\geq0}$ and $\tau_{\mathrm{q},i}\in\mathbb{R}_{\geq0}$ are the active and reactive power low-pass filter time constants.
\end{definition}

Finally, if we extend \cref{eq:local_conv_model} to include the dynamics of all $n$ converters, we obtain the following $2n\times 2n$ transfer matrix
\begin{align}\label{eq:full_converter_polar}
     \hspace{-1.8mm}-\hspace{-0.3mm}\underset{\eqqcolon\,{\tiny\begin{bmatrix}
       \Delta \omega(s)\\\Delta |v|(s)
   \end{bmatrix}}}{\underbrace{
   \begin{bmatrix}
       \Delta \omega_1(s)\\\Delta {|v|}_{1} (s)\\ \vdots\\ \Delta \omega_n(s)\\\Delta {|v|}_{n} (s)
   \end{bmatrix}}} \hspace{-1.4mm}= \hspace{-1.4mm} \underset{\eqqcolon\,D(s)}{\underbrace{\begin{bmatrix}
       D_1(s) \hspace{-0.8mm}& \hspace{-0.8mm}0_{2\times 2} \hspace{-0.8mm}& \hspace{-0.8mm} \dots  \hspace{-0.8mm}& \hspace{-0.8mm} 0_{2\times 2}\\0_{2\times2}  \hspace{-0.8mm}&  \hspace{-0.8mm}D_2(s) \hspace{-0.8mm} &  \hspace{-0.8mm}\dots  \hspace{-0.8mm}&  \hspace{-0.8mm}0_{2\times 2}\\
       \vdots  \hspace{-0.8mm}&  \hspace{-0.8mm}\vdots
        \hspace{-0.8mm}& \hspace{-0.8mm}\ddots \hspace{-0.8mm}& \hspace{-0.8mm}\vdots\\0_{2\times 2} \hspace{-0.8mm}& \hspace{-0.8mm}0_{2\times 2} \hspace{-0.8mm}& \hspace{-0.8mm}\dots \hspace{-0.8mm}& \hspace{-0.8mm}D_n(s)
   \end{bmatrix}}}
    \underset{\eqqcolon\,{\tiny\begin{bmatrix}
       \Delta p(s)\\\Delta q(s)
   \end{bmatrix}}}{\underbrace{\begin{bmatrix}
       \Delta p_1(s)\\ \Delta q_1(s)\\\vdots\\ \Delta p_n(s)\\ \Delta q_n(s)
   \end{bmatrix}}} \hspace{-0.6mm}. \hspace{-1.5mm}
\end{align}

\begin{remark}\label{rem:timescale}
From a power electronics perspective, three distinct dynamic time scales are typically considered (see \cref{fig:timescales}):
\begin{itemize}
\item[(i)] fast inner VSC control loops,
\item[(ii)] intermediate network dynamics, and
\item[(iii)] slow outer VSC control loops.
\end{itemize}
These layers must be sufficiently separated in this order; otherwise, standard assumptions for control design and model reduction break down. Consistent with \cite{gross2022compensating,subotic2021lyapunov}, in this work, we assume that inner-loop and network dynamics are well separated, while a possible overlap between network and outer-loop dynamics is treated as a separate problem.
\end{remark}

\begin{remark}
In the considered small-signal regime, load sensitivities are neglected, and the Kron-reduced network is approximated solely by the RL transmission-network dynamics. Incorporating load sensitivities is left for future work.
\end{remark}

\begin{figure}[t!]
\vspace{-1mm}
    \centering
   \resizebox{0.43\textwidth}{!}{
\begin{tikzpicture}[scale=1, every node/.style={scale=0.5}]
\draw[fill=gray!10,color=gray!10] (0.7,1.65) rectangle (-4.3,2.6) node (v2) {};
\draw [fill=blue!10,color=backgroundblue!10](-4.3,2.6) node (v3) {} rectangle (0.7,3.75);
\draw[-latex] (-4.3,1.65) -- (0.7,1.65) node (v1) {};

\draw (-3.9,1.55) -- (-3.9,1.75);
\node at (-3.9,1.35) {1ms};
\draw (-2.2,1.55) -- (-2.2,1.75);
\draw(-0.5,1.55) -- (-0.5,1.75);
\node[rotate=90] at (-4.15,2.15) {physics};
\node [rotate=90] at (-4.15,3.1) {control};
\draw  [rounded corners=2, color=backgroundblue!20,fill=backgroundblue!20](-3.8,3.4) rectangle (-2.3,2.7);
\node at (-3.05,3.25) {inner converter};
\node at (-3.05,3.05) {control (current};
\node at (-3.05,2.85) {+ voltage loop)};
\draw [rounded corners=2, color=backgroundblue!20,fill=backgroundblue!20] (-1.5,3.4) rectangle (0.5,2.7);
\node at (-0.5,3.25) {outer converter};
\node at (-0.5,3.05) {control (active \&};
\node at (-0.5,2.85) {reactive power loops};
\draw [rounded corners=2, color=gray!20,fill=gray!20] (-2.1,2.5) rectangle (-0.7,1.9);
\node at (-1.4,2.3) {network};
\node at (-1.4,2.1) {dynamics};
\draw [dashed,rounded corners = 2] (-2.2,3.5) rectangle (0.6,1.8);
\node at (-0.8,3.65) {timescales of interest};
\draw[ultra thin] (-4.3,2.6) -- (0.7,2.6);
\node at (-2.2,1.35) {10ms};
\node at (-0.5,1.35) {100ms};
\end{tikzpicture}
}
    \vspace{-9mm}
    \caption{Time scales of interest for our stability study, adapted from \cite{subotic2021lyapunov}.}
        \label{fig:timescales}
            \vspace{-4mm}
\end{figure}

\subsection{Dynamic Power System Model}
The closed-loop power system dynamics are modeled as the feedback interconnection of the converter device dynamics in \cref{eq:full_converter_polar} and the network dynamics in \cref{eq:full_network_polar} as illustrated in \cref{fig:power_sys_model_orig}\footnote{An extension to heterogeneous device allocations with synchronous generators, grid-forming and -following converters will be part of future work.}. We focus on the stability of the bus frequency and voltage magnitudes, given by $\Delta \omega_i(s)=\Delta \delta_i(s)s$ and $\Delta |v|_i(s)$ for $i\in\{1,...,n\}$ and consider these quantities as interconnection signals between the device and network dynamics in \cref{fig:power_sys_model_orig}. Accordingly, the interconnected subsystems are defined as:
\begin{align}\label{eq:N_0_D_0}
\begin{split}
    \mathcal{D}_0(s) &\coloneqq D(s)\\
    \mathcal{N}_0(s) &\coloneqq N(s)\cdot\mathrm{diag}(\tfrac{1}{s},\tfrac{1}{|v|_{0,1}},\dots,\tfrac{1}{s},\tfrac{1}{|v|_{0,n}}).
\end{split}
\end{align}
In the next section, we establish internal feedback stability of $\mathcal{D}_0\#\mathcal{N}_0$ under certain decentralized parametric conditions. 

\begin{remark}
The proposed conditions apply in the small-signal regime around the nominal operating point and are therefore only valid for sufficiently small perturbations.
\end{remark}
\begin{figure}[t!]
\vspace{-1mm}
    \centering
   \resizebox{0.3\textwidth}{!}{
\tikzstyle{roundnode}=[circle,draw=black!60,fill=black!5,scale=0.7]
\begin{tikzpicture}[scale=1,every node/.style={scale=0.8}]
\draw [rounded corners = 3,fill=backgroundblue!30] (-0.3,4.3) rectangle (1.3,3.5);
\node [scale=1.2] at (0.5,3.9) {$\mathcal{D}_0(s)$};
\draw [rounded corners = 3,fill=backgroundblue!30] (-0.3,2.7) rectangle (1.3,1.9);
\node [scale=1.2] at (0.5,2.3) {$\mathcal{N}_0(s)$};
\draw [-latex](1.3,3.9) -- (2.5,3.9);
\draw (-0.3,2.3) -- (-1.1,2.3);

\draw[-latex] (1.8,2.3) -- (1.3,2.3);
\draw [-latex](2.6,2.3) -- (2,2.3); 
\draw [-latex](1.9,3.9) -- (1.9,2.4);
\node [roundnode] at (1.9,2.3) {};

\draw [-latex](-1,3.9) -- (-0.3,3.9);
\draw [-latex](-1.8,3.9) -- (-1.2,3.9); 
\draw [-latex](-1.1,2.3) node (v1) {} -- (-1.1,3.8);
\node [roundnode] at (-1.1,3.9) {};
\node at (-0.9,3.6) {$-$};
\node at (-1.8,4.35) {$\begin{bmatrix}\Delta p_\mathrm{d}\\ \Delta q_\mathrm{d}\end{bmatrix}$};
\node at (2.5,4.35) {$\begin{bmatrix} \Delta \omega\\\Delta |v|\end{bmatrix}$};
\node at (-0.65,4.35) {$\begin{bmatrix}\Delta p\\ \Delta q\end{bmatrix}$};
\node at (2.6,2.75) {$\begin{bmatrix} \Delta\omega_\mathrm{d} \\\Delta |v|_\mathrm{d}\end{bmatrix}$};
\node at (0.5,3.3) {device dynamics};
\node at (0.5,1.7) {network dynamics};
\node at (-1.05,4.35) {$-$};
\draw[-latex] (-1.1,2.3) -- (-1.8,2.3);
\node at (-1.8,2.75) {$\begin{bmatrix}\Delta p_\mathrm{e}\\ \Delta q_\mathrm{e}\end{bmatrix}$};
\end{tikzpicture}
}
    \vspace{-9mm}
    \caption{Closed-loop power system dynamics where $\Delta p_\mathrm{d}$ and  $\Delta q_\mathrm{d}$ are the active and reactive power disturbances, $\Delta \omega_\mathrm{d}$ is the frequency disturbance, and $\Delta |{v}|_{\mathrm{d}}$ is the voltage magnitude disturbance.}
    \vspace{-4mm}
        \label{fig:power_sys_model_orig}
\end{figure}

\renewcommand{\arraystretch}{1}
\begin{table*}[t!]\scriptsize
    \centering
    \vspace{-2mm}
    \begin{tabular}{c|c||c|c}
    \toprule
         $\hspace{-1mm}c_{i,\rho}\hspace{-1mm}$& Definition & $\hspace{-1mm}c_{i,\rho}\hspace{-1mm}$ & Definition\\ \hline
         $\hspace{-1mm}c_{1,\rho}\hspace{-1mm}$& $({1+\rho^2})/({5+2\rho^2})$&$c_{6,\rho}\hspace{-1mm}$& ${(1+\rho^2)}/({2\rho^2|v|_{\mathrm{max}}^2+5|v|_{\mathrm{max}}^2)}$ \\
         $\hspace{-1mm}c_{2,\rho}\hspace{-1mm}$& ${(1+\rho^2)^2}/({6\rho})$&$c_{7,\rho}$&${(\rho^2+1)^2}/({6\rho|v|_\mathrm{max}^2})$ \\
        $\hspace{-1mm}c_{3,\rho}\hspace{-1mm}$& $2\rho(1+\rho^2)$&$c_{8,\rho}$&$-2\rho|v|_\mathrm{max}^4$\\
        $\hspace{-1mm}c_{4,\rho}\hspace{-1mm}$& $((2\rho^2-5)+\sqrt{(5-2\rho^2)^2+16\rho^2})/4\rho$&$c_{9,\rho}$&$|v|_\mathrm{max}^2(2\rho^2-5)$\\$\hspace{-1mm}c_{5,\rho}\hspace{-1mm}$&${5(1+\rho^2)}/{4}$&\\
        \bottomrule
    \end{tabular}
    \caption{Definition of the quantities $c_{l,\rho}$ for $l=1,...,9$ in the conditions \cref{eq:p_conditions,eq:q_conditions} depending on the parameter $\rho$.}
    \vspace{-4mm}
    \label{tab:rho_parameters}
\end{table*}
\renewcommand{\arraystretch}{1}

\section{Decentralized Stability Conditions}\label{sec:decentralized_stability_conditions} In this section, we first introduce the decentralized stability conditions, which provide practical guidelines for local converter tuning and decentralized stability assessment. We then present our main result, which guarantees internal feedback stability of the closed-loop system, assuming the previously stated stability conditions are satisfied. Notably, these conditions naturally emerge from the proof of the main result.

\subsection{Decentralized Stability Conditions}
By applying dynamic loop-shifting techniques and passivity theory (see \cref{sec:proof_main_thm} for details), we derive parametric \emph{decentralized stability conditions} that serve as \emph{local tuning rules} for the dynamic droop control \cref{eq:converter_model_final} of each VSC $i \in \{1, \dots, n\}$, ensuring the internal feedback stability of $\mathcal{D}_0\#\mathcal{N}_0$ in \cref{fig:power_sys_model_orig}. The stability conditions for each VSC $i$ depend on
\begin{itemize}
    \item its own tunable local droop control parameters, namely, the droop gains $d_{\mathrm{p},i},\, d_{\mathrm{q},i}$ and the time constants $\tau_{\mathrm{p},i},\, \tau_{\mathrm{q},i}$,
    \item certain network parameters, including the susceptances of the neighboring lines $b_{ij}$, the resistance-to-inductance ratio $\rho$, and the maximum steady-state bus voltage magnitude $|v|_{\mathrm{max}}$. These parameters are either locally accessible or globally agreed upon, e.g., in grid codes.
\end{itemize}
Crucially, the conditions are local and entirely independent of the control parameters of other VSCs, making them directly applicable for scalable and decentralized stability assessment, as well as for local controller tuning and grid-code design. Beyond that, the decentralized stability certificates support \emph{heterogeneous} device-level controllers by allowing each converter to be designed with independent droop gains and time constants, without requiring uniform parameters across the system, as is the case in, e.g., \cite{gross2022compensating}.

The decentralized stability conditions for each VSC $i$ can be categorized into decoupled constraints on the active power-frequency droop control and the reactive power-voltage droop control. For each control scheme, these conditions can be visualized in either a two-dimensional (for fixed  $\rho$) or a three-dimensional (for variable $\rho$) coordinate system, as shown in \cref{fig:p_conditions_3D,fig:q_conditions_3D}. The coordinate axes represent scaled versions of the local droop gains and time constants (in 2D) and the global network parameter $\rho$ (in 3D), with typical value ranges. The resulting feasible parameter sets are indicated by green dots. As illustrated in \cref{fig:p_conditions_3D}, for small values of $\rho$, closed-loop stability is ensured when the droop gain $d_{\mathrm{p},i}$ and/or the self-susceptance $\textstyle\sum_{j\ne i}^n b_{ij}$ are sufficiently small, irrespective of the time constant $\tau_{\mathrm{p},i}$. Notably, this suggests potential instability issues in future transmission systems with a \emph{high} density of devices interconnected by \emph{short lines}, which correspond to large values of $\textstyle\sum_{j\ne i}^n b_{ij}$. Further, interpreting the filter time constant $\tau_{\mathrm{p},i}$ as virtual inertia reveals that increasing virtual inertia (i.e., larger $\tau_{\mathrm{p},i}$) does not necessarily enhance system stability. These findings align with the small-signal stability conditions derived for active power droop control in conjunction with SISO frequency dynamics in \cite{gross2022compensating}. 
 
Similarly, \cref{fig:q_conditions_3D} shows that closed-loop stability is guaranteed when the droop gain $d_{\mathrm{q},i}$ and/or the self-susceptance $\textstyle\sum_{j\ne i}^n b_{ij}$ are sufficiently small, provided that $\tau_{\mathrm{q},i}$ is nonzero. In particular, in contrast to $\tau_{\mathrm{p},i}$, a larger $\tau_{\mathrm{q},i}$ can be stabilizing. 

Finally, from the 3D plots in \cref{fig:p_conditions_3D,fig:q_conditions_3D}, it becomes apparent how an increasing $\rho$ allows for larger local droop gains of both the active and reactive power droop control.  

\begin{figure}[t!]
    \centering
\begin{subfigure}{0.2\textwidth}
    \centering
    \vspace{-1mm}
\scalebox{0.45}{\includegraphics[]{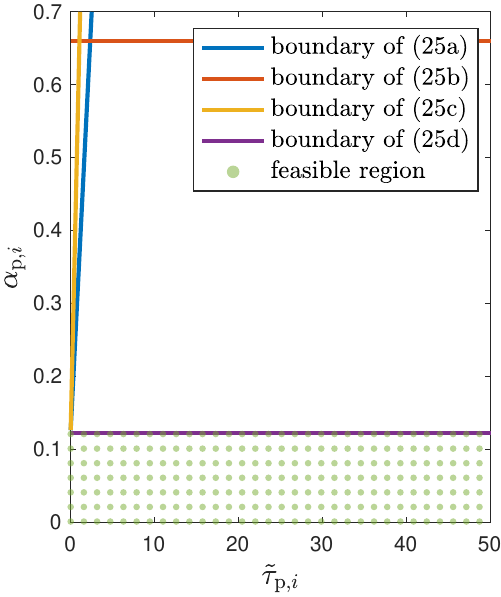}}
\vspace{-5mm}
    \caption{2D plot with $\rho = 0.3$.}
    \vspace{-1mm}
    \label{fig:2d_p}
\end{subfigure}
\hspace{0.2cm}
\begin{subfigure}{0.26\textwidth}
    \centering
    \vspace{-1mm}
\scalebox{0.45}{\includegraphics[]{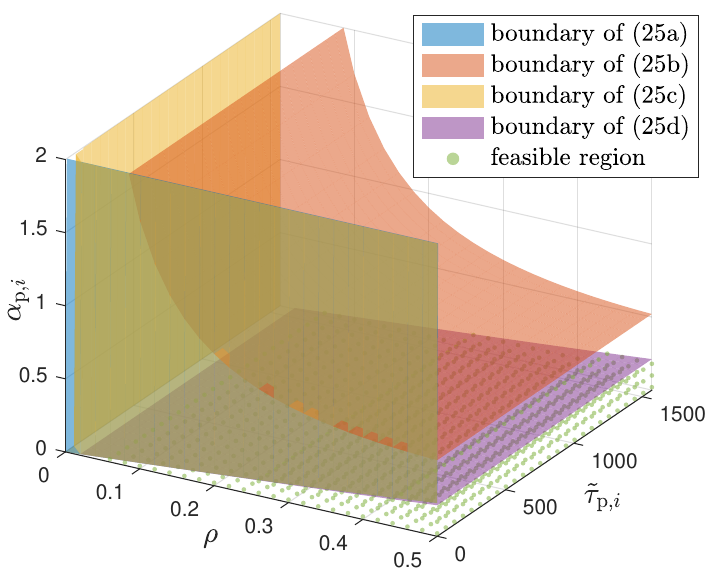}}
\vspace{-5mm}
    \caption{3D plot with variable $\rho$.}
    \vspace{-1mm}
    \label{fig:3d_p}
\end{subfigure}
\caption{Graphical illustration of the stability conditions in \cref{eq:p_conditions}, where $\alpha_{\mathrm{p},i}\hspace{-0.5mm}=\hspace{-0.5mm}\tfrac{d_{\mathrm{p},i}}{\omega_0}|v|_{\mathrm{max}}^2\sum_{j\ne i}^nb_{ij}$ and  $\tilde{\tau}_{\mathrm{p},i}\hspace{-0.5mm}=\hspace{-0.5mm}\tau_{\mathrm{p},i}\omega_0$ (in rad).}
\label{fig:p_conditions_3D}
\vspace{-2mm}
\end{figure}

\begin{figure}[t!]
    \centering
\begin{subfigure}{0.2\textwidth}
    \centering
\scalebox{0.43}{\includegraphics[]{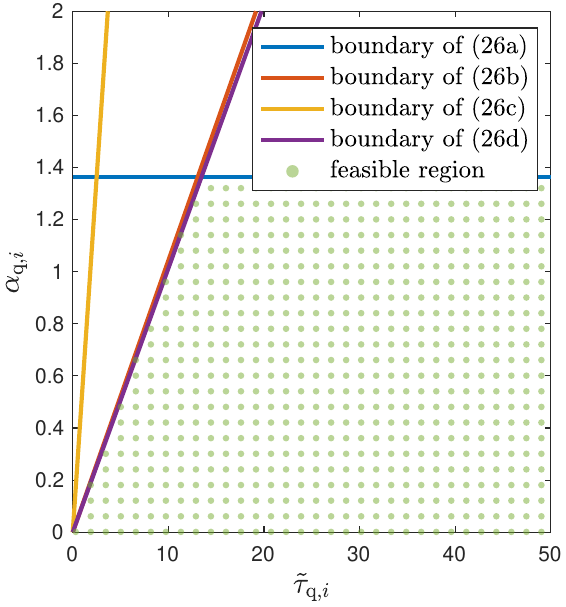}}
\vspace{-5mm}
    \caption{2D plot with $\rho = 0.3$.}
    \vspace{-1mm}
    \label{fig:2d_q}
\end{subfigure}
\hspace{0.25cm}
\begin{subfigure}{0.26\textwidth}
    \centering
\scalebox{0.44}{\includegraphics[]{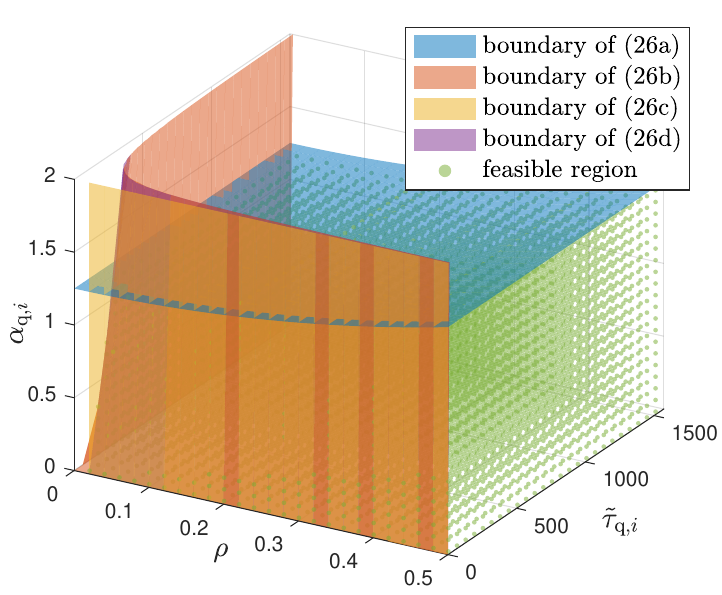}}
\vspace{-5mm}
    \caption{3D plot with variable $\rho$.}
    \vspace{-1mm}
    \label{fig:3d_q}
\end{subfigure}
\caption{Graphical illustration of the stability conditions in \cref{eq:q_conditions}, where $\alpha_{\mathrm{q},i}=\tfrac{d_{\mathrm{q},i}}{|v|_{0,i}}\sum_{j\ne i}^nb_{ij}$ and $\tilde{\tau}_{\mathrm{q},i}=\tau_{\mathrm{q},i}\omega_0$ (in rad).}
\label{fig:q_conditions_3D}
\vspace{-4mm}
\end{figure}

In what follows, we provide an algebraic parametrization of the conditions in \cref{fig:p_conditions_3D,fig:q_conditions_3D} (a detailed derivation is provided in \cref{sec:proof_main_thm}). Namely, for the active power-frequency droop control, we require
    \begin{subequations}\label{eq:p_conditions}
    \begin{align}
    \begin{split}\label{eq:condition1_p}
        \alpha_{\mathrm{p},i} &< c_{1\rho}(2\rho+\tilde{\tau}_{\mathrm{p},i}(1+\rho^2))
    \end{split}\\
     \begin{split}\label{eq:condition2_p}
        \alpha_{\mathrm{p},i} &< c_{2\rho}
         \end{split}\\
          \begin{split}\label{eq:condition3_p}
        \alpha_{\mathrm{p},i} &< c_{3\rho} \tfrac{\tilde{\tau}_{\mathrm{p},i}(\tilde{\tau}_{\mathrm{p},i}(1+\rho^2)+2\rho)+1}{4\tilde{\tau}_{\mathrm{p},i}\rho(1+\rho^2)+2\rho^2+5}
         \end{split}\\
          \begin{split}\label{eq:condition4_p}
        \alpha_{\mathrm{p},i} &< c_{4\rho},
        \end{split}
        \end{align}
    \end{subequations}
    where $\alpha_{\mathrm{p},i}\hspace{-0.5mm}=\hspace{-0.5mm}\tfrac{d_{\mathrm{p},i}}{\omega_0}|v|_{\mathrm{max}}^2\sum_{j\ne i}^nb_{ij}$, $\tilde{\tau}_{\mathrm{p},i}\hspace{-0.5mm}=\hspace{-0.5mm}\tau_{\mathrm{p},i}\omega_0$, and $c_{l\rho}$ for $l\hspace{-0.5mm}=\hspace{-0.5mm}\{1,2,3,4\}$ are quantities depending on $\rho$ as defined in \cref{tab:rho_parameters}. For the reactive power-voltage droop control, we require
\begin{subequations}\label{eq:q_conditions}
        \begin{align}
         \begin{split} \label{eq:c5_rho}
\alpha_{\mathrm{q},i} &< c_{5,\rho}
        \end{split}\\
\begin{split}\alpha_{\mathrm{q},i} &< c_{6,\rho}2\rho\tilde{\tau}_{\mathrm{q},i}
\end{split}\\
\begin{split}\alpha_{\mathrm{q},i}&< c_{7,\rho} \tilde{\tau}_{\mathrm{q},i}
        \end{split}\\
\begin{split}0&<\alpha_{\mathrm{q},i}^2c_{8,\rho}+\alpha_{\mathrm{q},i}\tilde{\tau}_{\mathrm{q},i}c_{9,\rho}+2\tilde{\tau}_{\mathrm{q},i}^2\rho
\end{split}
        \end{align}
    \end{subequations}
    where $\alpha_{\mathrm{q},i}=\tfrac{d_{\mathrm{q},i}}{|v|_{0,i}}\sum_{j\ne i}^nb_{ij}$, $\tilde{\tau}_{\mathrm{q},i}=\tau_{\mathrm{q},i}\omega_0$, and $c_{l\rho}$ for $l=\{5,...,14\}$ are quantities depending on $\rho$ as defined in \cref{tab:rho_parameters}.

    \begin{remark}[Plug-and-Play Operation] The conditions in \cref{eq:p_conditions,eq:q_conditions} are designed for the scenario where all $n$ converter nodes are populated, ensuring system-level stability during disconnection and re-connection of VSCs. Specifically, removing any VSC decreases the self susceptance $\textstyle\sum_{j\ne i}^n b_{ij}$ for each connected VSC $i$ after Kron reduction and reindexing, i.e., guaranteeing that \cref{eq:p_conditions,eq:q_conditions} remain locally satisfied. However, the guarantees hold only when the number of connected converters is at most $n$.
    \end{remark}

\subsection{Main Result}
The conditions \cref{eq:p_conditions} and \cref{eq:q_conditions} are derived in \cref{sec:proof_main_thm}, which proves our main result stated in the theorem below:
\begin{theorem}[Internal Feedback Stability of the Closed-Loop System]\label{thm:main_theorem}
Consider the device and network dynamics $\mathcal{D}_0(s)$ and $\mathcal{N}_0(s)$ in \cref{eq:N_0_D_0}, where the network model is given by \cref{eq:full_network_polar_matrix_blocks_level0}. Let the conditions in \cref{eq:p_conditions} and \cref{eq:q_conditions} hold for each VSC $i\in\{1,...,n\}$. Then, the closed-loop system $\mathcal{D}_0\#\mathcal{N}_0$ in \cref{fig:power_sys_model_orig} is internally feedback stable, i.e., $\mathcal{D}_0\#\mathcal{N}_0\in\mathcal{RH}_\infty^{4n\times 4n}$.
\end{theorem}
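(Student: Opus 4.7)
The plan is to establish \cref{thm:main_theorem} by applying the Passivity Theorem (\cref{thm:stabilty_passivity}) to a loop-shifted version of the feedback interconnection in \cref{fig:power_sys_model_orig}, in the spirit of \cref{fig:loop_shited_system}. Neither subsystem is passive on its own: the device block $\mathcal{D}_0$ is a block-diagonal real low-pass, but $\mathcal{N}_0$ inherits an integrator $1/s$ at every angle channel, and for $\rho>0$ the blocks in \cref{eq:full_network_polar_matrix_blocks_level0} are not positive real on the imaginary axis. I would therefore introduce a block-diagonal compensator $\Gamma(s) = \mathrm{diag}(\Gamma_1(s),\dots,\Gamma_n(s))$, with $\Gamma_i(s) \in \mathbb{C}^{2\times 2}$, mirroring the decentralized structure of $\mathcal{D}_0$. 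This choice guarantees that $\mathcal{D}_0' = (I-\mathcal{D}_0\Gamma)^{-1}\mathcal{D}_0$ is again block-diagonal (preserving decentralization of the device-level conditions) and that $\mathcal{N}_0' = \mathcal{N}_0 + \Gamma$ only receives local, bus-wise corrections.

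The first step is to pick each $\Gamma_i(s)$, tentatively diagonal to keep the $p$--$\omega$ and $q$--$v$ channels uncoupled, so that the strict-passivity requirement on $D_i'(s) = (I-D_i\Gamma_i)^{-1} D_i$ becomes a set of bounds on $(d_{\mathrm{p},i},\tau_{\mathrm{p},i})$ and $(d_{\mathrm{q},i},\tau_{\mathrm{q},i})$ separately. Because each $D_i'$ is $2\times 2$ diagonal and its scalar entries are Möbius transforms of first-order low-passes, strict passivity can be tested directly on the Hermitian part at every $\omega$ and contributes some of the inequalities in \cref{eq:p_conditions,eq:q_conditions}. The second and harder step is to prove passivity of $\mathcal{N}_0'(s)$. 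I would compute the Hermitian part $\mathcal{N}_0'(\mathrm{j}\omega)+\mathcal{N}_0'^\star(\mathrm{j}\omega)$ and exploit the Laplacian structure of $N(s)$ together with the symmetry preserved by the approximation \cref{eq:novel_approx}: the skew-symmetric rotational cross-coupling drops out of the Hermitian part, and the remaining off-diagonal entries are controlled by the neighbor susceptances $\sum_{j\ne i}b_{ij}$ and the steady-state voltages. By \cref{lemma:gershgorin} it then suffices to show that the Hermitian part is diagonally dominant with non-negative diagonal entries after the compensation by $\Gamma$, which is where the global parameters $\rho$ and $|v|_{\mathrm{max}}$ enter. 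Taking worst cases over $\omega$ in the resulting rational inequalities produces the parametric families \cref{eq:p_conditions,eq:q_conditions}, with the constants $c_{l,\rho}$ of \cref{tab:rho_parameters} arising from the critical frequencies at which each bound is active.

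The third step is bookkeeping: verify $\bar\sigma(\mathcal{D}_0'(\mathrm{j}\infty))\bar\sigma(\mathcal{N}_0'(\mathrm{j}\infty))<1$, which is immediate because $\mathcal{D}_0$ is strictly proper; confirm that there are no RHP pole--zero cancellations between $\mathcal{D}_0$ and $\mathcal{N}_0$; and then transfer stability of $\mathcal{D}_0'\#\mathcal{N}_0'$ back to $\mathcal{D}_0\#\mathcal{N}_0$ using the upper-left gang-of-four invariance discussed after \cref{lemma:rhp_cancellations}. The main obstacle, as anticipated, is step two: a single block-diagonal $\Gamma(s)$ must simultaneously render $\mathcal{D}_0'$ strictly passive and $\mathcal{N}_0'$ passive, with the compensation budget for the network tied to the very droop parameters that are already constrained by the device side. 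Reconciling these two pulls while keeping every inequality independent of neighbors' control parameters -- so that $\Gamma_i$ depends only on local data and on the global constants $\rho$ and $|v|_{\mathrm{max}}$ -- is what forces the specific four-inequality form of \cref{eq:p_conditions,eq:q_conditions} rather than a single cleaner bound, and is the technically delicate heart of the proof.
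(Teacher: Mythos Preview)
Your plan captures the overall architecture correctly (loop shifting plus \cref{thm:stabilty_passivity}, then \cref{lemma:rhp_cancellations}), but it skips the one step the paper's argument actually hinges on: the coordinate transformation \cref{eq:N_D}. You propose to shift $\mathcal{D}_0,\mathcal{N}_0$ directly. The difficulty is that $\mathcal{N}_0(s)=N(s)\,\mathrm{diag}(\tfrac{1}{s},\tfrac{1}{|v|_{0,1}},\dots)$ post-multiplies each $2\times 2$ block of $N(s)$ by an \emph{asymmetric} scaling. The $(s/\omega_0,\,-s/\omega_0)$ off-diagonals in \cref{eq:full_network_polar_matrix_blocks_level0} are skew-symmetric in $N(s)$, but after this asymmetric scaling the $(1,2)$ entry picks up $1/|v|_{0,i}$ while the $(2,1)$ entry picks up $1/s$, so at $s=\mathrm{j}\omega$ the cross-coupling is no longer skew-Hermitian and does \emph{not} drop out of $\mathcal{N}_0'+\mathcal{N}_0'^\star$. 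Your claimed Gershgorin argument then breaks: you cannot bound the off-diagonal row sums by the diagonals with a block-diagonal $\Gamma$. The paper first moves the $1/|v|_{0,i}$ into the device block and replaces it by $1/s$ on the network side, so that $\mathcal{N}(s)=N(s)\cdot\tfrac{1}{s}I$ is scaled symmetrically; only then does the Hermitian part of $\mathcal{N}'$ inherit the clean Laplacian-like structure that makes diagonal dominance work.

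This omission has two knock-on effects. First, the logic of where the inequalities come from is inverted: in the paper, $\Gamma$ is fixed \emph{entirely} from network data (\cref{eq:Gamma_parameters}) so that $\mathcal{N}'$ is passive unconditionally, and \emph{all} of \cref{eq:p_conditions,eq:q_conditions} then arise from the Hurwitz and positive-realness tests on the scalar entries $\mathcal{D}'_{\mathrm{p},i},\mathcal{D}'_{\mathrm{q},i}$; there is no ``reconciling two pulls'' or worst-case-over-$\omega$ trade-off on the network side. Second, the coordinate change introduces an $s$ in the voltage channel of $\mathcal{D}$ that cancels against the $1/s$ in $\mathcal{N}$, so there \emph{is} an imaginary-axis pole--zero cancellation between $\mathcal{D}$ and $\mathcal{N}$, and \cref{lemma:rhp_cancellations} alone is not enough. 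The paper needs the separate Final Value Theorem argument (\cref{lem:volt_deriv_volt}) to show that stability of $(I+\mathcal{D}\mathcal{N})^{-1}\mathcal{D}$ still implies stability of $(I+\mathcal{D}_0\mathcal{N}_0)^{-1}\mathcal{D}_0$, exploiting that $N_3^\mathcal{P}(0)$ is regular while $N_1^\mathcal{P}(0)$ is Laplacian. Your ``bookkeeping'' step three would not cover this.
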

\begin{corollary}[Simplified Network Models]\label{corr:special_cases}
For the quasi-stationary network dynamics $\mathcal{N}_0(s)$ with the network model in \cref{eq:full_network_polar_matrix_blocks_level1}, where $s = 0$ and $|v|_{0,i} \neq |v|_{0,j}$, the stability result of \cref{thm:main_theorem} holds under the relaxed algebraic conditions  
\begin{align}\label{eq:level1_conditions}
\begin{split}
    \alpha_{\mathrm{q},i} < \tfrac{5(1+\rho^2)}{4} = c_{5,\rho}\quad\text{and}\quad 0<\tau_{\mathrm{p},i}
\end{split}
\end{align}
for each VSC $i\in\{1,...,n\}$. Further, for the zero-power-flow network dynamics $\mathcal{N}_0(s)$ with the network model in \cref{eq:full_network_polar_matrix_blocks_level2}, where $s = 0$ and $|v|_{0,i} = |v|_{0,j}$, internal feedback stability is guaranteed without imposing any additional VSC conditions.
\end{corollary}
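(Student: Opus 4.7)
The plan is to re-use the dynamic loop-shifting plus passivity argument that proves \cref{thm:main_theorem}, and to systematically identify which inequalities in \cref{eq:p_conditions,eq:q_conditions} become vacuous once the network model is simplified. Both simplifications strip either $s$-dependent or voltage-asymmetric terms from $N(s)$, so the corresponding passivity constraints collapse.

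First I would handle the quasi-stationary case (\cref{def:level1}). Substituting $s=0$ in the dynamic terms of $N(s)$ reduces $\mathcal{N}_0(s)$ to $N_1\cdot\mathrm{diag}(1/s,1/|v|_{0,1},\dots,1/s,1/|v|_{0,n})$, with a constant real matrix $N_1$ whose $2\times 2$ blocks in \cref{eq:full_network_polar_matrix_blocks_level1} are block-diagonal (all $s/\omega_0$ off-diagonal couplings have vanished). Consequently, the Hermitian part of $\mathcal{N}_0(\mathrm{j}\omega)+\mathcal{N}_0^\star(\mathrm{j}\omega)$ decouples into a frequency block and a voltage block. Applying \cref{lemma:gershgorin} to the voltage block of $N_1$ yields diagonal dominance exactly under $\alpha_{\mathrm{q},i} < c_{5,\rho}$, which is the DC specialization of \cref{eq:q_conditions}; all remaining bounds in \cref{eq:p_conditions,eq:q_conditions} were introduced to counter the $s$-dependent cross-terms that are now absent from $N_1$, so they hold trivially. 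In the frequency block, the only concern is the integrator $1/s$ on the imaginary axis, and the strict positive-realness of the filter $d_{\mathrm{p},i}/(\tau_{\mathrm{p},i}s+1)$ with $\tau_{\mathrm{p},i}>0$ supplies sufficient phase for the loop-shifted passivity balance in \cref{thm:stabilty_passivity}, with no surviving algebraic bound on $\alpha_{\mathrm{p},i}$.

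Next I would handle the zero-power-flow case (\cref{def:level2}). Setting $|v|_{0,i}=|v|_0$ in \cref{eq:full_network_polar_matrix_blocks_level2} collapses $N(s)$ to the Kronecker product $B\otimes\tfrac{|v|_0^2}{1+\rho^2}I_2$ of the real symmetric Laplacian $B\succeq 0$ with the identity. Hence $\mathcal{N}_0(s)$ is passive in the sense of \cref{def:passivity}: its only imaginary-axis pole is a simple pole at $s=0$ from the $1/s$ integrators, with Hermitian positive semi-definite residue equal to a scaled frequency block of $B$. Combined with the strict passivity of $\mathcal{D}_0(s)$ and the trivial feed-through inequality $\bar{\sigma}(\mathcal{D}_0(\mathrm{j}\infty))\bar{\sigma}(\mathcal{N}_0(\mathrm{j}\infty))<1$ (as $\mathcal{N}_0(\mathrm{j}\infty)=0$), \cref{thm:stabilty_passivity} yields internal feedback stability with no restriction on $d_{\mathrm{p},i}$, $d_{\mathrm{q},i}$, $\tau_{\mathrm{p},i}$, $\tau_{\mathrm{q},i}$.

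The hard part will be verifying condition (iii) of \cref{def:passivity} for $\mathcal{N}_0$ at its imaginary-axis pole $s=0$, i.e., showing that the residue is Hermitian positive semi-definite and that no imaginary-axis pole--zero cancellation between $\mathcal{D}_0$ and $\mathcal{N}_0$ occurs. For Level 2 this is immediate from $B=B^\top\succeq 0$. For Level 1 it reduces to diagonal dominance of the DC reactive block of $N_1$, enforced precisely by $\alpha_{\mathrm{q},i}<c_{5,\rho}$, together with $\tau_{\mathrm{p},i}>0$, which guarantees that the droop filter has no DC zero that would cancel the $1/s$ integrator and thereby destroy internal stability via \cref{lemma:rhp_cancellations}.
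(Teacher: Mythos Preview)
Your Level~2 argument is essentially the paper's: direct passivity of $\mathcal{N}_0$ (Laplacian structure, simple pole at $s=0$ with semi-definite residue) combined with strict passivity of $\mathcal{D}_0$ and the trivial high-frequency bound, so \cref{thm:stabilty_passivity} applies without loop-shifting and without parameter restrictions.

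For Level~1, however, there is a genuine gap in how you locate the bound $\alpha_{\mathrm{q},i}<c_{5,\rho}$. You claim it arises from diagonal dominance of ``the voltage block of $N_1$'' via \cref{lemma:gershgorin}. But $N_1$ contains only network quantities ($b_{ij}$, $|v|_{0,i}$, $\rho$), whereas $\alpha_{\mathrm{q},i}=\tfrac{d_{\mathrm{q},i}}{|v|_{0,i}}\sum_{j\ne i}b_{ij}$ depends on the \emph{device} droop gain $d_{\mathrm{q},i}$; no diagonal-dominance inequality on $N_1$ alone can produce a constraint involving $d_{\mathrm{q},i}$. In fact, the voltage block of the Level~1 network is not diagonally dominant in general (take $|v|_{0,i}=|v|_\mathrm{min}$ with all neighbors at $|v|_\mathrm{max}$), so $\mathcal{N}_0$ is not passive without help. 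The paper therefore still performs the coordinate change to $(\mathcal{D},\mathcal{N})$ and a loop-shift, but with the reduced multiplier $\Gamma_i^\mathrm{p}=0$, $\Gamma_i^\mathrm{q}(s)=\tfrac{1}{s}\tilde{\gamma}_{3,i}^\mathrm{q}$ where $\tilde{\gamma}_{3,i}^\mathrm{q}\geq\tfrac{0.8}{1+\rho^2}\sum_{j\ne i}b_{ij}$; the residue of $\mathcal{N}'$ at $s=0$ then becomes diagonally dominant, and the bound $\alpha_{\mathrm{q},i}<c_{5,\rho}$ emerges on the \emph{device} side from the identification $\tfrac{d_{\mathrm{q},i}}{|v|_{0,i}}=\tfrac{1}{\tilde{\gamma}_{3,i}^\mathrm{q}}$ needed to cancel the zero at the origin in $\mathcal{D}'_{\mathrm{q},i}$.

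Your rationale for $\tau_{\mathrm{p},i}>0$ is also off: the filter $d_{\mathrm{p},i}/(\tau_{\mathrm{p},i}s+1)$ has no zeros, so there is no ``DC zero cancelling the $1/s$ integrator'' to worry about. In the paper's argument the positivity of the time constants is what makes the loop-shifted device block $\mathcal{D}'_i(s)=\mathrm{diag}\bigl(\tfrac{d_{\mathrm{p},i}}{\tau_{\mathrm{p},i}s+1},\,\tfrac{d_{\mathrm{q},i}}{\tau_{\mathrm{q},i}|v|_{0,i}}\bigr)$ well-defined and strictly passive.
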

The proof of \cref{corr:special_cases} is provided in Appendix \ref{appendix3} and follows a similar reasoning as the proof of \cref{thm:main_theorem} in \cref{sec:proof_main_thm}. It becomes apparent how the conditions in \cref{eq:p_conditions,eq:q_conditions} are a subset of the conditions in \cref{eq:level1_conditions}. The latter structurally align with the conditions derived in \cite{siahaan2024decentralized}.

\subsection{Proof of \cref{thm:main_theorem}}\label{sec:proof_main_thm}
A structural overview of the proof of \cref{thm:main_theorem} is presented in the flowchart in \cref{fig:proof_structure} and consists of four main steps:

\subsubsection*{I. Coordinate Transformation}
To preserve the symmetry of the network, we use the angle and normalized voltage derivatives, defined as $\Delta \omega_i(s)\coloneqq\Delta\delta_i(s)s$ and $\Delta|\tilde{v}|_{\mathrm{n},i}(s)\coloneqq\Delta |v|_{\mathrm{n},i}(s)s$, as interconnection signals. We then analyze the stability of the closed-loop system as in \cref{fig:power_sys_model}, where
\begin{align}\label{eq:N_D}
\begin{split}
    \mathcal{D}(s)&\coloneqq D(s)\cdot\mathrm{diag}(1,\tfrac{s}{|v|_{0,1}},\dots,1,\tfrac{s}{|v|_{0,n}})\\
    \mathcal{N}(s)&\coloneqq N(s)\cdot\mathrm{diag}(\tfrac{1}{s},\tfrac{1}{s},\dots,\tfrac{1}{s},\tfrac{1}{s}).
    \end{split}
\end{align}

\subsubsection*{II. Loop Shifting}
Given that $\mathcal{D}(s)$ and $\mathcal{N}(s)$ in \cref{fig:power_sys_model} do not satisfy the passivity conditions in \cref{thm:stabilty_passivity} (i.e., it can be shown that $\mathcal{N}(s)$ is not passive \cite{dey2022passivity,dey2022passivity2}), we resort to dynamic loop-shifting techniques as presented in \cref{sec:small_phase_passivity}. More specifically, we consider the loop-shifted system $\mathcal{D}'\#\mathcal{N}'$ in \cref{fig:loop_shifted_ND} with the subsystems $\mathcal{D}'(s)$ and $\mathcal{N}'(s)$ given as 
  \begin{align}\label{eq:N_dash_D_dash}
  \begin{split}
      \mathcal{D}'(s)&\coloneqq \mathcal{D}(s)(I-\Gamma(s)\mathcal{D}(s))^{-1}\\
      \mathcal{N}'(s)&\coloneqq \mathcal{N}(s)+\Gamma(s).
      \end{split}
  \end{align}
Note again that loop-shifting with $\Gamma(s)$ is a mathematical tool for establishing passivity properties and does not involve any hardware or impedance modifications of the actual network or converter system. Having said that, we choose $\Gamma(s)=\mathrm{diag}(\Gamma_1(s),...,\Gamma_n(s))$ as a block-diagonal semi-stable transfer matrix with the $2\times 2$ matrix elements 
\begin{align}\label{eq:Gamma_2x2}
    \Gamma_i(s)=\begin{bmatrix}
        \Gamma_{i}^\mathrm{p}(s)&0\\0&\Gamma_{i}^\mathrm{q}(s)
    \end{bmatrix},
\end{align}
which are selected to ensure passivity of $\mathcal{N}'(s)$. Specifically,  
\begin{align}\label{eq:Gamma_elements}
\begin{split}
    \Gamma_{i}^\mathrm{p}(s) &= \tfrac{1}{s}\left(
        \tfrac{\gamma_{1,i}^\mathrm{p}\omega_0^2s^2}{\omega_0^2+(\rho\omega_0+s)^2}+\tfrac{\gamma_{2,i}^\mathrm{p}\omega_0^2}{\omega_0^2+(\rho\omega_0+s)^2}+\gamma_{3,i}^\mathrm{p}\right)\\
        \Gamma_{i}^\mathrm{q}(s) &=\tfrac{1}{s}\left(\tfrac{\gamma_{1,i}^\mathrm{q}\omega_0^2s^2}{\omega_0^2+(\rho\omega_0+s)^2}+\tfrac{\gamma_{2,i}^\mathrm{q}\omega_0^2}{\omega_0^2+(\rho\omega_0+s)^2}+\gamma_{3,i}^\mathrm{q} \right)
\end{split}
\end{align}
where the parameters are selected as
\begin{align}\label{eq:Gamma_parameters}
\begin{split}
    \gamma_{1,i}^\mathrm{p} &= \gamma_{1,i}^\mathrm{q} =  \tfrac{2|v|_{\mathrm{max}}^2}{\omega_0^2}\textstyle\sum_{j\ne i}^nb_{ij} \\
    \gamma_{2,i}^\mathrm{p} &= \gamma_{2,i}^\mathrm{q} = -3|v|_{\mathrm{max}}^2\textstyle\sum_{j\ne i}^n b_{ij}\\
    \gamma_{3,i}^\mathrm{p} &= -\tfrac{\gamma_{2,i}^\mathrm{p}}{1+\rho^2}\\
    \gamma_{3,i}^\mathrm{q} &= -\tfrac{\gamma_{2,i}^\mathrm{q}}{1+\rho^2}+\tilde{\gamma}_{3,i}^\mathrm{q},
\end{split}
\end{align}
where $\tilde{\gamma}_{3,i}^\mathrm{q} \geq \tfrac{0.8}{1+\rho^2}\textstyle\sum_{j\ne i}^n b_{ij}$. The choice of the transfer function $\Gamma(s)$ is guided by structural insights into the dynamic network model in \cref{eq:full_network_polar_matrix_blocks_level0}, and algebraic considerations to ensure passivity of the loop-shifted system $\mathcal{N}'(s)$, as detailed below.
\begin{figure}[t!]
\centering
\resizebox{0.47\textwidth}{!}{
\begin{tikzpicture}[scale=1,every node/.style={scale=0.9}]
\draw  [top color=backgroundblue!40, bottom color=black!0,color=white](-1.6,3.5) node (v2) {} rectangle (8.2,-0.7);
\draw  [fill=black!0,color=black!0](-1.6,-0.7) rectangle (8.2,-2.5);
\draw [top color=black!0, bottom color=backgroundblue!40,color=white] (-1.6,-2.5) rectangle (8.2,-4.9);
\draw [rounded corners = 3,fill=backgroundblue!40](-1.4,3.25) rectangle (1.4,2.55);
\draw [rounded corners = 3,fill=backgroundblue!40] (-1.4,-3.95) rectangle (1.4,-4.65);
\draw [rounded corners = 3,fill=backgroundblue!40] (-1.4,1.45) rectangle (1.4,0.75);
\draw [rounded corners = 3,fill=black!0] (-1.4,-0.35) rectangle (1.4,-1.05);
\draw [rounded corners = 3,fill=black!0] (-1.4,-2.15) rectangle (1.4,-2.85);

\node at (0,2.9) {$\mathcal{D}_0\#\mathcal{N}_0$};
\node at (1.8,2.9) {\cref{eq:N_0_D_0}};
\node at (0,1.1) {$\mathcal{D}\#\mathcal{N}$};
\node at (1.8,1.1) {\cref{eq:N_D}};
\node at (0,-0.7) {$\mathcal{D}'\#\mathcal{N}'$};
\node at (1.8,-0.7) {\cref{eq:N_dash_D_dash}};
\node at (0,-2.5) {$\mathcal{D}'\hspace{-0.2mm}\#\mathcal{N}'\in\hspace{-0.7mm}\mathcal{RH}_\infty^{4n\times 4n}$};
\node at (0,-4.3) {$\mathcal{D}_0\#\mathcal{N}_0\hspace{-1mm}\in\hspace{-0.7mm}\mathcal{RH}_\infty^{4n\times 4n}$};

\draw[fill=gray!80] (-0.1,-1.3) node (v1) {} -- (-0.1,-1.7) -- (-0.2,-1.7) -- (0,-1.9) -- (0.2,-1.7) -- (0.1,-1.7) -- (0.1,-1.3) --(-0.1,-1.3);
\draw[fill=backgroundblue!200] (-0.1,0.5) node (v1) {} -- (-0.1,0.1) -- (-0.2,0.1) -- (0,-0.1) -- (0.2,0.1) -- (0.1,0.1) -- (0.1,0.5) --(-0.1,0.5);
\draw[fill=backgroundblue!200] (-0.1,-3.1) node (v1) {} -- (-0.1,-3.5) -- (-0.2,-3.5) -- (0,-3.7) -- (0.2,-3.5) -- (0.1,-3.5) -- (0.1,-3.1) --(-0.1,-3.1);
\draw[fill=backgroundblue!200] (-0.1,2.3) node (v1) {} -- (-0.1,1.9) -- (-0.2,1.9) -- (0,1.7) -- (0.2,1.9) -- (0.1,1.9) -- (0.1,2.3) --(-0.1,2.3);

\node at (4.4,2) {I. coordinate transformation};
\node at (4.4,0.2) {II. loop shifting with $\Gamma$};
\node at (4.4,-1.5) {III. passivity check of $\mathcal{N}'$, $\mathcal{D}'$};
\node at (4.4,-1.8) {\& Theorem 1};
\node at (4.4,-3.25) {IV. Final Value Theorem };
\node at (4.4,-3.55) {\& \cref{lemma:rhp_cancellations}};

\node[rotate=-90,color=backgroundblue!200] at (7.9,1.4) {preprocessing};
\node[rotate=-90,color=gray!80] at (7.9,-1.6) {stability};

\node [rotate=-90,color=gray!80] at (7.6,-1.6) {study};
\node[rotate=-90,color=backgroundblue!200] at (7.9,-3.7) {implications};
\draw  [dashed](-0.6,2.4) rectangle (7.3,1.6);
\draw [dashed] (-0.6,0.6) rectangle (7.3,-0.2);
\draw  [dashed](-0.6,-1.2) rectangle (7.3,-2);
\draw  [dashed](-0.6,-3) rectangle (7.3,-3.8);

\end{tikzpicture}
}
\vspace{-9mm}
\caption{Structural overview of the proof of \cref{thm:main_theorem}.}
\label{fig:proof_structure}
\vspace{-4mm}
\end{figure}

\subsubsection*{III. Passivity Checks \& \cref{thm:stabilty_passivity}}
For $\Gamma(s)$ selected as in \cref{eq:Gamma_2x2,eq:Gamma_elements,eq:Gamma_parameters}, we can show that $\mathcal{N}'(s)$ is passive, i.e., it satisfies the conditions (i) to (iii) in \cref{def:passivity}:

\textit{(i) Poles:} The poles of all elements of $\mathcal{N}'(s)$ are $p_1 = \mathrm{j}0$ and $p_{2,3}=-\rho\omega_0\pm \mathrm{j}\omega_0$, i.e., $\text{Re}(p_k)\leq0$ for $k\in\{1,2,3\}$.

\textit{(ii) Positive semi-definiteness:} We can express the Hermitian matrix $\mathcal{S}_{\mathcal{N}'}(\mathrm{j}\omega)\coloneqq\mathcal{N}'(\mathrm{j}\omega)+\mathcal{N}'^\star(\mathrm{j}\omega)$ as
    \begin{align}\label{eq:N_dash_N_dash_pos_semi_def}
    \begin{split}
\mathcal{S}_{\mathcal{N}'}(\mathrm{j}\omega)=\begin{bmatrix}
    \mathcal{S}_{\mathcal{N}',11}(\mathrm{j}\omega)&\dots&\mathcal{S}_{\mathcal{N}',1n}(\mathrm{j}\omega)\\\vdots&\ddots&\vdots\\\mathcal{S}_{\mathcal{N}',n1}(\mathrm{j}\omega)&\dots&\mathcal{S}_{\mathcal{N}',nn}(\mathrm{j}\omega)
\end{bmatrix},
        \end{split}
    \end{align}
    where each $\mathcal{S}_{\mathcal{N}',ij}$ represents a $2\times 2$ transfer matrix block. The diagonal and off-diagonal elements are given by
    \begin{align}
    \begin{split}
\hspace{-2mm}\mathcal{S}_{\mathcal{N}',ii}(\mathrm{j}\omega)&=h_\rho(\omega)\hspace{-0.5mm}\left(\hspace{-0.5mm}\textstyle\sum_{j\ne i}^n|v|_{0,i}^2b_{ij}\hspace{-0.5mm}\begin{bmatrix}
            -1&-\mathrm{j}\omega/\omega_0\\\mathrm{j}\omega/\omega_0&-1
        \end{bmatrix}\hspace{-0.5mm}+\hspace{-0.5mm}\right.\hspace{-0.5mm}\\
        &\quad\quad\quad\quad\quad\,\,\,\left.\begin{bmatrix}
            \omega^2\gamma_{1,i}^\mathrm{p}\hspace{-0.5mm}-\hspace{-0.5mm}\gamma_{2,i}^\mathrm{p} &0\\0&\omega^2\gamma_{1,i}^\mathrm{q}\hspace{-0.5mm} -\hspace{-0.5mm}\gamma_{2,i}^\mathrm{q} 
        \end{bmatrix}\right)\hspace{-2mm}\\
\hspace{-2mm}\mathcal{S}_{\mathcal{N}',ij}(\mathrm{j}\omega)&=h_\rho(\omega)|v|_{0,i}|v|_{0,j}b_{ij}\begin{bmatrix}
            1&\mathrm{j}\omega/\omega_0\\ -\mathrm{j}\omega/\omega_0&1
        \end{bmatrix}\hspace{-0.5mm}.
        \end{split}
    \end{align}
    Here, $h_\rho(\omega)$ is a positive function $\forall \omega\geq0$ defined as
    \begin{align}
        \hspace{-2mm}h_\rho(\omega)={4\rho\omega_0^3}/({(\omega_0^2\hspace{-0.5mm}+\hspace{-0.5mm}\rho^2\omega_0^2\hspace{-0.5mm}-\hspace{-0.5mm}\omega^2)^2\hspace{-0.5mm}+\hspace{-0.5mm}4\rho^2\omega_0^2\omega^2}) \geq 0.
    \end{align}
     We can observe that \cref{eq:N_dash_N_dash_pos_semi_def} is a Hermitian diagonally dominant matrix with real non-negative diagonal entries, i.e., the magnitude of the diagonal entry in a row is greater or equal to the sum of the magnitudes of the off-diagonal entries in that row. Namely, for the odd rows and $\forall \omega \geq 0$, we get
    \begin{align*}
    \begin{split}
        |\hspace{-0.5mm}\textstyle \sum_{j\ne i}^n\hspace{-0.6mm}-|v|_{0,i}^2b_{ij}\hspace{-0.5mm}+\hspace{-0.5mm}\omega^2\gamma_{1,i}^\mathrm{p}\hspace{-0.5mm}-\hspace{-0.5mm}\gamma_{2,i}^\mathrm{p}|\hspace{-0.6mm}&\geq\\
        |\hspace{-0.5mm}\textstyle\sum_{j\ne i}^n\hspace{-0.6mm}|v|_{0,i}^2b_{ij}\tfrac{-\mathrm{j}\omega}{\omega_0}|\hspace{-0.7mm}+\hspace{-0.7mm}\textstyle\sum_{j\ne i}^n\hspace{-0.3mm}|\hspace{-0.3mm}|v|_{0,i}&|v|_{0,j}b_{ij}|\hspace{-0.7mm}+\hspace{-0.7mm}\textstyle\sum_{j\ne i}^n\hspace{-0.6mm}|\hspace{-0.3mm}|v|_{0,i}|v|_{0,j}b_{ij}\tfrac{\mathrm{j}\omega}{\omega_0}|\hspace{-2mm}\\
        \Leftrightarrow\hspace{-0.5mm}
    |\hspace{-0.5mm}\textstyle \sum_{j\ne i}^n\hspace{-0.6mm}-|v|_\mathrm{max}^2b_{ij}\hspace{-0.5mm}+\hspace{-0.5mm}\omega^2\gamma_{1,i}^\mathrm{p}\hspace{-0.5mm}-\hspace{-0.5mm}\gamma_{2,i}^\mathrm{p} |\hspace{-0.6mm}&\geq\hspace{-0.6mm}\\
    2\textstyle\sum_{j\ne i}^n\hspace{-0.6mm}\hspace{-0.5mm}|v|_\mathrm{max}^2&b_{ij}\tfrac{\omega}{\omega_0}\hspace{-0.7mm}+\hspace{-0.7mm}\textstyle\sum_{j\ne i}^n\hspace{-0.6mm}\hspace{-0.5mm}|v|_\mathrm{max}^2b_{ij}\hspace{15mm}\\
   \Leftrightarrow\hspace{-0.5mm}\textstyle \sum_{j\ne i}^n\hspace{-0.6mm}|v|_\mathrm{max}^2b_{ij}(2+\tfrac{2\omega^2}{\omega_0^2})\hspace{-0.6mm}&\geq
    \textstyle\sum_{j\ne i}^n\hspace{-0.6mm}|v|_\mathrm{max}^2b_{ij}\hspace{-0.2mm}(1+\tfrac{2\omega}{\omega_0}),
        \end{split}
    \end{align*}
    and similarly, for each even row and $\forall \omega \geq 0$, we have
\begin{align*}
    \begin{split}
        \hspace{-2mm}|\hspace{-0.3mm}\textstyle \sum_{j\ne i}^n\hspace{-0.6mm}-|v|_{0,i}^2b_{ij}\hspace{-0.5mm}+\hspace{-0.5mm}\omega^2\gamma_{1,i}^\mathrm{q}\hspace{-0.5mm}-\hspace{-0.5mm}\gamma_{2,i}^\mathrm{q} |&\hspace{-0.6mm}\geq\hspace{-0.6mm} \\
        |\hspace{-0.6mm}\textstyle\sum_{j\ne i}^n\hspace{-0.6mm}|v|_{0,i}^2b_{ij}\tfrac{\mathrm{j}\omega}{\omega_0}|\hspace{-0.7mm}+\hspace{-0.7mm}\textstyle\sum_{j\ne i}^n\hspace{-0.3mm}|\hspace{-0.3mm}|v|_{0,i}&|v|_{0,j}b_{ij}\tfrac{-\mathrm{j}\omega}{\omega_0}\hspace{-0.3mm}|\hspace{-0.7mm}+\hspace{-0.7mm}\textstyle\sum_{j\ne i}^n\hspace{-0.6mm}|\hspace{-0.3mm}|v|_{0,i}|v|_{0,j}b_{ij}\hspace{-0.3mm}|\hspace{-0.7mm}\hspace{-2mm}\\
        \hspace{-2mm}\Leftrightarrow
    |\hspace{-0.3mm}\textstyle \sum_{j\ne i}^n\hspace{-0.6mm}-|v|_\mathrm{max}^2b_{ij}\hspace{-0.5mm}+\hspace{-0.5mm}\omega^2\gamma_{1,i}^\mathrm{q}\hspace{-0.5mm}-\hspace{-0.5mm}\gamma_{2,i}^\mathrm{q} |&\hspace{-0.6mm}\geq \hspace{-0.6mm}\\
    2\textstyle\sum_{j\ne i}^n\hspace{-0.6mm}|v|_\mathrm{max}^2&b_{ij}\tfrac{\omega}{\omega_0}\hspace{-0.7mm}+\hspace{-0.7mm}\textstyle\sum_{j\ne i}^n\hspace{-0.6mm}|v|_\mathrm{max}^2b_{ij}\\
    \hspace{-2mm}\Leftrightarrow\textstyle \sum_{j\ne i}^n\hspace{-0.6mm}|v|_\mathrm{max}^2b_{ij}(2+\tfrac{2\omega^2}{\omega_0^2})&\hspace{-0.6mm}\geq \hspace{-0.6mm}
    \textstyle\sum_{j\ne i}^n\hspace{-0.6mm}|v|_\mathrm{max}^2b_{ij}\hspace{-0.2mm}(1+\tfrac{2\omega}{\omega_0}).
        \end{split}
    \end{align*}
    By \cref{lemma:gershgorin}, we can conclude $\mathcal{N}'(\mathrm{j}\omega)+\mathcal{N}'^\star(\mathrm{j}\omega)\succeq 0$.
    
\textit{(iii) Imaginary poles:} For $\rho\ne0$, $\mathcal{N}'(s)$ has one imaginary pole, i.e., $p_1=\mathrm{j}0$, which is a simple pole. We therefore compute the limit $\mathcal{R}_{\mathrm{j}0}^{\mathcal{N}'}\coloneqq\text{lim}_{s\rightarrow \mathrm{j}0}(s-\mathrm{j}0)\mathcal{N}'(s)$, where each $\mathcal{R}_{\mathrm{j}0,ij}^{\mathcal{N}'}$ represents a $2\times2$ transfer matrix block. The diagonal and off-diagonal elements are given by
\begin{align}\label{eq:residue_N}
\begin{split}
    \hspace{-2mm}\mathcal{R}_{\mathrm{j}0,ii}^{\mathcal{N}'} &=\textstyle\sum_{j\ne i}^nb_{ij}\begin{bmatrix}
           \tfrac{|v|_{0,i}|v|_{0,j}}{1+\rho^2}&0\\0&\tfrac{2|v|_{0,i}^2-|v|_{0,i}|v|_{0,j}}{1+\rho^2}
        \end{bmatrix}+\\
        &\quad\quad\quad\quad\quad\quad\quad\quad\,\,\,\begin{bmatrix}
            \tfrac{\gamma_{2,i}^\mathrm{p}}{1+\rho^2}\hspace{-0.5mm}+\hspace{-0.5mm}\gamma_{3,i}^\mathrm{p} &0\\0&\tfrac{\gamma_{2,i}^\mathrm{q}}{1+\rho^2}\hspace{-0.5mm} +\hspace{-0.5mm}\gamma_{3,i}^\mathrm{q} 
        \end{bmatrix}\hspace{-2mm}\\
        \hspace{-2mm}\mathcal{R}_{\mathrm{j}0,ij}^{\mathcal{N}'} &=-b_{ij}\tfrac{|v|_{0,i}|v|_{0,j}}{1+\rho^2}\begin{bmatrix}
           1&0\\0&1
        \end{bmatrix}.
\end{split}
\end{align}
Again, \cref{eq:residue_N} is a Hermitian diagonally dominant matrix with real non-negative diagonal entries. For the odd rows, we get
\begin{align*}
    \begin{split}
         \hspace{-2mm}|\hspace{-0.5mm}\textstyle \sum_{j\ne i}^n\hspace{-0.6mm}b_{ij}\tfrac{|v|_{0,i}|v|_{0,j}}{1+\rho^2}\hspace{-0.5mm}+\hspace{-0.5mm}\tfrac{\gamma_{2,i}^\mathrm{p}}{1+\rho^2}\hspace{-0.5mm}+\hspace{-0.5mm}\gamma_{3,i}^\mathrm{p}|&\hspace{-0.6mm}\geq\hspace{-0.6mm} \textstyle\sum_{j\ne i}^n\hspace{-0.3mm}|\hspace{-0.3mm}-b_{ij}\tfrac{|v|_{0,i}|v|_{0,j}}{1+\rho^2}|\hspace{-0.7mm}\\
         \Leftrightarrow \textstyle \sum_{j\ne i}^n\hspace{-0.6mm}b_{ij}\tfrac{|v|_{0,i}|v|_{0,j}}{1+\rho^2}&\hspace{-0.6mm}\geq\hspace{-0.6mm}\textstyle \sum_{j\ne i}^n\hspace{-0.6mm}b_{ij}\tfrac{|v|_{0,i}|v|_{0,j}}{1+\rho^2}.
    \end{split}
\end{align*}
For the even rows we get (with $|v|_\mathrm{max}\hspace{-0.5mm}=\hspace{-0.5mm}1.1$ and $|v|_\mathrm{min}\hspace{-0.5mm}=\hspace{-0.5mm}0.9$):
\begin{align*}
    \begin{split}
        \hspace{-2mm}|\hspace{-0.5mm}\textstyle \sum_{j\ne i}^n\hspace{-0.6mm}\tfrac{(2|v|_{0,i}^2-|v|_{0,i}|v|_{0,j})b_{ij}}{1+\rho^2}\hspace{-0.5mm}+\hspace{-0.5mm}\tfrac{\gamma_{2,i}^\mathrm{q}}{1+\rho^2}\hspace{-0.5mm}+\hspace{-0.5mm}\gamma_{3,i}^\mathrm{q}|&\hspace{-0.6mm}\geq\hspace{-0.6mm} \textstyle\sum_{j\ne i}^n\hspace{-0.3mm}|b_{ij}\tfrac{-|v|_{0,i}|v|_{0,j}}{1+\rho^2}|\hspace{-0.7mm}\\
    \Leftrightarrow|\hspace{-0.5mm}\textstyle \sum_{j\ne i}^n\hspace{-0.6mm}\tfrac{(2|v|_{0,i}^2-|v|_{0,i}|v|_{0,j})b_{ij}}{1+\rho^2}+\tilde{\gamma}_{3,i}^\mathrm{q}|&\hspace{-0.6mm}\geq\hspace{-0.6mm} \textstyle\sum_{j\ne i}^n\hspace{-0.3mm}|b_{ij}\tfrac{-|v|_{0,i}|v|_{0,j}}{1+\rho^2}|\hspace{-0.7mm}\\
        \Leftrightarrow\textstyle \sum_{j\ne i}^n\hspace{-0.6mm}b_{ij}\tfrac{2|v|_{\mathrm{min}}^2-|v|_{\mathrm{max}}^2}{1+\rho^2}+\tilde{\gamma}_{3,i}^\mathrm{q}&\hspace{-0.6mm}\geq\hspace{-0.6mm} \textstyle\sum_{j\ne i}^n\hspace{-0.3mm}\hspace{-0.3mm}b_{ij}\tfrac{|v|_{\mathrm{max}^2}}{1+\rho^2}\hspace{-0.7mm}
    \end{split}
\end{align*}
By \cref{lemma:gershgorin}, we conclude $\mathcal{R}_{\mathrm{j}0}^{\mathcal{N}'}\succeq 0$.
  \begin{figure}[t!]
    \centering
    \vspace{-1mm}
    \begin{subfigure}{0.21\textwidth}
    \centering
     \resizebox{1.1\textwidth}{!}{
\tikzstyle{roundnode}=[circle,draw=black!60,fill=black!5,scale=0.7]
\begin{tikzpicture}[scale=1,every node/.style={scale=0.8}]
\draw [rounded corners = 3,fill=backgroundblue!30] (-0.3,4.3) rectangle (1.3,3.5);
\node [scale=1.2] at (0.5,3.9) {$\mathcal{D}(s)$};
\draw [rounded corners = 3,fill=backgroundblue!30] (-0.3,2.7) rectangle (1.3,1.9);
\node [scale=1.2] at (0.5,2.3) {$\mathcal{N}(s)$};
\draw [-latex](1.3,3.9) -- (2.5,3.9);
\draw (-0.3,2.3) -- (-1.1,2.3);

\draw[-latex] (1.8,2.3) -- (1.3,2.3);
\draw [-latex](2.6,2.3) -- (2,2.3); 
\draw [-latex](1.9,3.9) -- (1.9,2.4);
\node [roundnode] at (1.9,2.3) {};

\draw [-latex](-1,3.9) -- (-0.3,3.9);
\draw [-latex](-1.8,3.9) -- (-1.2,3.9); 
\draw [-latex](-1.1,2.3) node (v1) {} -- (-1.1,3.8);
\node [roundnode] at (-1.1,3.9) {};
\node at (-0.9,3.6) {$-$};
\node at (-1.8,4.35) {$\begin{bmatrix}\Delta p_\mathrm{d}\\ \Delta q_\mathrm{d}\end{bmatrix}$};
\node at (2.5,4.35) {$\begin{bmatrix} \Delta \omega\\\Delta |\tilde{v}|_\mathrm{n}\end{bmatrix}$};
\node at (-0.65,4.35) {$\begin{bmatrix}\Delta p\\ \Delta q\end{bmatrix}$};
\node at (2.6,2.75) {$\begin{bmatrix} \Delta\omega_\mathrm{d} \\\Delta |\tilde{v}|_\mathrm{nd}\end{bmatrix}$};
\node at (0.5,3.3) {device dynamics};
\node at (0.5,1.7) {network dynamics};
\node at (-1.05,4.35) {$-$};
\draw[-latex](-1.1,2.3) -- (-1.8,2.3);
\node at (-1.8,2.75) {$\begin{bmatrix}\Delta p_\mathrm{e}\\ \Delta q_\mathrm{e}\end{bmatrix}$};
\end{tikzpicture}
}
        \vspace{-8.5mm}
    \caption{Original feedback system.}
    \label{fig:power_sys_model}
\end{subfigure}
\hspace{0.05cm}
\vspace{-1mm}
\begin{subfigure}{0.24\textwidth}
\centering
\resizebox{1.1\textwidth}{!}{
\tikzstyle{roundnode}=[circle,draw=black!60,fill=black!5,scale=0.75]
\begin{tikzpicture}[scale=1,every node/.style={scale=0.8}]
\draw [rounded corners = 3, dashed,fill=backgroundblue!10] (2,3.7) rectangle (-1.1,1.6);
\draw [rounded corners = 3, dashed, fill=backgroundblue!10] (2,4.4) rectangle (-1.1,6.5);
\draw [rounded corners = 3,fill=backgroundblue!30] (-0.3,6.2) rectangle (1.3,5.4);
\draw [rounded corners = 3,fill=backgroundblue!30] (-0.3,2.5) rectangle (1.3,1.7);
\node [scale=1.2] at (0.5,5.8) {$\mathcal{D}(s)$};
\node [backgroundblue!200,scale=1.2] at (1.65,6.15) {$\mathcal{D}'$};
\node [scale=1.2] at (0.5,2.1) {$\mathcal{N}(s)$};
\node [backgroundblue!200,scale=1.2] at (1.65,3.35) {$\mathcal{N}'$};
\draw [-latex](1.3,5.8) -- (3,5.8);
\draw (-0.9,2.1) -- (-1.4,2.1);

\draw[-latex] (2.2,2.1) -- (1.3,2.1);

\draw [-latex](2.3,5.8) -- (2.3,2.2);

\draw [-latex](-1.3,5.8) -- (-0.9,5.8);
\draw [-latex](-2.1,5.8) -- (-1.5,5.8); 
\draw [-latex](-1.4,2.1) node (v2) {} -- (-1.4,5.7);
\node [roundnode] at (-1.4,5.8) {};
\node at (-1.25,5.5) {$-$};

\draw [rounded corners = 3, fill = black!20] (-0.3,3.4) rectangle (1.3,2.6);
\node [scale=1.2] at (0.5,3) {$\Gamma(s)$};
\draw [rounded corners = 3, fill = black!20] (-0.3,5.3) rectangle (1.3,4.5);
\node [scale=1.2] at (0.5,4.9) {$\Gamma(s)$};
\draw [-latex](1.8,2.1) -- (1.8,3) -- (1.3,3);
\draw [-latex](-0.3,2.1) -- (-0.7,2.1); 
\draw [-latex](-0.3,3) -- (-0.8,3) -- (-0.8,2.2);
\node [roundnode] at (-0.8,2.1) {}; 

\node[roundnode] at (-0.8,5.8) {}; 
\draw [-latex](-0.7,5.8) -- (-0.3,5.8);
\draw [-latex](-0.3,4.9) -- (-0.8,4.9) -- (-0.8,5.7);
\draw[-latex] (1.8,5.8) -- (1.8,4.9) -- (1.3,4.9);

\node at (-2.1,6.25) {$\begin{bmatrix}\Delta p_\mathrm{d}\\ \Delta q_\mathrm{d}\end{bmatrix}$};
\node at (3,6.25) {$\begin{bmatrix} \Delta \omega\\\Delta |\tilde{v}|_\mathrm{n}\end{bmatrix}$};
\node at (0.5,4.2) {device dynamics};
\node at (0.5,1.4) {network dynamics};
\node [roundnode] (v1) at (2.3,2.1) {};

\draw [-latex](3,2.1) --  (2.4,2.1);
\draw [-latex](-1.4,2.1) -- (-2.1,2.1);
\node at (-2.1,2.55) {$\begin{bmatrix}\Delta p_\mathrm{e}\\ \Delta q_\mathrm{e}\end{bmatrix}$};
\node at (3,2.55) {$\begin{bmatrix} \Delta\omega_\mathrm{d} \\\Delta |\tilde{v}|_\mathrm{nd}\end{bmatrix}$};
\end{tikzpicture}
}
    \vspace{-8.5mm}
    \caption{Loop-shifting with $\Gamma$.}
    \label{fig:loop_shifted_ND}
    \end{subfigure}
    \caption{\footnotesize Closed-loop feedback interconnection for stability analysis.}
    \label{fig:loop_shifting_ND}
    \vspace{-4mm}
\end{figure}

Next, we derive the decentralized stability conditions in \cref{eq:p_conditions,eq:q_conditions} under which $\mathcal{D}'(s)$ is strictly passive. We consider $\mathcal{D}'(s)=\text{diag}(\mathcal{D}_1'(s),\dots,\mathcal{D}_n'(s))$ with the matrix elements
\begin{align}
    \begin{split}
        \hspace{-1mm}\mathcal{D}_i'(s)\hspace{-0.5mm}=\hspace{-0.5mm}\mathcal{D}_i(s)(I\hspace{-0.5mm}-\hspace{-0.5mm}\Gamma_i(s)\mathcal{D}_i(s))^{-1}\hspace{-0.5mm}=\hspace{-0.5mm}\begin{bmatrix}
            \mathcal{D}_{\mathrm{p},i}'(s)&0\\0&\mathcal{D}'_{\mathrm{q},i}(s)
        \end{bmatrix}\hspace{-1mm},\hspace{-1mm}
    \end{split}
\end{align}
where the diagonal transfer function elements are given as
\begin{subequations}
\begin{align}
\begin{split}\label{eq:D_dash_p}
    \mathcal{D}_{\mathrm{p},i}'(s)&=\tfrac{d_{\mathrm{p},i}}{\tau_{\mathrm{p},i}s+1-\Gamma_i^\mathrm{p}(s)d_{\mathrm{p},i}}
    \end{split}\\
    \begin{split}\label{eq:D_dash_q}
   \mathcal{D}_{\mathrm{q},i}'(s)&= \tfrac{\tfrac{d_{\mathrm{q},i}s}{|v|_{0,i}}}{\tau_{\mathrm{q},i}s+1-\Gamma_i^\mathrm{q}(s)\tfrac{d_{\mathrm{q},i}s}{|v|_{0,i}}},
\end{split}
\end{align}
\end{subequations}
for each of which we check the strict passivity conditions (i) and (ii) in \cref{def:strict_passivity} independently. We start with $\mathcal{D}_{\mathrm{p},i}'(s)$ and insert the expression for $\Gamma_i^\mathrm{p}(s)$ in \cref{eq:Gamma_elements} into \cref{eq:D_dash_p}, i.e.,
\begin{align}
    \mathcal{D}_{\mathrm{p},i}'(s) = d_{\mathrm{p},i}\tfrac{s^2 b_{2,i}+sb_{1,i}+b_{0,i}}{s^3a_{3,i}+s^2a_{2,i}+sa_{1,i}+a_{0,i}},
\end{align}
where the transfer function coefficients are given by
\begin{align}
    \begin{split}
        \hspace{-1mm}a_{0,i} &\hspace{-0.5mm}=\hspace{-0.5mm}\omega_0^2(1\hspace{-0.5mm}+\hspace{-0.5mm}\rho^2)-2\rho\omega_0d_{\mathrm{p},i}\gamma_{3,i}^\mathrm{p}\\
        \hspace{-1mm}a_{1,i} &\hspace{-0.5mm}= 2\rho\omega_0+\tau_{\mathrm{p,i}}\omega_0^2(1\hspace{-0.5mm}+\hspace{-0.5mm}\rho^2)\hspace{-0.5mm}\\&\hspace{-0.5mm}-d_{\mathrm{p},i}\omega_0^2\gamma_{1,i}^\mathrm{p}-d_{\mathrm{p},i}\gamma_{3,i}^\mathrm{p}\\
        \hspace{-1mm}a_{2,i} &\hspace{-0.5mm}= \hspace{-0.5mm}2\omega_0{\tau}_{\mathrm{p},i}\rho\hspace{-0.5mm}+\hspace{-0.5mm}1\\
        \hspace{-1mm}a_{3,i} &\hspace{-0.5mm}=\hspace{-0.5mm}  \tau_{\mathrm{p},i}
        \end{split}
        \begin{split}
        b_{0,i} &\hspace{-0.5mm}=\hspace{-0.5mm} \omega_0^2(1\hspace{-0.5mm}+\hspace{-0.5mm}\rho^2)\hspace{-1mm}\\
        b_{1,i} &\hspace{-0.5mm}=\hspace{-0.5mm} 2\rho\omega_0\hspace{-1mm}\\
        b_{2,i} &\hspace{-0.5mm}=\hspace{-0.5mm} 1.
    \end{split}
\end{align}

\textit{(i) Poles:} To show that the poles of all elements of $\mathcal{D}_{\mathrm{p},i}'(s)$ are in $\text{Re}(s)<0$, we check the Hurwitz criterion for a 3rd order polynomial, and conclude that $a_{3,i}>0$ and $a_{2,i}>0$ are always satisfied, while $a_{1,i}>0$ is ensured by \cref{eq:condition1_p}, $a_{0,i}>0$ by \cref{eq:condition2_p}, and $a_{2,i}a_{1,i}>a_{0,i}a_{3,i}$ by \cref{eq:condition3_p}.

\textit{(ii) Positive definiteness:} To ensure $\mathcal{D}_{\mathrm{p},i}'(\mathrm{j}\omega)+\mathcal{D}'^\star_{\mathrm{p},i}(\mathrm{j}\omega)\succ 0,\,\forall \omega \in (-\infty,\infty)$, we require the additional condition \cref{eq:condition4_p}.

Following the same reasoning, we derive the decentralized stability conditions for the reactive power-voltage droop control in \cref{eq:q_conditions} by checking strict passivity of $\mathcal{D}_{\mathrm{q},i}'(s)$. To do so, we select $\tfrac{d_{\mathrm{q},i}}{|v|_{0,i}}=\tfrac{1}{\tilde{\gamma}_{3,i}^\mathrm{q}}$ to cancel the zero at the origin, and ensure strict passivity of the minimal realization of $\mathcal{D}_{\mathrm{q},i}'(s)$.

Since $\mathcal{N}'(s)$ is passive and $\mathcal{D}'(s)$ is strictly passive, and $\bar{\sigma}(\mathcal{N}'(\mathrm{j}\infty))\bar{\sigma}(\mathcal{D}'(\mathrm{j}\infty))<1$ (because $\mathcal{N}'(\mathrm{j}\infty)=0_{2n\times 2n}$), we can apply \cref{thm:stabilty_passivity} and conclude $\mathcal{D}'\#\mathcal{N}'\in\mathcal{RH}_\infty^{4n\times 4n}$.

\subsubsection*{IV. Final Value Theorem \& \cref{lemma:rhp_cancellations}}
Internal feedback stability of the original system $\mathcal{D}_0\#\mathcal{N}_0$ follows directly from internal feedback stability of the minimal realization of all four closed-loop transfer functions of $\mathcal{D}'\#\mathcal{N}'$:

\begin{lemma}\label{lem:volt_deriv_volt}
  Consider the device and network dynamics $\mathcal{D}'(s)$ and $\mathcal{N}'(s)$ in \cref{eq:N_dash_D_dash}, as well as $\mathcal{D}_0(s)$ and $\mathcal{N}_0(s)$  in \cref{eq:N_0_D_0} with the network model \cref{eq:full_network_polar_matrix_blocks_level0}. Then, internal feedback stability of the minimal realization of all four closed-loop transfer functions of the loop-shifted system $\mathcal{D}'\#\mathcal{N}'$ implies internal feedback stability of $\mathcal{D}_0\#\mathcal{N}_0$.
\end{lemma}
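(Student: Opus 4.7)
The plan is to reverse-engineer, in order, the two preprocessing transformations in the proof flowchart of \cref{fig:proof_structure}: first undo the loop-shifting by $\Gamma$, and then undo the coordinate change that maps $\mathcal{D}_0\#\mathcal{N}_0$ to $\mathcal{D}\#\mathcal{N}$. To this end, introduce the diagonal multiplier $\Lambda(s)\coloneqq\mathrm{diag}(1,s/|v|_{0,1},\dots,1,s/|v|_{0,n})$ induced by the coordinate change in Step~I. From \cref{eq:N_0_D_0,eq:N_D} one reads off the factorizations $\mathcal{D}(s)=\mathcal{D}_0(s)\Lambda(s)$ and $\mathcal{N}_0(s)=\mathcal{N}(s)\Lambda(s)$. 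Crucially, $\mathcal{D}_0(s)$ and $\Lambda(s)$ are both block-diagonal with diagonal $2\times 2$ blocks at each node, so they commute, which is what makes the algebra below tractable.

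\textbf{Step 1 (undoing the loop-shift).} As recalled in \cref{sec:small_phase_passivity}, loop-shifting with $\Gamma(s)$ leaves the upper-left closed-loop transfer matrix invariant, i.e.,
\begin{align*}
(I+\mathcal{D}'\mathcal{N}')^{-1}\mathcal{D}' \;=\; (I+\mathcal{D}\mathcal{N})^{-1}\mathcal{D}.
\end{align*}
Hence stability of the minimal realization of the four closed-loop transfer matrices of $\mathcal{D}'\#\mathcal{N}'$ transfers to stability of the right-hand side. To promote this to internal feedback stability of $\mathcal{D}\#\mathcal{N}$, I invoke \cref{lemma:rhp_cancellations} after verifying that $\mathcal{D}(s)$ and $\mathcal{N}(s)$ share no RHP pole--zero cancellations: the droop blocks in $\mathcal{D}(s)$ are Hurwitz, while the only non-LHP poles of $\mathcal{N}(s)=N(s)/s$ lie on the imaginary axis and survive in the minimal realization of $\mathcal{D}(s)\mathcal{N}(s)$ because $\mathcal{D}$ has nonzero DC and resonance entries.

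\textbf{Step 2 (undoing the coordinate change).} Using the commutation $\Lambda\mathcal{D}_0=\mathcal{D}_0\Lambda$, one obtains $\mathcal{D}\mathcal{N}=\Lambda\,\mathcal{D}_0\mathcal{N}_0\,\Lambda^{-1}$, and therefore the key algebraic identity
\begin{align*}
(I+\mathcal{D}_0\mathcal{N}_0)^{-1}\mathcal{D}_0 \;=\; \Lambda^{-1}(s)\,(I+\mathcal{D}\mathcal{N})^{-1}\mathcal{D}.
\end{align*}
The factor $\Lambda^{-1}(s)=\mathrm{diag}(1,|v|_{0,i}/s)$ threatens to introduce a $1/s$ pole in the voltage-magnitude rows of the right-hand side. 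This is precisely where the Final Value Theorem enters: since $\mathcal{D}(s)=\mathcal{D}_0(s)\Lambda(s)$ carries a built-in factor of $s$ in those same rows, and since $(I+\mathcal{D}\mathcal{N})^{-1}\mathcal{D}$ is stable by Step~1, applying the FVT to a step disturbance in $(\Delta p_d,\Delta q_d)$ forces the steady state of $\Delta|\tilde v|_n=s\,\Delta|v|_n/|v|_0$ to vanish, which yields exactly the zero of order at least one at $s=0$ needed to absorb the $1/s$ pole. The analogous relations for the remaining three closed-loop transfer matrices of $\mathcal{D}_0\#\mathcal{N}_0$ differ from their $\mathcal{D}\#\mathcal{N}$ counterparts by pre- or post-multiplication with $\Lambda^{\pm 1}$, and the same zero--pole cancellation argument applies row-by-row. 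A final application of \cref{lemma:rhp_cancellations}, now with $\mathcal{D}_0$ and $\mathcal{N}_0$, then promotes stability of $(I+\mathcal{D}_0\mathcal{N}_0)^{-1}\mathcal{D}_0$ to $\mathcal{D}_0\#\mathcal{N}_0\in\mathcal{RH}_\infty^{4n\times 4n}$.

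\textbf{Main obstacle.} The delicate part will be the bookkeeping of the $s=0$ zero/pole cancellations across the non-minimal factorization through $\Lambda^{-1}$, particularly for the three remaining closed-loop transfer matrices whose row and column structure is not symmetric under $\Lambda^{\pm 1}$. The FVT packages this cleanly for the upper-left transfer matrix, but one must carefully exclude the possibility that hidden imaginary-axis modes are introduced when unwinding the coordinate change in the other three, and that the no-RHP-cancellation hypothesis for $\mathcal{D}_0$ and $\mathcal{N}_0$ indeed holds after the change of signal basis.
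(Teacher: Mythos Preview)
Your Step~1 contains a genuine error. You assert that $\mathcal{D}(s)$ and $\mathcal{N}(s)$ have no RHP pole--zero cancellations because ``$\mathcal{D}$ has nonzero DC and resonance entries.'' But you have just written $\mathcal{D}(s)=\mathcal{D}_0(s)\Lambda(s)$ with $\Lambda(s)=\mathrm{diag}(1,s/|v|_{0,1},\dots)$, so the even (voltage) rows of $\mathcal{D}$ vanish at $s=0$. These zeros cancel part of the imaginary-axis pole of $\mathcal{N}(s)=N(s)/s$ in the products $\mathcal{D}\mathcal{N}$ and $\mathcal{N}\mathcal{D}$: the residue $\mathcal{D}(0)N(0)$ has rank only $n$, not $2n$. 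Hence \cref{lemma:rhp_cancellations} does \emph{not} apply, and you cannot promote stability of $(I+\mathcal{D}\mathcal{N})^{-1}\mathcal{D}$ to internal feedback stability of $\mathcal{D}\#\mathcal{N}$. The paper's Remark following the proof states this explicitly. This also invalidates the part of your Step~2 where you appeal to ``the remaining three closed-loop transfer matrices of $\mathcal{D}\#\mathcal{N}$.''

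The salvageable route is the one you sketch at the end of Step~2 and that the paper follows: work \emph{only} with the single transfer matrix $(I+\mathcal{D}\mathcal{N})^{-1}\mathcal{D}$, show its even rows vanish at $s=0$ so that premultiplication by $\Lambda^{-1}$ yields a stable $(I+\mathcal{D}_0\mathcal{N}_0)^{-1}\mathcal{D}_0$, and then invoke \cref{lemma:rhp_cancellations} at the $\mathcal{D}_0,\mathcal{N}_0$ level, where there genuinely are no RHP cancellations. However, your FVT argument for the vanishing of the even rows is circular: saying ``$\Delta|\tilde v|_{\mathrm{n}}=s\,\Delta|v|_{\mathrm{n}}$ must settle to zero'' presupposes that $\Delta|v|_{\mathrm{n}}$ is bounded, which is exactly what you are trying to establish. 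The built-in $s$ factor in $\mathcal{D}$ does not by itself force the even rows of $(I+\mathcal{D}\mathcal{N})^{-1}\mathcal{D}$ to vanish at $s=0$, because $(I+\mathcal{D}\mathcal{N})^{-1}$ is singular there. The paper closes this gap by permuting into decoupled frequency/voltage blocks and using two structural facts about $N(0)$: the cross block $N_2^{\mathcal P}(0)=0$, and the voltage block $N_3^{\mathcal P}(0)$ is a \emph{regular} (non-Laplacian) matrix, so $(sI+\mathcal{D}_{\mathrm q}^{\mathcal P}N_3^{\mathcal P})^{-1}\mathcal{D}_{\mathrm q}^{\mathcal P}s\to 0$. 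You need an argument of this kind; the generic FVT invocation is not enough.
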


\begin{proof}
We first conclude that $\mathcal{D}'\#\mathcal{N}'\in\mathcal{RH}_\infty^{4n\times 4n}$ implies $(I\hspace{-0.5mm}+\hspace{-0.5mm}\mathcal{D}'\mathcal{N}')^{-1}\mathcal{D}'\hspace{-0.5mm}=\hspace{-0.5mm}(I\hspace{-0.5mm}+\hspace{-0.5mm}\mathcal{D}\mathcal{N})^{-1}\mathcal{D}\hspace{-0.5mm}\in\hspace{-0.5mm}\mathcal{RH}_\infty^{2n\times 2n}$, where the latter closed-loop transfer functions are minimal realizations. Further, since 
\begin{align}
    \begin{split}
        \mathcal{D}(s)&=\mathcal{D}_0(s)\cdot\text{diag}(1,\tfrac{s}{|v|_{0,1}},\dots,1,\tfrac{s}{|v|_{0,n}})\\
        \mathcal{N}(s)&= \mathcal{N}_0(s)\cdot\text{diag}(1,\tfrac{|v|_{0,1}}{s},\dots,1,\tfrac{|v|_{0,n}}{s}),
    \end{split}
\end{align}
$(I\hspace{-0.5mm}+\hspace{-0.5mm}\mathcal{D}\mathcal{N})^{-1}\mathcal{D}\hspace{-0.5mm}=\hspace{-0.5mm}\text{diag}(1,\tfrac{s}{|v|_{0,1}},\dots,1,\tfrac{s}{|v|_{0,n}})(I\hspace{-0.5mm}+\hspace{-0.5mm}\mathcal{D}_0\mathcal{N}_0)^{-1}\mathcal{D}_0$, and we can thus conclude stability of $(I+\mathcal{D}_0\mathcal{N}_0)^{-1}\mathcal{D}_0$ if the step response of the voltage derivatives converges to zero, i.e.,
\begin{align}\label{eq:lim_diag_star_zero}
    \underset{s\rightarrow 0}{\text{lim}}\,\, s \frac{1}{s}(I+\mathcal{D}(s)\mathcal{N}(s))^{-1}\mathcal{D}(s) = I \otimes \begin{bmatrix}
\star & 0 \\
0 & 0
\end{bmatrix}.
\end{align}
To show that \cref{eq:lim_diag_star_zero} holds, we first rewrite the transfer matrix $(I+\mathcal{D}\mathcal{N})^{-1}\mathcal{D} = (I+\mathcal{D}{N}\tfrac{1}{s})^{-1}\mathcal{D}$, and then use the row permutation matrix $\mathcal{P}\in\mathbb{R}^{2n\times 2n}$ with elements
\begin{align}
    \mathcal{P}_{ij} = \begin{cases}
        1 \quad \quad i=k,\, j = 2k-1,\, 1\leq k\leq n\\ 1 \quad \quad i=k+n,\, j = 2k,\, 1\leq k\leq n\\ 0\quad \quad \text{else},
    \end{cases}
\end{align}
to study the decoupled frequency and voltage dynamics, i.e., $\mathcal{P} (I+\mathcal{D}{N}\tfrac{1}{s})^{-1}\mathcal{D}\mathcal{P}^{-1} = (I+\mathcal{D}^\mathcal{P}{N}^\mathcal{P}\tfrac{1}{s})^{-1}\mathcal{D}^\mathcal{P}$. The row-permutated matrices $\mathcal{D}^\mathcal{P}(s)$ and $N^\mathcal{P}(s)$ are given as
\begin{align}
    \hspace{-1mm}\mathcal{D}^\mathcal{P}\hspace{-0.5mm}(s) \hspace{-0.5mm}= \hspace{-1mm}\begin{bmatrix}\mathcal{D}_\mathrm{p}^\mathcal{P}\hspace{-0.5mm}(s)\hspace{-0.5mm} &  0_{n\times n}\\
        0_{n\times n} & \hspace{-0.5mm}\mathcal{D}_\mathrm{q}^\mathcal{P}\hspace{-0.5mm}(s)
    \end{bmatrix}\hspace{-1mm}, \,\, N^\mathcal{P}\hspace{-0.5mm}(s)\hspace{-0.5mm} = \hspace{-1mm}\begin{bmatrix}N_1^\mathcal{P}(s) \hspace{-0.5mm}& \hspace{-0.5mm}N_2^\mathcal{P}\hspace{-0.5mm}(s)  \\
        -N_2^\mathcal{P}\hspace{-0.5mm}(s) \hspace{-0.5mm} & \hspace{-0.5mm}N_3^\mathcal{P}(s) 
    \end{bmatrix}\hspace{-1mm}
\end{align}
where $\mathcal{D}_\mathrm{p}^\mathcal{P}(s)=\text{diag}(\mathcal{D}_{\mathrm{p},1}(s),\dots,\mathcal{D}_{\mathrm{p},n}(s))$ and $\mathcal{D}_\mathrm{q}^\mathcal{P}(s)=\text{diag}(\mathcal{D}_{\mathrm{q},1}(s),\dots,\mathcal{D}_{\mathrm{q},n}(s))$ with elements 
\begin{align}
\mathcal{D}_{\mathrm{p},i}(s) = \tfrac{d_{\mathrm{p},i}}{\tau_{\mathrm{p},i}s+1} \quad\text{and}\quad \mathcal{D}_{\mathrm{q},i}(s) = \tfrac{d_{\mathrm{q},i}s}{\tau_{\mathrm{q},i}s+1}\tfrac{1}{|v|_{0,i}}.
\end{align}
Moreover, we have
\begin{align}
    N_{1,ii}^\mathcal{P}(s) &= \textstyle\sum_{j\ne i}^nb_{ij}\tfrac{|v|_{0,i}^2}{1+(\rho+\tfrac{s}{\omega_0})^2}-\textstyle\sum_{j\ne i}^nb_{ij}\tfrac{|v|_{0,i}^2-|v|_{0,i}|v|_{0,j}}{1+\rho^2}\nonumber\\
    N_{1,ij}^\mathcal{P}(s) &= - b_{ij}\tfrac{|v|_{0,i}|v|_{0,j}}{1+(\rho+\tfrac{s}{\omega_0})^2}\nonumber\\
    \begin{split}
    N_{2,ii}^\mathcal{P}(s) &= \textstyle\sum_{j\ne i}^nb_{ij}\tfrac{|v|_{0,i}^2}{1+(\rho+\tfrac{s}{\omega_0})^2}\tfrac{s}{\omega_0}\\
    N_{2,ij}^\mathcal{P}(s) &= - b_{ij}\tfrac{|v|_{0,i}|v|_{0,j}}{1+(\rho+\tfrac{s}{\omega_0})^2}\tfrac{s}{\omega_0}
    \end{split}\\
    N_{3,ii}^\mathcal{P}(s) &= \textstyle\sum_{j\ne i}^nb_{ij}\tfrac{|v|_{0,i}^2}{1+(\rho+\tfrac{s}{\omega_0})^2}+\textstyle\sum_{j\ne i}^nb_{ij}\tfrac{|v|_{0,i}^2-|v|_{0,i}|v|_{0,j}}{1+\rho^2}\nonumber\\
    N_{3,ij}^\mathcal{P}(s) &= - b_{ij}\tfrac{|v|_{0,i}|v|_{0,j}}{1+(\rho+\tfrac{s}{\omega_0})^2}\nonumber.
\end{align}
Recalling that $(I\hspace{-0.5mm}+\hspace{-0.5mm}\mathcal{D}\mathcal{N})^{-1}\mathcal{D}$ is stable, and thus its permuted version, we can apply the Final Value Theorem (FVT):
\begin{subequations}\label{eq:FVT}
\begin{align}
    &\,\,\underset{s\rightarrow 0}{\text{lim}}\,\,(I+\mathcal{D}^\mathcal{P}(s){N}^\mathcal{P}(s)\tfrac{1}{s})^{-1}\mathcal{D}^\mathcal{P}(s) \\
    =\,&\,\,\underset{s\rightarrow 0}{\text{lim}}\,\,(sI+\mathcal{D}^\mathcal{P}(s){N}^\mathcal{P}(s))^{-1}\mathcal{D}^\mathcal{P}(s) s\\ \nonumber
    = \,&\,\,\underset{s\rightarrow 0}{\text{lim}}\,\left[\begin{array}{c}
    \hspace{-0.15cm}(sI+\mathcal{D}^\mathcal{P}_\mathrm{p}(s){N}^\mathcal{P}_1(s))^{-1}\mathcal{D}^\mathcal{P}_\mathrm{p}(s) s \\
    \hspace{-0.15cm}0_{n\times n}
\end{array}\right.\\\label{eq:FVT3rd}
\quad 
&\hspace{2.5cm}\left.\begin{array}{c}
     0_{n\times n}\hspace{-0.15cm} \\
     (sI+\mathcal{D}^\mathcal{P}_\mathrm{q}(s){N}^\mathcal{P}_3(s))^{-1}\mathcal{D}^\mathcal{P}_\mathrm{q}(s) s\hspace{-0.15cm}
\end{array}\right]\hspace{-0.1cm}\\ \label{eq:FVT4th}
         = \,&\begin{bmatrix}
            \star_{n\times n}&0_{n\times n}\\0_{n\times n}&0_{n\times n}
        \end{bmatrix},
        \end{align}
\end{subequations}
where for \cref{eq:FVT3rd} we have used $N_2^\mathcal{P}(0)=0_{n\times n}$. The last equality in \cref{eq:FVT4th} follows from the fact that $N_1^\mathcal{P}(0)$ is a Laplacian matrix with zero eigenvalue \cite{paganini2019global}, while $N_3^\mathcal{P}(0)$ is a regular matrix which does not have a zero eigenvalue. Given \cref{eq:FVT}, \cref{eq:lim_diag_star_zero} follows, and thus $(I+\mathcal{D}_0\mathcal{N}_0)^{-1}\mathcal{D}_0\in\mathcal{RH}_\infty^{2n\times 2n}$.

Finally, since $\mathcal{D}_0(s)$ does not include a RHP pole or zero, there are no RHP pole-zero cancellations between $\mathcal{D}_0(s)$ and $\mathcal{N}_0(s)$. By \cref{lemma:rhp_cancellations}, we conclude $\mathcal{D}_0\#\mathcal{N}_0\in\mathcal{RH}_\infty^{4n\times 4n}$.
\end{proof}

\begin{remark}
    Recall again that we concluded stability of $\mathcal{D}_0\#\mathcal{N}_0\in\mathcal{RH}_\infty^{4n\times 4n}$ by first proving stability of the minimal realization of all four closed-loop transfer functions of $\mathcal{D}'\#\mathcal{N}'$, which implies stability of $(I\hspace{-0.5mm}+\hspace{-0.5mm}\mathcal{D}'\mathcal{N}')^{-1}\mathcal{D}'\hspace{-0.5mm}=\hspace{-0.5mm}(I\hspace{-0.5mm}+\hspace{-0.5mm}\mathcal{D}\mathcal{N})^{-1}\mathcal{D}\hspace{-0.5mm}\in\hspace{-0.5mm}\mathcal{RH}_\infty^{2n\times 2n}$, where the latter closed-loop transfer functions are minimal realizations. Notice that there are RHP pole-zero cancellations between  $\mathcal{D}(s)$ and $\mathcal{N}(s)$, i.e., we can NOT conclude stability of $\mathcal{D}\#\mathcal{N}$. However, from the stability of $\hspace{-0.5mm}(I\hspace{-0.5mm}+\hspace{-0.5mm}\mathcal{D}\mathcal{N})^{-1}\mathcal{D}\hspace{-0.5mm}\in\hspace{-0.5mm}\mathcal{RH}_\infty^{2n\times 2n}$, we can conclude stability of $\hspace{-0.5mm}(I\hspace{-0.5mm}+\hspace{-0.5mm}\mathcal{D}_0\mathcal{N}_0)^{-1}\mathcal{D}_0\hspace{-0.5mm}\in\hspace{-0.5mm}\mathcal{RH}_\infty^{2n\times 2n}$ (cf. proof of \cref{lem:volt_deriv_volt}), and, given there are no RHP pole-zero cancellations between $\mathcal{D}_0(s)$ and $\mathcal{N}_0(s)$, we conclude stability of $\mathcal{D}_0\#\mathcal{N}_0\in\mathcal{RH}_\infty^{4n\times 4n}$.
\end{remark}

\section{Numerical Case Studies}\label{sec:numerical_case_studies}
To validate the proposed stability guarantees, numerical case studies are conducted in MATLAB/Simulink.
\begin{figure}[t!]
\centering
\resizebox{0.3\textwidth}{!}{
\tikzstyle{roundnode}=[circle,draw=black!60,fill=black!5,scale=0.7]
\begin{tikzpicture}[scale=1,every node/.style={scale=0.7}]

\draw[ultra thick] (-3.2,3.9) -- (-3.2,3.3); 
\draw (-3.2,3.6) node (v1) {} -- (-3.5,3.6);
\draw [fill=gray!30] (-4.1,3.9) rectangle (-3.5,3.3);
\draw (-4.1,3.3) -- (-3.5,3.9);
\node at (-3.9,3.7) {$=$};
\node at (-3.7,3.5) {$\approx$};
\draw (-3.2,3.8) -- (-0.6,3.8); 
\draw[ultra thick] (-0.6,3.9) -- (-0.6,3.3); 
\draw (-0.6,3.6) node (v2) {} -- (-0.3,3.6);
\draw  [fill=gray!30] (-0.3,3.9) rectangle (0.3,3.3);
\draw (-0.3,3.3) -- (0.3,3.9); 
\draw  (-3.2,3.6) -- (-2.9,3.6) -- (-2.1,2.8)-- (-2.1,2.6) ; 
\draw (-0.6,3.6) -- (-0.9,3.6) -- (-1.7,2.8)-- (-1.7,2.6);
\draw [ultra thick](-2.2,2.6) -- (-1.6,2.6);
\draw (-1.9,2.3) -- (-1.9,2.6) node (v3) {};
\draw  [fill=backgroundblue!30] (-2.2,2.3) rectangle (-1.6,1.7);
\draw (-2.2,1.7) -- (-1.6,2.3);
\node at (-2,2.1) {$\approx$};
\node at (-1.8,1.9) {$=$};
\node at (-0.1,3.7) {$\approx$};
\node at (0.1,3.5) {$=$};
\node at (-3.8,3.1) {GFM 1};
\node at (-1.1,2) {GFM 3};
\node at (0,3.1) {GFM 2};
\draw [-latex,thick](-3.2,3.4) -- (-3,3.4) -- (-3,3); 
\draw[-latex,thick] (-0.6,3.4) -- (-0.8,3.4) -- (-0.8,3); 
\draw[-latex,thick](-1.9,2.6) -- (-1.9,3.1); 
\end{tikzpicture}
}
\vspace{-10mm}
\caption{Schematic one-line diagram of the three-bus power system grid topology used in Case Studies I and II.}
\label{fig:3bus}
\vspace{-2mm}
\end{figure}

\renewcommand{\arraystretch}{1}
\begin{table}[t!]\scriptsize
    \centering
           \caption{Network and GFM parameters of the three-bus system}
           \vspace{-1.5mm}
           \begin{subtable}{1\columnwidth}
           \centering
               \caption{Electrical parameters \& set points (in local per unit systems).}
               \vspace{-2mm}
               \begin{tabular}{c||c|c}
     \toprule
         Parameter & Symbol & Value  \\ \hline
         $\hspace{-2mm}$Base power, voltage, frequency$\hspace{-2mm}$ &  $S_\mathrm{b}$, $V_\mathrm{b}$, $f_\mathrm{b}$ & $\hspace{-2mm}$100 MVA, 230 kV, 50 Hz$\hspace{-3mm}$\\
         $RL$ line components & $r_{ij}$, $l_{ij}$ & 0.02, 0.4 pu \\
         Shunt capacitance in $\pi$-model & $c_{ij}$ & 0.1 pu (0.05 pu per end)\\
         Constant power loads & $p_{\mathrm{load},i}$& 1 pu\\\hline
         GFM base power& $S_1$, $S_2$, $S_3$ & 100, 100, 200 MVA\\
         $\hspace{-2mm}$ GFM steady-state voltages$\hspace{-2mm}$ & $\hspace{-1.5mm}$$|v|_{0,1}$,$|v|_{0,2}$,$|v|_{0,3}$$\hspace{-1.5mm}$& 1, 0.9, 1.1 pu\\
         GFM active power set points & $p_{0,1}$, $p_{0,2}$, $p_{0,3}$& 0.703, 1, 0.653 pu\\
         GFM reactive power set points & $q_{0,1}$, $q_{0,2}$, $q_{0,3}$& -0.075, -0.75, 0.348 pu\\
            $\hspace{-2mm}$GFM $RLC$ filter components $\hspace{-2mm}$& $r_{\mathrm{f},i}$, $l_{\mathrm{f},i}$, $c_{\mathrm{f},i}$& 0.01, 0.1, 0.1 pu\\
         \bottomrule
    \end{tabular}
    \label{tab:3bus_parameters}
    \end{subtable}
        \begin{subtable}{1\columnwidth}
    \vspace{1.8mm}
    \centering
    \caption{GFM control parameters (in local per unit systems).}
         \vspace{-2mm}
    \begin{tabular}{c||c|c}
    \toprule
         Parameter & Symbol & Value  \\ \hline
         Droop gains GFM 1 (fixed)& $d_\mathrm{p,1}$, $d_\mathrm{q,1}$&0.003 pu,0.01 pu \\
         Time constants GFM 1 (fixed) & $\tau_\mathrm{p,1}$, $\tau_\mathrm{q,1}$ & 0.1 s, 0.1 s\\\hline
Droop gains GFM 2 (fixed)& $d_\mathrm{p,2}$, $d_\mathrm{q,2}$&0.003 pu,0.01 pu \\
Time constants GFM 2 (fixed) & $\tau_\mathrm{p,2}$, $\tau_\mathrm{q,2}$ & 0.1 s, 0.1 s\\ \hline
Droop gains GFM 3 (no cond.)&$d_\mathrm{p,3}$, $d_\mathrm{q,3}$&0.25 pu, 0.25 pu\\
Time constants GFM 3 (no cond.) &$\tau_\mathrm{p,3}$, $\tau_\mathrm{q,3}$ & 0 s, 0 s\\ \hline
Droop gains GFM 3 (cond. L1)&$d_\mathrm{p,3}$, $d_\mathrm{q,3}$ &0.2 pu, 0.2 pu\\
Time constants GFM 3 (cond. L1) &$\tau_\mathrm{p,3}$, $\tau_\mathrm{q,3}$& 0.01 s, 0.01 s\\ \hline
Droop gains GFM 3 (cond. DYN)&$d_\mathrm{p,3}$, $d_\mathrm{q,3}$ &0.006 pu, 0.02 pu\\
Time constants GFM 3 (cond. DYN) &$\tau_\mathrm{p,3}$, $\tau_\mathrm{q,3}$& 0.1 s, 0.1 s\\ \bottomrule
    \end{tabular}
     \label{tab:converter_parameters}
    \end{subtable}
    \label{tab:case_study_table}
    \vspace{-4mm}
\end{table}
\renewcommand{\arraystretch}{1} \normalsize

The first two case studies consider an idealized three-bus system with three GFM VSCs. This minimal setup enables direct verification and interpretation of the theoretical results without confounding effects from large-scale system complexity or additional components such as transformers, heterogeneous devices, or diverse load models. In particular, it allows a comparison between (i) an analytical model based on ideal (i.e., linearized, reduced) block-diagram dynamics used to derive the stability certificates (\emph{Case Study I}), and (ii) the associated nonlinear electromagnetic transient (EMT) model with detailed network and device dynamics (\emph{Case Study II}).

The \emph{third case study} considers a more realistic system based on a modified IEEE nine-bus network with three GFM VSCs, one synchronous generator, and grid-following converter-based loads. A detailed nonlinear EMT model including inner converter control loops is used. Monte Carlo simulations sample GFM controller parameters within the feasible region defined by the proposed stability conditions, and the resulting responses are evaluated to assess system stability.

\subsection{Case Study I: Linearized \& reduced models}\label{sec:CaseStudyI}
We numerically validate \cref{thm:main_theorem} and \cref{corr:special_cases} using linearized small-signal models of a three-bus system with three GFM VSCs, as shown in \cref{fig:3bus} with parameters in \cref{tab:case_study_table}. We consider a network with uniform resistance-inductance ratio $\rho = 0.05$, a maximum steady-state voltage magnitude $|v|_\mathrm{max}=1.1$ pu, and identical self susceptances $\textstyle\sum_{i\ne j}^3b_{ij}=5$ pu of all VSC nodes $i\in\{1,2,3\}$. The simulations follow the ideal (i.e., linearized and reduced) block-diagram dynamics in \cref{fig:power_sys_model_orig}. GFM 1 and GFM 2 employ fixed controllers that always satisfy the conditions in \cref{eq:p_conditions,eq:q_conditions}, while for GFM 3 we are exploring varying control parameters.

\begin{figure}[t!]
    \centering
\begin{subfigure}{0.23\textwidth}
    \centering
    \vspace{-1mm}
\scalebox{0.52}{\includegraphics[]{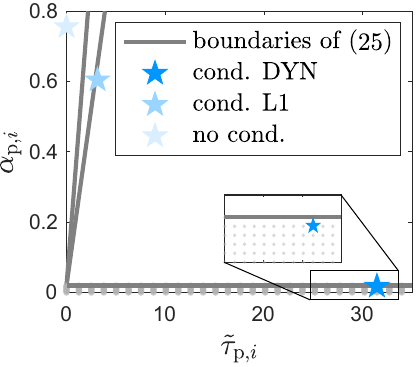}}
\vspace{-2mm}
    \caption{Stability conditions in \cref{eq:p_conditions}.}
    \vspace{-1mm}
    \label{fig:2d_p_sim}
\end{subfigure}
\hspace{0.15cm}
\begin{subfigure}{0.23\textwidth}
    \centering
    \vspace{-1mm}
\scalebox{0.51}{\includegraphics[]{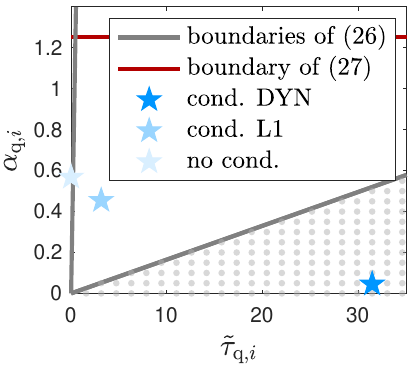}}
\vspace{-2mm}
    \caption{Stability conditions in \cref{eq:q_conditions}.}
    \vspace{-1mm}
    \label{fig:2d_q_sim}
\end{subfigure}
\caption{2D stability region for the three-bus system ($\rho=0.05$). The stars indicate the GFM 3 control parameters in the global per unit system.}
\label{fig:control_params_simulations}
\vspace{-2mm}
\end{figure}
\begin{figure}[t!]
    \centering
      \vspace{-1mm}
    \scalebox{0.5}{\includegraphics[]{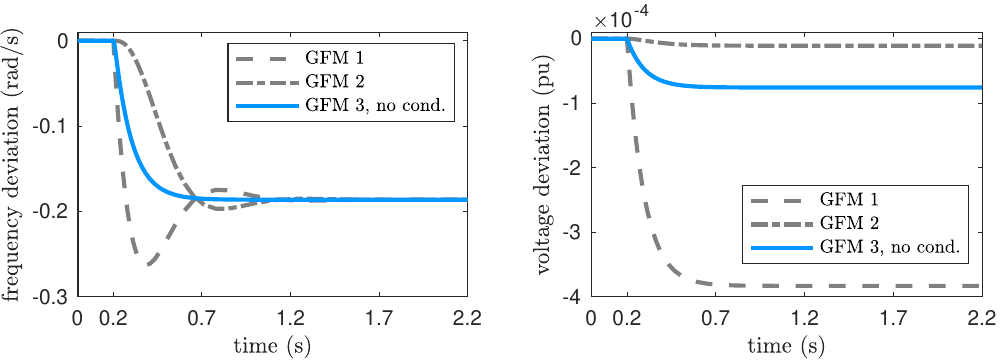}}
    \vspace{-2mm}
    \caption{\emph{Case Study I:} System response of the block diagram in \cref{fig:power_sys_model_orig} with three GFM devices and simplified network model \cref{eq:full_network_polar_matrix_blocks_level2}, where the controller of GFM 3 does not satisfy any of the conditions \cref{eq:p_conditions,eq:q_conditions,eq:level1_conditions} (no cond.).}
    \label{fig:L2_network_sim}
    \vspace{-4mm}
\end{figure}
\begin{figure}[t!]
    \centering
    \scalebox{0.5}{\includegraphics[]{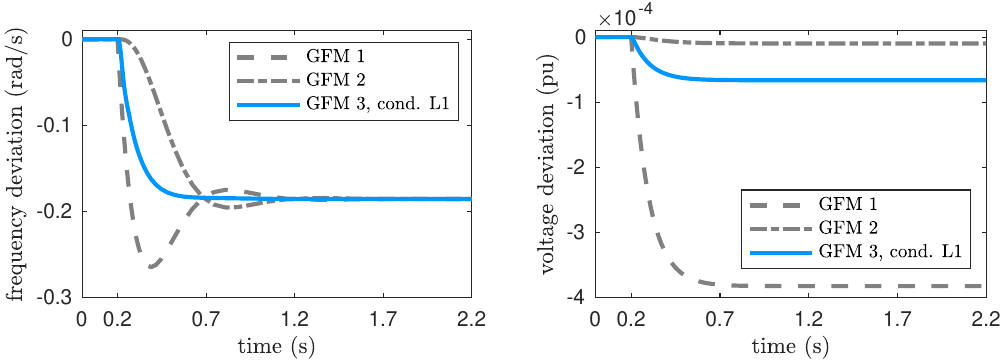}}
    \vspace{-2mm}
    \caption{\emph{Case Study I:} System response of the block diagram in \cref{fig:power_sys_model_orig} with three GFM devices and the simplified network model \cref{eq:full_network_polar_matrix_blocks_level1}, where the GFM 3 controller satisfies the conditions in \cref{eq:level1_conditions} (cond. L1).}
    \label{fig:L1_network_sim}
    \vspace{-4mm}
\end{figure}
\begin{figure}[t!]
    \centering
    \scalebox{0.5}{\includegraphics[]{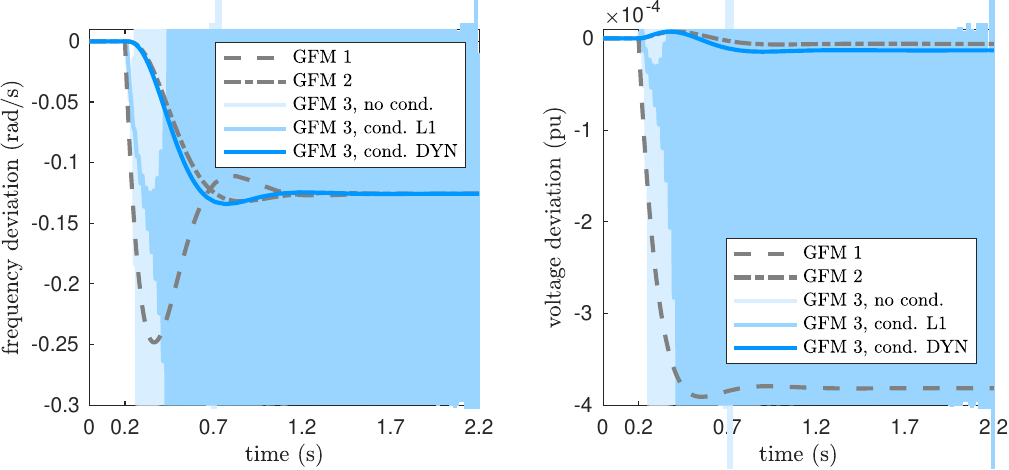}}
    \vspace{-2mm}
    \caption{\emph{Case Study I:} System response of the block diagram in \cref{fig:power_sys_model_orig} with three GFM devices and dynamic network model \cref{eq:full_network_polar_matrix_blocks_level0}, where all GFM controllers satisfy the conditions in \cref{eq:p_conditions,eq:q_conditions} (cond. DYN). We also display the unstable dynamics of GFM 3 when the controller satisfies no conditions (no cond.), and when it satisfies the conditions in \cref{eq:level1_conditions} (cond. L1).}
    \label{fig:Dyn_network_sim}
    \vspace{-4mm}
\end{figure}
We begin by modeling the network dynamics $\mathcal{N}_0(s)$ using Network-Simplification Level 2 in \cref{eq:full_network_polar_matrix_blocks_level2}, without imposing additional conditions on the controller of GFM 3 (see ``no cond.'' in \cref{fig:control_params_simulations}). A small-signal load disturbance at node 1 reveals that the closed-loop system remains stable (\cref{fig:L2_network_sim}), confirming the validity of \cref{corr:special_cases}. Likewise, when modeling $\mathcal{N}_0(s)$ with Network-Simplification Level 1 in \cref{eq:full_network_polar_matrix_blocks_level1} and ensuring that the controller of GFM 3 satisfies the stability conditions in \cref{eq:level1_conditions} (see ``cond. L1'' in \cref{fig:control_params_simulations}), we again observe closed-loop stability (\cref{fig:L1_network_sim}). This further confirms \cref{corr:special_cases}. Our main result, \cref{thm:main_theorem}, is validated in \cref{fig:Dyn_network_sim}, where we model $\mathcal{N}_0(s)$ dynamically as in \cref{eq:full_network_polar_matrix_blocks_level0}, and equip GFM 3 with a controller that meets the stability conditions in \cref{eq:p_conditions,eq:q_conditions} (see ``cond. DYN'' in \cref{fig:control_params_simulations}). Stability is immediately evident. In contrast, using a GFM 3 controller that satisfies only \cref{eq:level1_conditions} (``cond. L1'') or no conditions at all (``no cond.'') leads to instability in the dynamic network model. This underscores the importance of accurate network modeling in control design, which is overlooked in the overly optimistic stability assessment of \cite{pates2019robust,siahaan2024decentralized,watson2020control}.

\subsection{Case Study II: Simple nonlinear circuit model}
For the same three-bus system as in \cref{sec:CaseStudyI}, we perform detailed EMT simulations using a nonlinear three-phase circuit model. In particular, we now consider more general and realistic line models also including shunt capacitors. Additionally, for each GFM VSC, we incorporate the full nonlinear converter models (using average models). Compared to Case Study I, this increases complexity in two key aspects: first, by considering nonlinear models, and second, by accounting for the full network dynamics without the simplifying assumptions in \cref{sec:network}, while also including shunt capacitors. The simulation results for a small-signal load increase in \cref{fig:Nonlinear_circuit_sim} demonstrate that a GFM3 controller satisfying the decentralized stability conditions in \cref{eq:p_conditions,eq:q_conditions} maintains stability even in the presence of nonlinear network and device models.

In contrast, when the GFM 3 controller satisfies only \cref{eq:level1_conditions} or does not meet any stability conditions, interactions with the network dynamics can lead to instability. We thus conclude that our stability conditions in \cref{eq:p_conditions,eq:q_conditions} remain effective in a nonlinear circuit scenario including shunt capacitors, provided the system operates near the nominal point where linearization errors are small. However, the stability conditions, designed for a simplified static network, may fail.
\begin{figure}[t!]
    \centering
        \vspace{-1mm}
    \scalebox{0.5}{\includegraphics[]{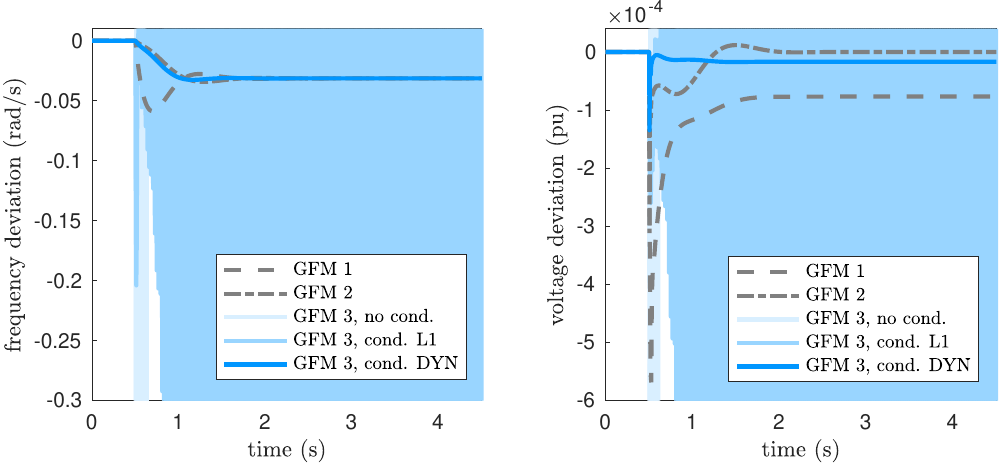}}
    \vspace{-2mm}
    \caption{\emph{Case Study II:} System response of the three-bus converter system with nonlinear circuit and device models (including shunt capacitors) where all GFM controllers satisfy the conditions in \cref{eq:p_conditions,eq:q_conditions} (cond. DYN). We also indicate the unstable dynamics of GFM 3 when the controller satisfies no conditions (no cond.), and the conditions in \cref{eq:level1_conditions} (cond. L1).}
    \label{fig:Nonlinear_circuit_sim}
    \vspace{-4mm}
\end{figure}

\subsection{Case Study III: Validation in IEEE Nine-Bus System}
To validate the stability conditions in \cref{eq:p_conditions,eq:q_conditions} under more realistic conditions, we consider the IEEE nine-bus system, comprising three GFM VSCs, one synchronous generator (SG), two grid-following (GFL) converter-based loads, and one static PQ load, as shown in \cref{fig:9bus}. We perform detailed nonlinear EMT simulations that capture network dynamics and inner converter control loops. We adopt network parameters from \cite{anderson}, and provide device parameters in \cref{tab:9bus_parameters}.

We perform a Monte Carlo analysis by randomly sampling controller parameters of all three GFM VSCs within the feasible region defined by \cref{eq:p_conditions,eq:q_conditions}, where we consider an approximate value of $\rho \approx 0.16$ of the Kron-reduced network. Specifically, we generate 50 distinct system-level parameter combinations, where each combination consists of one feasible realization of the controller $D_i(s)$ for each converter $i=1,2,3$ in \cref{eq:converter_model_final}. For each combination, we jointly sample the droop gains $d_{\mathrm{p},i}$, $d_{\mathrm{q},i}$ and time constants $\tau_{\mathrm{p},i}$, $\tau_{\mathrm{q},i}$ within their admissible ranges, as illustrated in \cref{fig:control_params_monte_carlo}. We then evaluate the small-signal response to load disturbances at bus 6.

For all 50 parameter combinations, stable small-signal behavior is observed, corresponding to a 100\% stability rate. Although the heterogeneous device composition (VSCs and SG) formally violates the homogeneity assumption underlying the theoretical conditions (while still permitting heterogeneous GFM droop controller parameters), the results indicate that the proposed conditions remain predictive when GFM VSC dynamics dominate. A rigorous extension to explicitly incorporate heterogeneous device dynamics is left for future work.

\begin{figure}[t!]
    \centering
    \vspace{-2mm}
    \resizebox{0.5\textwidth}{!}{
\begin{tikzpicture}[scale=0.4, every node/.style={scale=0.65}]
	
	\draw(-13,19.3) -- (6.1,19.4);
	\draw [ultra thick](-3.4,20) -- (-3.4,18.8);
	\draw [ultra thick](-15.2,19.9) -- (-15.2,18.7);
	
	\draw [ultra thick](-12.1,19.9) -- (-12.1,18.7);
	
	\draw [ultra thick](8.3,20) -- (8.3,18.8);
	
	\draw [ultra thick](5.2,20) -- (5.2,18.8);
	\draw[ultra thick] (-11.3,17.8) -- (-9.9,17.8);
	
	\draw [ultra thick](3,17.9) -- (4.4,17.9);
	\draw [ultra thick](-4.1,13.7) -- (-2.7,13.7);
	\draw [ultra thick](-4.1,11.1) -- (-2.7,11.1);
	\fill[black] (-15.2,19.3)circle (0.7 mm); 
	\fill[black]  (-12.1,19.3)circle (0.7 mm); 
	\fill[black] (-3.4,19.4)circle (0.7 mm); 
	\fill[black]  (5.2,19.4)circle (0.7 mm); 
	\fill[black] (8.3,19.4) circle (0.7 mm);

	\fill[black] (4.1,17.9)circle (0.7 mm); 
	\fill[black] (3.3,17.9)circle (0.7 mm); 
	\fill[black] (-10.2,17.8)circle (0.7 mm); 
	\fill[black] (-11,17.8)circle (0.7 mm); 
	\fill[black] (-3.8,13.7)circle (0.7 mm); 
	\fill[black] (-3,13.7)circle (0.7 mm); 
	\fill[black] (-3.4,13.7)circle (0.7 mm); 
	\fill[black]  (-3.4,11.1) circle (0.7 mm); 
	
	\node at (-15.2,20.4) {2};
	\node at (-12.1,20.4) {7};
	\node at (-3.4,20.5) {8};
	\node at (5.2,20.5) {9};
	\node at (8.3,20.5) {3};
	\node at (-11.7,17.8) {5};
	\node at (4.8,17.9) {6};
	\node at (-2.3,13.7) {4};
	\node at (-2.3,11.1) {1};

	\node (v2) at (-12.1,19) {};
	\node at (8.3,19.1) {};
	\draw (-12.1,19) -- (-11,19) -- (-11,17.8);
	\fill[black] (-12.1,19) circle (0.7 mm); 
	\draw (-10.2,17.8) -- (-10.2,17.1) -- (-3.8,14.4) -- (-3.8,13.7);
	\draw (3.3,17.9) -- (3.3,17.2) -- (-3,14.4) -- (-3,13.7);
	
	\draw (-3.4,13.7) -- (-3.4,13) node (v1) {};
	\draw (5.2,19.1) -- (4.1,19.1) -- (4.1,17.9);
	\fill[black] (5.2,19.1) circle (0.7 mm); 

	\draw(-3.4,12.6)  circle (4 mm); 
	\draw(-3.4,12.1)  circle (4 mm);

	\draw [-latex, thick](3.7,17.9) -- (3.7,16.7);
	\fill[black](3.7,17.9)circle (0.7 mm);

	\draw(-5.3,18.1)  circle (4 mm); 
	\draw(-4.8,18.1)  circle (4 mm);

	\draw(-2,18.1)  circle (4 mm); 
	\draw(-1.5,18.1)  circle (4 mm); 

	\draw(-13.9,19.3)  circle (4 mm); 
	\draw(-13.4,19.3)  circle (4 mm); 

	\draw(-10.6,16.2)  circle (4 mm); 
	\draw(-10.6,15.7)  circle (4 mm); 

	\draw(7,19.4)  circle (4 mm); 
	\draw(6.5,19.4)  circle (4 mm); 
	\draw (-15.7,19.3) -- (-14.3,19.3);
	\draw (8.75,19.4) -- (7.4,19.4);
	\draw (-3.4,10.65) -- (-3.4,11.7);

	\node at (-16.7,18.1) {GFM 2};
	\node at (9.8,18.15) {GFM 3};
	\node at (-7.5,16.8) {GFL load 2}; 
	
		\node at (-1.3,9.7) {GFM 1};
	
	\draw [fill=backgroundblue!30] (8.75,20.3) rectangle (10.85,18.5);

	\draw [fill=gray!30] (-8.6,19) rectangle (-6.4,17.2);

\draw[fill=backgroundblue!30]  (-4.45,10.65) rectangle (-2.35,8.85);

\draw[fill=gray!30]  (-11.7,14.8) rectangle (-9.6,13);

\draw[fill=backgroundblue!30]  (-17.8,20.25) rectangle (-15.7,18.45);

\draw (-17.8,18.45) -- (-15.7,20.25); 
\draw (8.75,18.5) -- (10.85,20.3);
\draw (-4.45,8.85) -- (-2.35,10.65);
\draw (-6.4,18.1) -- (-5.7,18.1); 
\draw (-4.4,18.1) -- (-3.8,18.1) -- (-3.8,19.1) -- (-3.4,19.1) node (v3) {};
\draw (-11.7,13) -- (-9.6,14.8);
\draw (-10.6,17.8) -- (-10.6,16.6);
\draw (-10.6,15.3) -- (-10.6,14.8);
\node at (-10.6,12.6) {GFL load 1};
\draw (-8.6,17.2) -- (-6.4,19);

\draw[fill=gray!30] (0.6,18.1) ellipse (1 and 1);
\draw (-3.4,19.1) -- (-3,19.1) -- (-3,18.1) -- (-2.4,18.1);
\node at (0.6,16.7) {SG};
\draw (-0.4,18.1) -- (-1.1,18.1);
\node[scale=1.2] at (-17.2,19.7) {$=$};
\node[scale=1.2] at (-16.4,19) {$\approx$};
\node[scale=1.2] at (9.4,19.8) {$\approx$};
\node[scale=1.2] at (10.1,19) {$=$};
\node[scale=1.2] at (-3.9,10.1) {$\approx$};
\node[scale=1.2] at (-3,9.4) {$=$};
\node[scale=1.2] at (-11.1,14.3) {$\approx$};
\node[scale=1.2] at (-10.2,13.5) {$=$};
\node[scale=1.2] at (-8,18.5) {$=$};
\node[scale=1.2] at (-7.1,17.7) {$\approx$};
\draw (0.1,18.4) -- (0.4,18.4) -- (0.4,17.8) -- (0.1,17.8); 
\draw (1.1,18.4) -- (0.8,18.4) -- (0.8,17.8) -- (1.1,17.8); 
\draw (0.1,18.4) arc (180:0:0.5);
\draw (0.1,17.8) arc (-180:0:0.5); 
\end{tikzpicture}

}
        \vspace{-10mm}
    \caption{IEEE nine-bus test system for validation of the stability conditions in \cref{eq:p_conditions,eq:q_conditions} during Monte Carlo Simulations in Case Study III.}
    \label{fig:9bus}
    \vspace{-1mm}
\end{figure}

\renewcommand{\arraystretch}{1}
\begin{table}[t!]\scriptsize
    \centering
               \caption{Nine-bus system parameters (in local per unit systems)}
               \vspace{-2mm}
               \begin{tabular}{c||c|c}
     \toprule
         Parameter&Symbol&Value\\ \hline
         $\hspace{-1mm}$Base power, voltage, frequency$\hspace{-1mm}$&$S_\mathrm{b}$, $V_\mathrm{b}$, $f_\mathrm{b}$&$\hspace{-1mm}$100 MVA, 230 kV, 50 Hz$\hspace{-1mm}$\\ 
         Bus base voltages&$\hspace{-1mm}$$V_{\mathrm{b},1}$, $V_{\mathrm{b},2}$, $V_{\mathrm{b},3}$&16.5, 18, 13.8 kV\\ 
         Constant power load (bus 6)&$p_{\mathrm{load},6}$&1 pu\\\hline
         GFL base power, voltage& $S_{\mathrm{gfl},i}$, $V_{\mathrm{b,gfl},i}$&100 MVA, 13.8 kV\\
         GFL loads (constant power)&$p_{\mathrm{load},i}$, $q_{\mathrm{load},i}$&1, 0 pu\\
         GFL $RL$ filter components&$r_{\mathrm{f},i}$, $l_{\mathrm{f},i}$&0.01, 0.1 pu\\ 
         GFL transformer components&$r_{\mathrm{t},i}$, $l_{\mathrm{t},i}$&0.01, 0.1 pu\\\hline
         SG base power, voltage&$S_{\mathrm{sg}}$, $V_{\mathrm{b,sg}}$&100 MVA, 13.8 kV\\
         SG power, voltage setpoints&$p_\mathrm{0,sg}$, $|v|_\mathrm{0,sg}$&0.525, 1 pu\\
         SG inertia \& droop constant&$H$, $D$&13 s, 0.04 pu\\ 
         SG transformer components&$r_{\mathrm{t}}$, $l_{\mathrm{t}}$&0.01, 0.1 pu\\\hline
         GFM base power&$S_1$, $S_2$, $S_3$&100, 100, 100 MVA\\
         GFM steady-state voltages&$\hspace{-1.5mm}|v|_\mathrm{0,1}$, $|v|_\mathrm{0,2}$, $|v|_\mathrm{0,3}\hspace{-1.5mm}$&1, 1, 1 pu\\
         GFM active power setpoints&$p_\mathrm{0,1}$, $p_\mathrm{0,2}$, $p_\mathrm{0,3}$&0.9, 0.836, 0.84 pu\\
         GFM reactive power setpoints&$q_\mathrm{0,1}$, $q_\mathrm{0,2}$, $q_\mathrm{0,3}$&$\hspace{-1mm}$-0.192, -0.182, -0.237 pu$\hspace{-1mm}$\\
GFM $RLC$ filter components&$r_{\mathrm{f},i}$, $l_{\mathrm{f},i}$, $c_{\mathrm{f},i}$&0.01, 0.1, 0.1 pu\\
         \bottomrule
    \end{tabular}
    \label{tab:9bus_parameters}
    \end{table}
\renewcommand{\arraystretch}{1}\normalsize

\begin{figure}[t!]
    \centering
\begin{subfigure}{0.23\textwidth}
    \centering
    \vspace{-1mm}
\scalebox{0.48}{\includegraphics[]{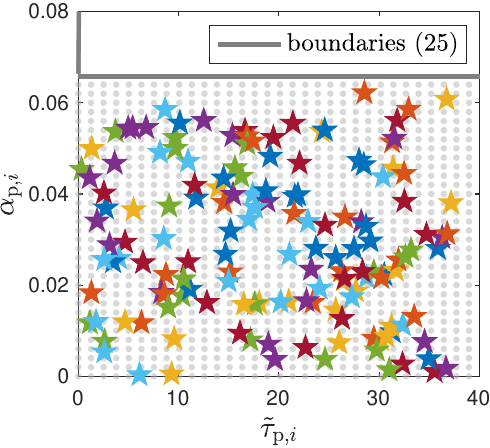}}
\vspace{-2mm}
    \caption{Stability conditions in \cref{eq:p_conditions}.}
    \vspace{-1mm}
    \label{fig:2d_p_sim_monte_carlo}
\end{subfigure}
\hspace{0.15cm}
\begin{subfigure}{0.23\textwidth}
    \centering
    \vspace{-1mm}
\scalebox{0.48}{\includegraphics[]{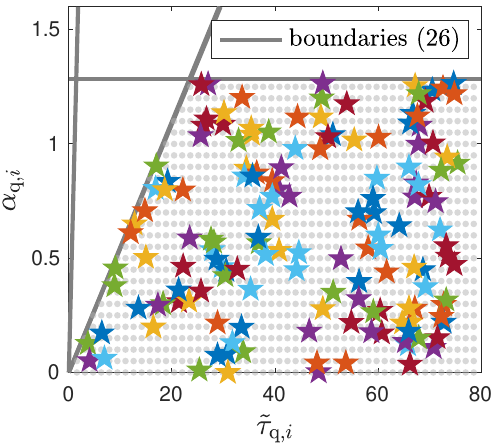}}
\vspace{-2mm}
    \caption{Stability conditions in \cref{eq:q_conditions}.}
    \vspace{-1mm}
    \label{fig:2d_q_sim_monte_carlo}
\end{subfigure}
\caption{2D stability region for the nine-bus system ($\rho=0.16$). Stars denote GFM control parameters in the global per unit system.}
\label{fig:control_params_monte_carlo}
\vspace{-4mm}
\end{figure}

\section{Conclusion}\label{sec:conclusion}
We proposed a decentralized analytic small-signal stability certification framework to mitigate the destabilizing effects of network dynamics on grid-forming converters. Using dynamic loop-shifting techniques and passivity theory, we derived parametric stability conditions that can serve as local tuning rules for device-level controllers, eliminating the need for centralized coordination. 

Future work includes the extension of our framework to a non-diagonal structure of the matrix blocks $\Gamma_i(s)$, thereby eventually reducing conservatism by requesting coupled device controllers. Beyond that, from an application point of view, we envision our stability framework to support the formulation of new grid codes for future power systems.

\renewcommand{\baselinestretch}{0.9}
\bibliographystyle{IEEEtran}
\fontdimen2\font=0.6ex
\bibliography{IEEEabrv,mybibliography}

\appendices
\section{Dynamic Small-Signal Network Model}\label{appendix1}
To derive the dynamic network model $N(s)$ in \cref{eq:full_network_polar,eq:full_network_polar_matrix_blocks}, we linearize \cref{eq:bus_voltage} and \cref{eq:branch_power_injections} around the equilibrium $v_{\mathrm{d0},i}$, $v_{\mathrm{q0},i}$, $i_{\mathrm{d0},i}$, $i_{\mathrm{q0},i}$ and transform them into the frequency domain, i.e.,
\begin{align}\label{eq:linearized_voltage_power}
    \begin{split}
     \Delta |v|_i(s) &\approx \tfrac{v_{\mathrm{d0},i}}{|v|_{0,i}} \Delta v_{\mathrm{d},i}(s)+\tfrac{v_{\mathrm{q0},i}}{|v|_{0,i}} \Delta v_{\mathrm{q},i}(s)\\
    \Delta\delta_i(s) &\approx -\tfrac{v_{\mathrm{q0},i}}{|v|_{0,i}^2}\Delta v_{\mathrm{d},i}(s)+\tfrac{v_{\mathrm{d0},i}}{|v|_{0,i}^2}\Delta v_{\mathrm{q},i}(s)\\
        \Delta p_{i}(s) &\approx v_{\mathrm{d0},i} \Delta i_{\mathrm{d},i}(s) + i_{\mathrm{d0},i} \Delta v_{\mathrm{d},i}(s)\\&\quad\quad\quad+v_{\mathrm{q0},i} \Delta i_{\mathrm{q},i}(s) +i_{\mathrm{q0},i} \Delta v_{\mathrm{q},i}(s)\\
        \Delta q_{i}(s) &\approx -v_{\mathrm{d0},i} \Delta i_{\mathrm{q},i}(s) - i_{\mathrm{q0},i} \Delta v_{\mathrm{d},i}(s)\\&\quad\quad\quad+v_{\mathrm{q0},i} \Delta i_{\mathrm{d},i}(s) +i_{\mathrm{d0},i} \Delta v_{\mathrm{q},i}(s)
           \end{split}
\end{align}
By using \cref{eq:reduced_dynamics_IV}, and inserting the steady-state expressions
\begin{align}
\begin{split}
    \hspace{-0.15cm}i_{\mathrm{d}0,i} \hspace{-0.05cm}&= \hspace{-0.05cm}\textstyle\sum_{i\neq j}^nb_{ij}\tfrac{1}{1+\rho^2}\hspace{-0.05cm}\left[\rho(v_{\mathrm{d0},i}\hspace{-0.05cm}-\hspace{-0.05cm}v_{\mathrm{d0},j}) \hspace{-0.05cm}+ \hspace{-0.05cm}(v_{\mathrm{q0},i}\hspace{-0.05cm}-\hspace{-0.05cm}v_{\mathrm{q0},j})\right]\hspace{-0.05cm}\\
     \hspace{-0.15cm}i_{\mathrm{q}0,i} \hspace{-0.05cm}&= \hspace{-0.05cm}\textstyle\sum_{i\neq j}^nb_{ij}\tfrac{1}{1+\rho^2}\hspace{-0.05cm}\left[-(v_{\mathrm{d0},i}\hspace{-0.05cm}-\hspace{-0.05cm}v_{\mathrm{d0},j}) \hspace{-0.05cm}+ \hspace{-0.05cm}\rho(v_{\mathrm{q0},i}\hspace{-0.05cm}-\hspace{-0.05cm}v_{\mathrm{q0},j})\right]\hspace{-0.05cm}
     \end{split}
\end{align}
and the steady-state bus voltages
\begin{align}
\begin{split}
v_{\mathrm{d0},i} &= |v|_{0,i}\cos{\delta_{0,i}}, \quad v_{\mathrm{d0},j} = |v|_{0,j}\cos{\delta_{0,j}} \\ 
v_{\mathrm{q0},i} &= |v|_{0,i}\sin{\delta_{0,i}}, \quad v_{\mathrm{q0},j} = |v|_{0,j}\sin{\delta_{0,j}}
\end{split}
\end{align}
into \cref{eq:linearized_voltage_power}, we can derive the small-signal dynamics $N(s)$ of the power network in polar coordinates as in \cref{eq:full_network_polar,eq:full_network_polar_matrix_blocks}.

\section{Dynamic Small-Signal Converter Model}\label{appendix2}
To derive the transfer matrix $D_i(s)$, we start by considering the small-signal dynamics of the filter's equations
    \begin{align}\label{eq:filter_equations1}
        \begin{split}
            \Delta v_{\mathrm{cd},i}(s) &= 
            \tfrac{l_{\mathrm{f},i}}{\omega_0}s\Delta i_{\mathrm{cd},i}(s)-l_{\mathrm{f},i}\Delta i_{\mathrm{cq},i}(s)+\Delta v_{\mathrm{d},i}(s)\\
            \Delta v_{\mathrm{cq},i}(s) &= l_{\mathrm{f},i}\Delta i_{\mathrm{cd},i}(s)+\tfrac{l_{\mathrm{f},i}}{\omega_0}s\Delta i_{\mathrm{cq},i}(s)+\Delta v_{\mathrm{q},i}(s),
        \end{split}
    \end{align}
    \begin{align}\label{eq:filter_equations2}
        \begin{split}
            \Delta i_{\mathrm{cd},i}(s) &= 
            \tfrac{c_{\mathrm{f},i}}{\omega_0}s\Delta v_{\mathrm{d},i}(s)-c_{\mathrm{f},i}\Delta v_{\mathrm{q},i}(s)+\Delta i_{\mathrm{d},i}(s)\\
            \Delta i_{\mathrm{cq},i}(s) &= 
            \tfrac{c_{\mathrm{f},i}}{\omega_0}s\Delta v_{\mathrm{q},i}(s)+c_{\mathrm{f},i}\Delta v_{\mathrm{d},i}(s)+\Delta i_{\mathrm{q},i}(s),
        \end{split}
    \end{align}
where the converter's local dq frame in SI units is given by the active power-frequency droop control with small-signal dynamics
\begin{align}\label{eq:pf_droop}
   \Delta \delta_i(s) = \tfrac{\Delta \omega_i(s)}{s} =- \tfrac{1}{s}\tfrac{d_{\mathrm{p},i}}{\tau_{\mathrm{p},i}s+1}\Delta p_i(s),
\end{align}
where $d_{\mathrm{p},i}\in\mathbb{R}$ is the active power droop gain and $\tau_{\mathrm{p},i}\in\mathbb{R}$ the low-pass filter time constant. The small-signal dynamic equations of the current control loop are given by
\begin{align}\label{eq:cc_loop}
\begin{split}
    \Delta v_{\mathrm{cd},i}^{\star}(s)&=\text{PI}_{\mathrm{cc},i}(s)(\Delta i_{\mathrm{cd},i}^\star(s)-\Delta i_{\mathrm{cd},i}(s))\\
    &\quad\quad\quad\quad\quad\quad\,\,\,\,\,+\Delta v_{\mathrm{d},i}(s)-l_{\mathrm{f},i}\Delta i_{\mathrm{cq},i}(s)\\
      \Delta v_{\mathrm{cq},i}^\star(s)&=\text{PI}_{\mathrm{cc},i}(s)(\Delta i_{\mathrm{cq},i}^\star(s)-\Delta i_{\mathrm{cq},i}(s))\\
    &\quad\quad\quad\quad\quad\quad\,\,\,\,\,+\Delta v_{\mathrm{q},i}(s)+l_{\mathrm{f},i}\Delta i_{\mathrm{cd},i}(s),
\end{split}
\end{align}
where $\text{PI}_{\mathrm{cc},i}(s)$ is the transfer function of the PI regulator. The current reference $\Delta i_{\mathrm{dq},i}^\star(s)$ in \cref{eq:cc_loop} comes from the voltage control loop with small-signal dynamics
\begin{align}\label{eq:vc_loop}
\begin{split}
    \Delta i_{\mathrm{cd},i}^\star(s)&=\text{PI}_{\mathrm{vc},i}(s)(\Delta v_{\mathrm{d},i}^\star(s)-\Delta v_{\mathrm{d},i}(s))\\
    &\quad\quad\quad\quad\quad\quad\,\,\,\,\,+\Delta i_{\mathrm{d},i}(s)-c_{\mathrm{f},i}\Delta v_{\mathrm{q},i}(s) \\
    \Delta i_{\mathrm{cq},i}^\star(s)&=\text{PI}_{\mathrm{vc},i}(s)(\Delta v_{\mathrm{q},i}^\star(s)-\Delta v_{\mathrm{q},i}(s))\\
    &\quad\quad\quad\quad\quad\quad\,\,\,\,\,+\Delta i_{\mathrm{q},i}(s)+c_{\mathrm{f},i}\Delta v_{\mathrm{d},i}(s).
\end{split}
\end{align}
The voltage reference in \cref{eq:vc_loop} is given by the reactive power-voltage droop control with the small-signal dynamics
\begin{align}\label{eq:qv_droop}
    \begin{split}
        \Delta v_{\mathrm{d},i}^\star(s)&=-\tfrac{d_{\mathrm{q},i}}{\tau_{\mathrm{q},i}s+1}\Delta q_i(s),\quad\quad\Delta v_{\mathrm{q},i}^\star(s)=0,
    \end{split}
\end{align}
where $d_{\mathrm{q},i}\in\mathbb{R}$ is the reactive power droop gain and $\tau_{\mathrm{q},i}\in\mathbb{R}$ the low-pass filter time constant. Finally, we insert the expressions in \cref{eq:filter_equations1,eq:filter_equations2,eq:pf_droop,eq:cc_loop,eq:vc_loop,eq:qv_droop} into the small-signal power injections
\begin{align}\label{eq:conv_power_injections}
\begin{split}
   \hspace{-1.2mm} \Delta p_{i}(s) &\approx v_{\mathrm{d0},i} \Delta i_{\mathrm{d},i}(s)\hspace{-0.5mm} + \hspace{-0.5mm}i_{\mathrm{d0},i} \Delta v_{\mathrm{d},i}(s)\hspace{-0.5mm}+\hspace{-0.5mm}i_{\mathrm{q0},i} \Delta v_{\mathrm{q},i}(s)\hspace{-1mm}\\
       \hspace{-1.2mm} \Delta q_{i}(s) &\approx -v_{\mathrm{d0},i} \Delta i_{\mathrm{q},i}(s) \hspace{-0.5mm}- \hspace{-0.5mm}i_{\mathrm{q0},i} \Delta v_{\mathrm{d},i}(s)\hspace{-0.5mm}+\hspace{-0.5mm}i_{\mathrm{d0},i} \Delta v_{\mathrm{q},i}(s),\hspace{-1mm}
\end{split}
\end{align}
linearized around the equilibrium $v_{\mathrm{d0},i}$, $v_{\mathrm{q0},i}=0$, $i_{\mathrm{d0},i}$, $i_{\mathrm{q0},i}$, such that the transfer matrix $D_i(s)$ in \cref{eq:local_conv_model} can be obtained as 
\begin{align}\label{eq:full_VSC_dynamics}
    D_i(s) = \begin{bmatrix}
        D_{{11},i}(s)&D_{{12},i}(s)\\D_{{21},i}(s)&D_{{22},i}(s) 
    \end{bmatrix},
\end{align}
with the matrix elements are given as
\begin{align}\label{eq:full_VSC_dynamics_elements}
    \begin{split}
        \hspace{-1.5mm}D_{11}(s)&=\tfrac{d_{\mathrm{p},i}}{\tau_{\mathrm{p},i}s+1}\\
        \hspace{-1.5mm}D_{12}(s)&=0\\
        \hspace{-1.5mm}D_{21}(s)&\hspace{-0.5mm}=\hspace{-0.5mm}\tfrac{1-G_{\mathrm{cc},i}(s)}{(G_{\mathrm{cc},i}(s)-1)(i_{\mathrm{d0},i}+i_{\mathrm{q0},i}-v_{\mathrm{d0},i}c_{\mathrm{f},i})+v_{\mathrm{d0},i}G_{\mathrm{cc},i}(s)\text{PI}_{\mathrm{vc},i}(s)+\tfrac{c_{\mathrm{f},i}s}{\omega_0}}\\
        \hspace{-1.5mm}D_{22}(s)&=\hspace{-0.5mm}\tfrac{v_{\mathrm{d0},i}G_{\mathrm{cc},i}(s)\text{PI}_{\mathrm{vc},i}(s)\tfrac{d_{\mathrm{q},i}}{\tau_{\mathrm{q},i}s+1}}{(G_{\mathrm{cc},i}(s)-1)(i_{\mathrm{d0},i}+i_{\mathrm{q0},i}-v_{\mathrm{d0},i}c_{\mathrm{f},i})+v_{\mathrm{d0},i}G_{\mathrm{cc},i}(s)\text{PI}_{\mathrm{vc},i}(s)+\tfrac{c_{\mathrm{f},i}s}{\omega_0}}\\
        &\quad\quad\quad\quad\quad-\tfrac{(1-G_{\mathrm{cc},i}(s))^2(i_{\mathrm{q0},i}+v_{\mathrm{d0},i}c_{\mathrm{f},i})}{(1-G_{\mathrm{cc},i}(s))i_{\mathrm{d0},i}+v_{\mathrm{d0},i}G_{\mathrm{cc},i}(s)\text{PI}_{\mathrm{vc},i}(s)+\tfrac{c_{\mathrm{f},i}s}{\omega_0}}
    \end{split}
\end{align}
where we have used the linearized expression for the voltage magnitude deviation $\Delta |v|_i(s)\approx\Delta v_{\mathrm{d},i}(s)$, and 
\begin{align}
    G_{\mathrm{cc},i}(s)=\tfrac{\text{PI}_{\mathrm{cc},i}(s)}{\tfrac{sl_{\mathrm{f},i}}{\omega_0}+\text{PI}_{\mathrm{cc},i}(s)}.
\end{align}

We assume that the timescales of the inner current and voltage control loops in \cref{eq:cc_loop,eq:vc_loop} are faster than the outer droop controls in \cref{eq:pf_droop,eq:qv_droop} and the network dynamics in \cref{eq:full_network_polar_matrix_blocks_level0}\cite{gross2022compensating}. We can thus neglect the inner VSC dynamics and thus approximate $\text{PI}_{\mathrm{cc},i}(s) \rightarrow \infty$ and $\text{PI}_{\mathrm{vc},i}(s) \rightarrow \infty$ for small $s$, such that \cref{eq:full_VSC_dynamics_elements} can be reduced as in \cref{eq:converter_model_final}.

\section{Proof of \cref{corr:special_cases}}\label{appendix3}
\subsubsection*{Quasistationary Network Model} For the quasistationary network model \cref{eq:full_network_polar_matrix_blocks_level1} in \cref{def:level1} ($s=0$ and $|v|_{0,i}\ne|v|_{0,j}$) we consider the same coordinate transformation as in \cref{eq:N_D}, followed by the loop-shifting in \cref{eq:N_dash_D_dash}. Since $s=0$ in \cref{eq:full_network_polar_matrix_blocks_level1}, we use a quasistationary version of the $\Gamma_i(s)$ with the diagonal elements 
\begin{align}
\begin{split}
   \Gamma_i^\mathrm{p}(s) = 0, \quad\quad
    \Gamma_i^\mathrm{q}(s) = \tfrac{1}{s}\tilde{\gamma}_{3,i}^\mathrm{q},
\end{split}
\end{align}
where $\tilde{\gamma}_{3,i}^\mathrm{q}\geq \tfrac{0.8}{1+\rho^2}\textstyle\sum_{j\ne i}^n b_{ij}$. With this choice, $\mathcal{N}'(s)$ is passive, i.e., it satisfies the conditions (i) to (iii) in \cref{def:passivity}:

\textit{(i) Poles:} $\mathcal{N}'(s)$ has one pole at $p=\mathrm{j}0$, i.e., $\text{Re}(p)\leq0$.

\textit{(ii) Positive semi-definiteness:} We can show that the Hermitian matrix $\mathcal{S}_{\mathcal{N}'}(\mathrm{j}\omega)\coloneqq \mathcal{N}'(\mathrm{j}\omega)+\mathcal{N}'^\star(\mathrm{j}\omega)$ is zero, i.e.,
\begin{align}\label{eq:N_0_N_0_pos_semi_def_level1}
    \begin{split}
\hspace{-1mm}\mathcal{S}_{\mathcal{N}'}(\mathrm{j}\omega)=\hspace{-1mm}\begin{bmatrix}
    \mathcal{S}_{\mathcal{N}',11}(\mathrm{j}\omega)&\dots&\mathcal{S}_{\mathcal{N}',1n}(\mathrm{j}\omega)\\\vdots&\ddots&\vdots\\\mathcal{S}_{\mathcal{N}',n1}(\mathrm{j}\omega)&\dots&\mathcal{S}_{\mathcal{N}',nn}(\mathrm{j}\omega)
\end{bmatrix} \hspace{-1mm}= 0_{2n\times 2n}
        \end{split}
    \end{align}
and therefore positive semi-definite.

\textit{(iii) Imaginary Poles:} $\mathcal{N}'(s)$ has one imaginary pole $p_1=\mathrm{j}0$, which is a simple pole. We compute $\mathcal{R}_{\mathrm{j}0}^{\mathcal{N}'}\coloneqq\text{lim}_{s\rightarrow \mathrm{j}0}(s-\mathrm{j}0)\mathcal{N}'(s)$, where each $\mathcal{R}_{\mathrm{j}0,ij}^{\mathcal{N}'}$ represents a $2\times2$ transfer matrix block. The diagonal and off-diagonal elements are given by
\begin{subequations}
\begin{align}\label{eq:residue_N_level1}
    \hspace{-2mm}\mathcal{R}_{\mathrm{j}0,ii}^{\mathcal{N}'} &=\textstyle\sum_{j\ne i}^nb_{ij}\begin{bmatrix}
           \tfrac{|v|_{0,i}|v|_{0,j}}{1+\rho^2}&0\\0&\tfrac{2|v|_{0,i}^2-|v|_{0,i}|v|_{0,j}+0.8}{1+\rho^2}
        \end{bmatrix}\\
\hspace{-2mm}\mathcal{R}_{\mathrm{j}0,ij}^{\mathcal{N}'} &=-b_{ij}\tfrac{|v|_{0,i}|v|_{0,j}}{1+\rho^2}\begin{bmatrix}
           1&0\\0&1
        \end{bmatrix},
\end{align}
\end{subequations}
i.e., $\mathcal{R}_{\mathrm{j}0}^{\mathcal{N}'}$ is a Hermitian diagonally dominant matrix with real non-negative diagonal entries. For the odd rows, we get
\begin{align*}
    |\textstyle\sum_{j\ne i}^nb_{ij}\tfrac{|v|_{0,i}|v|_{0,j}}{1+\rho^2}| \geq \textstyle\sum_{j\ne i}^n |-b_{ij}\tfrac{|v|_{0,i}|v|_{0,j}}{1+\rho^2}|.
\end{align*}
For the even rows we get (with $|v|_\mathrm{max}\hspace{-0.5mm}=\hspace{-0.5mm}1.1$ and $|v|_\mathrm{min}\hspace{-0.5mm}=\hspace{-0.5mm}0.9$):
\begin{align*}
    \begin{split}
        \hspace{-2mm}|\hspace{-0.5mm}\textstyle \sum_{j\ne i}^n\hspace{-0.6mm}\tfrac{(2|v|_{0,i}^2-|v|_{0,i}|v|_{0,j})b_{ij}}{1+\rho^2}+\tilde{\gamma}_{3,i}^\mathrm{q}|&\hspace{-0.6mm}\geq\hspace{-0.6mm} \textstyle\sum_{j\ne i}^n\hspace{-0.3mm}|\hspace{-0.3mm}-b_{ij}\tfrac{|v|_{0,i}|v|_{0,j}}{1+\rho^2}|\hspace{-0.7mm}\\
        \Leftrightarrow\textstyle \sum_{j\ne i}^n\hspace{-0.6mm}b_{ij}\tfrac{2|v|_{\mathrm{min}}^2-|v|_{\mathrm{max}}^2}{1+\rho^2}+\tilde{\gamma}_{3,i}^\mathrm{q}&\hspace{-0.6mm}\geq\hspace{-0.6mm} \textstyle\sum_{j\ne i}^n\hspace{-0.3mm}\hspace{-0.3mm}b_{ij}\tfrac{|v|_{\mathrm{max}^2}}{1+\rho^2}.\hspace{-0.7mm}
        \end{split}
\end{align*}
By \cref{lemma:gershgorin}, we conclude $\mathcal{R}_{\mathrm{j}0}^{\mathcal{N}'}\succeq 0$.

Next, we derive conditions under which $\mathcal{D}'(s)$ is strictly passive by verifying conditions (i) and (ii) in \cref{def:strict_passivity}. While doing so, we select $\tfrac{d_{\mathrm{q},i}}{|v|_{0,i}}=\tfrac{1}{\tilde{\gamma}_{3,i}^\mathrm{q}}$ to cancel the zero at the origin, and obtain the $2\times 2$ matrix elements of the form
\begin{align}
    \mathcal{D}_i'(s)=\begin{bmatrix}
        \tfrac{d_{\mathrm{p},i}}{\tau_{\mathrm{p},i}s+1}&0\\0&\tfrac{d_{\mathrm{q},i}}{\tau_{\mathrm{q},i}|v|_{0,i}}
    \end{bmatrix},
\end{align}
where we require
\begin{align}
    \alpha_{\mathrm{q},i}:= \tfrac{d_{\mathrm{q},i}}{|v|_{0,i}} \textstyle \sum_{j\ne i}^n b_{ij}< \tfrac{5(1+\rho^2)}{4} = c_{5,\rho},\quad \tau_{\mathrm{q},i} > 0. 
\end{align}

\textit{(i) Poles:}  The poles of all elements of $\mathcal{D}'(s)$ are in $\text{Re}(s)<0$.

\textit{(ii) Positive definiteness:} We compute 
\begin{align}
\mathcal{D}_i'(\mathrm{j}\omega)+\mathcal{D}_i'^\star(\mathrm{j}\omega)=\begin{bmatrix}
        \tfrac{2d_{\mathrm{p},i}}{1+\omega^2\tau_{\mathrm{p},i}^2} & 0 \\ 0 & 
         \tfrac{2d_{\mathrm{q},i}}{\tau_{\mathrm{q,i}}|v|_{0,i}}\end{bmatrix}\succ 0,
\end{align}
which holds $\forall \omega\in(-\infty,\infty)$. Therefore, since $\mathcal{N}'(s)$ is passive and $\mathcal{D}'(s)$ is strictly passive, and additionally $\bar{\sigma}(\mathcal{N}'(\mathrm{j}\infty))\bar{\sigma}(\mathcal{D}'(\mathrm{j}\infty))<1$ (because $\mathcal{N}'(\mathrm{j}\infty)=0_{2n\times 2n}$), we obtain internal feedback stability of $\mathcal{D}'\#\mathcal{N}'$ by \cref{thm:stabilty_passivity}. Finally, we conclude internal feedback stability of $\mathcal{D}_0\#\mathcal{N}_0$ by following the same arguments as in \cref{sec:proof_main_thm}-IV for $N(s)$ as in \cref{eq:full_network_polar_matrix_blocks_level1}.

\subsubsection*{Zero-Power Flow Network Model} 
For the zero-power flow network model \cref{eq:full_network_polar_matrix_blocks_level2} in \cref{def:level2} ($s=0$ and $|v|_{0,i}=|v|_{0,j}=|v|_{0}$), we can directly apply \cref{thm:stabilty_passivity} by showing passivity of $\mathcal{N}_0(s)$ and deriving conditions for $\mathcal{D}_0(s)$ to be strictly passive. In particular, $\mathcal{N}_0(s)$ is passive, i.e., it satisfies the conditions (i) to (iii) in \cref{def:passivity}:

\textit{(i) Poles:} $\mathcal{N}_0(s)$ has one pole at $p=\mathrm{j}0$, i.e., $\text{Re}(p)\leq0$.

\textit{(ii) Positive semi-definiteness:} We can express the Hermitian matrix $\mathcal{S}_{\mathcal{N}_0}(\mathrm{j}\omega)\coloneqq \mathcal{N}_0(\mathrm{j}\omega)+\mathcal{N}_0^\star(\mathrm{j}\omega)$ as
\begin{align}\label{eq:N_0_N_0_pos_semi_def_level2}
    \begin{split}
\mathcal{S}_{\mathcal{N}_0}(\mathrm{j}\omega)=\begin{bmatrix}
    \mathcal{S}_{\mathcal{N}_0,11}(\mathrm{j}\omega)&\dots&\mathcal{S}_{\mathcal{N}_0,1n}(\mathrm{j}\omega)\\\vdots&\ddots&\vdots\\\mathcal{S}_{\mathcal{N}_0,n1}(\mathrm{j}\omega)&\dots&\mathcal{S}_{\mathcal{N}_0,nn}(\mathrm{j}\omega)
\end{bmatrix},
        \end{split}
    \end{align}
    where each $\mathcal{S}_{\mathcal{N}_0,ij}$ represents a $2\times 2$ transfer matrix block. The diagonal and off-diagonal elements are given by
    \begin{align*}
        \begin{split}
            \mathcal{S}_{\mathcal{N}_0,ii}=\textstyle\sum_{j\ne i}^nb_{ij}\tfrac{|v|_0}{1+\rho^2}\begin{bmatrix}
                0&0\\0&2
            \end{bmatrix},\,\,\,
            \mathcal{S}_{\mathcal{N}_0,ij}=b_{ij}\tfrac{|v|_0}{1+\rho^2}\begin{bmatrix}
                0&0\\0&-2
            \end{bmatrix},
        \end{split}
    \end{align*}
which is a Laplacian matrix, i.e., $\mathcal{N}_0(\mathrm{j}\omega)+\mathcal{N}_0^\star(\mathrm{j}\omega)\succeq0$.

\textit{(iii) Imaginary poles:} $\mathcal{N}_0(s)$ has one imaginary pole, i.e., $p=\mathrm{j}0$, which is a simple pole. We therefore compute the limit $\mathcal{R}_{\mathrm{j0}}^{\mathcal{N}_0}\coloneqq \text{lim}_{s\rightarrow\mathrm{j}0}(s-\mathrm{j}0)\mathcal{N}_0(s)$, where each $\mathcal{R}_{\mathrm{j0},ij}^{\mathcal{N}_0}$ represents a $2\times 2$ transfer matrix block. The diagonal and off-diagonal elements are given by 
\begin{align*}
    \begin{split}
        \mathcal{R}_{\mathrm{j0},ii}^{\mathcal{N}_0} = \textstyle\sum_{j\ne i}^n b_{ij}\tfrac{|v|_0^2}{1+\rho^2}\begin{bmatrix}
            1&0\\0&0
        \end{bmatrix},\quad
         \mathcal{R}_{\mathrm{j0},ij}^{\mathcal{N}_0} = b_{ij}\tfrac{|v|_0^2}{1+\rho^2}\begin{bmatrix}
            -1&0\\0&0
        \end{bmatrix},
    \end{split}
\end{align*}
which is a Laplacian matrix, i.e., $\mathcal{R}_{\mathrm{j0}}^{\mathcal{N}_0}\succeq 0$.

Next, we derive conditions under which $\mathcal{D}_0(s)$ is strictly passive by verifying conditions (i) and (ii) in \cref{def:strict_passivity}.

\textit{(i) Poles:} The poles of all elements of $\mathcal{D}_0(s)$ are in $\text{Re}(s) <0$.

\textit{(ii) Positive-definiteness:} We compute 
\begin{align}
&\mathcal{D}_{0,i}(\mathrm{j}\omega)+\mathcal{D}_{0,i}^\star(\mathrm{j}\omega)=\begin{bmatrix}
        \tfrac{2d_{\mathrm{p},i}}{1+\omega^2\tau_{\mathrm{p},i}^2} & 0 \\ 0 & 
        \tfrac{2d_{\mathrm{q},i}}{1+\omega^2\tau_{\mathrm{q},i}^2} 
    \end{bmatrix}\hspace{-1mm}\succ 0,
\end{align}
which holds $\forall \omega\in(-\infty,\infty)$. Therefore, since $\mathcal{N}_0(s)$ is passive and $\mathcal{D}_0(s)$ is strictly passive, and additionally $\bar{\sigma}(\mathcal{N}_0(\mathrm{j}\infty))\bar{\sigma}(\mathcal{D}_0(\mathrm{j}\infty))<1$ (because $\mathcal{D}_0(\mathrm{j}\infty)=0_{2n\times 2n}$), we obtain internal feedback stability of $\mathcal{D}'\#\mathcal{N}'$ by \cref{thm:stabilty_passivity}.

In total, we conclude stability of $\mathcal{D}_0\#\mathcal{N}_0$ for all tunable local droop control parameters $d_{\mathrm{p},i}, d_{\mathrm{q},i}\in\mathbb{R}_{>0}$ and $\tau_{\mathrm{p},i}, \tau_{\mathrm{q},i}\in\mathbb{R}_{\geq 0}$.

\end{document}